\documentclass{LMCS}
\usepackage{gastex}
\usepackage{amsmath,amsfonts,amssymb}
\usepackage{url}
\usepackage{algorithm,algorithmic}
\usepackage{pstricks,color,graphicx}
\usepackage{enumerate,hyperref}
\overfullrule=2 pt

\newcommand{\set}[1]{\{ #1 \}}

\newcommand{\pair}[2]{\langle #1,#2 \rangle}
\newcommand{\triple}[3]{\langle #1,#2,#3 \rangle}

\newcommand{\Nat}{\ensuremath{\mathbb{N}}}

\newcommand{\Rea}{\ensuremath{\mathbb{R}}}

\newcommand{\powerset}[1]{\mathcal{P}({#1})}

\newcommand{\card}[1]{{\rm card}(#1)}
\newcommand{\amap}{f}

\newcommand{\aset}{X}
\newcommand{\asetbis}{Y}
\newcommand{\asetter}{Z}
\newcommand {\length}[1] {\ensuremath{|#1|}}


\newcommand{\avarprop}{p}

\newcommand{\varprop}{\Prop} 
\newcommand{\aformula}{\phi} 
\newcommand{\aformulabis}{\psi} 
\newcommand{\aformulater}{\varphi} 

\renewcommand{\next}{{\sf X}}
\newcommand{\previous}{{\sf X}^{-1}}
\newcommand{\until}{{\sf U}}
\newcommand{\anoperator}{{\sf O}}
\newcommand{\sometimes}{{\sf F}}
\newcommand{\strictsometimes}{{\mathsf F}^{\mathsf +}}

\newcommand{\always}{{\mathsf G}}
\newcommand{\strictalways}{{\mathsf G}^{\mathsf +}}

 \newcommand{\since}{{\sf S}}

\newcommand{\ltl}{{\rm LTL}}



\newcommand{\subf}[1]{sub(#1)} 


\newcommand{\amodel}{\sigma}


\newcommand{\aletter}{a}

\newcommand{\aalphabet}{\Sigma}
\newcommand{\lang}[1]{{\rm L}(#1)}
\newcommand{\alang}{{\rm L}}

\newcommand{\aautomaton}{{\mathcal A}}

\newcommand{\egdef}{\stackrel{\mbox{\begin{tiny}def\end{tiny}}}{=}} 
\newcommand{\equivdef}{\stackrel{\mbox{\begin{tiny}def\end{tiny}}}{\equivaut}} 
\newcommand{\equivaut}{\;\Leftrightarrow\;}

\newcommand {\pspace} {\textsc{pspace}}
\newcommand {\nlogspace} {\textsc{nlogspace}}
\newcommand {\expspace} {\textsc{expspace}}
\newcommand {\twoexpspace} {\textsc{2expspace}}
\newcommand {\np} {\textsc{np}}
\newcommand {\exptime} {\textsc{exptime}}



\newcommand{\mainlogic}{\ltl(\until,\since)}

\newcommand{\Prop}{{\rm PROP}}

\newcommand{\aordinal}{\alpha}
\newcommand{\aordinalbis}{\beta}
\newcommand{\aordinalter}{\gamma}

\newcommand{\arun}{r}
\newcommand{\alocation}{q}
\newcommand{\locations}{Q}

\newcommand{\AL}{\abasis_{lim}}

\newcommand{\ALL}{{\rm all}}
\newcommand{\INFO}{{\rm inf}}

\newcommand{\abasis}{B}

\newcommand{\anelement}{a}

\newcommand{\weight}[1]{{\rm weight}(#1)}

\newcommand{\truncation}[1]{{\rm trunc}_{#1}}

\newcommand{\plusconstant}{2} 

\newcommand{\abstraction}[1]{{\rm abs}(#1)} 

\newcommand{\SAT}{{\rm SAT}}

\newcommand{\defordinal}[1]{{\rm def}_{#1}}

\newcommand{\defstyle}[1]{{\em #1}}


\newif\iflong
\longtrue

\newif\iflpar
\lparfalse


\makeatletter
\let\c@definition\c@theorem
\let\c@lemma\c@theorem
\let\c@corollary\c@theorem
\let\c@proposition\c@theorem
\makeatother

\def\doi{6 (4:9) 2010}
\lmcsheading%
{\doi}
{1--25}
{}
{}
{Feb.~23, 2010}
{Dec.~21, 2010}
{}

\begin{document}

\title[Complexity of LTL over ordinals]{The complexity of 
linear-time temporal logic \\ over the class of ordinals\rsuper*}

\author[S.~Demri]{St\'ephane Demri\rsuper a}	
\address{{\lsuper a}LSV, ENS Cachan, CNRS, INRIA Saclay IdF \\
   \iflong 61, av. Pdt. Wilson, 94235 Cachan Cedex, France}	
\email{demri@lsv.ens-cachan.fr}  
\thanks{{\lsuper a}Partially supported by  project AutoMathA (ESF)}	

\author[A.~Rabinovich]{Alexander Rabinovich\rsuper b}	
\address{{\lsuper b}School of Computer Science \\ 
Tel Aviv University, Ramat Aviv \\
Tel Aviv 69978, Israel}	
\email{rabinoa@post.tau.ac.il}  
\thanks{{\lsuper b}Partially supported by an invited professorship 
from ENS de Cachan.}	



\keywords{linear-time temporal logic, ordinal, polynomial space, automaton}
\subjclass{F.4.1, F.3.1., F.2.2}
\titlecomment{{\lsuper*}The extended abstract of this paper appeared in~\cite{Demri&Rabinovich07}.}



\begin{abstract}
We consider the temporal logic with since and until modalities.
This temporal  logic  is expressively equivalent over the class
of ordinals to first-order logic by  Kamp's theorem.
We show that it has a \pspace-complete satisfiability problem over the 
class of ordinals. Among the consequences of our proof, we show that 
given the code of some countable ordinal $\aordinal$ and a formula, we can 
decide in \pspace \ whether the formula has a model over $\aordinal$.
In order to show these results, we introduce a  class of simple ordinal
automata, as expressive as B\"uchi 
ordinal automata. The  \pspace \ upper bound 
for the satisfiability problem of the temporal logic is
obtained through a reduction to the nonemptiness problem for the simple ordinal
automata.
\end{abstract}

\maketitle

\section*{Introduction}
\label{section-introduction}

\noindent The main models for time are
$\pair{\Nat}{<}$, the natural
numbers as a model of {\em discrete time} and
the structure $\pair{\Rea}{<}$, the  real line as the model for {\em continuous
time}. These two models are called the {\em canonical models of
time}. A major result concerning linear-time temporal logics is Kamp theorem
\cite{Kamp68,Gabbay&Hodkinson&Reynolds94} which  says that $\mainlogic$, the temporal
logic having ``\emph{Until}'' and ``\emph{Since}'' as only
modalities,
 is expressively
complete for   first-order monadic logic of order
over
the class of
 Dedekind-complete  linear orders.
The canonical models of time
 are indeed Dedekind-complete.
 Another important class of Dedekind-complete orders is the class of  ordinals. 

\noindent In this paper, the satisfiability problem for  the temporal
logic with until and since modalities over the class of ordinals is
investigated. This is the opportunity to generalize what is known about the logic
over $\omega$-sequences. \iflong Our main results are the following.
\begin{enumerate}[(1)]
\item The satisfiability problem for $\mainlogic$ over  the class of
ordinals is \pspace-complete.
%
\item A formula $\aformula$ in $\mainlogic$  has some $\aordinal$-model for some  ordinal
      $\aordinal$
      iff it has an $\aordinalbis$-model for some $\aordinalbis <\omega^{\length{\aformula} + 2}$ where 
$\length{\aformula}$ denotes the size of $\aformula$ for some 
reasonably succinct encoding (see forthcoming 
Corollary~\ref{corollary-small-model}).
\end{enumerate}
\else Our main results are the following: the satisfiability problem
for $\mainlogic$ over  the class of ordinals is \pspace-complete and
a formula $\aformula$ in $\mainlogic$  has some $\aordinal$-model
for some ordinal $\aordinal$
      iff it has an $\aordinalbis$-model for some 
$\aordinalbis <\omega^{\length{\aformula} + 2}$ where $\length{\aformula}$ is the size
of $\aformula$.
\fi

\medskip\noindent In order to prove these results we use an automata-based
approach~\cite{Buchi62,Vardi&Wolper94}. In Section~\ref{section-translation}, we
introduce a new class of ordinal automata  which we call 
\defstyle{simple
ordinal automata}. These automata  are expressive equivalent to
B\"uchi automata over countable ordinals~\cite{Buchi&Siefkes73}. However, the locations and the
transition relations of these automata have additional structures as in~\cite{Rohde97}.
In particular, a location is a subset of a base set $\abasis$.
Herein, we provide a translation from formulae in $\mainlogic$ into
simple ordinal automata that allows to characterize the complexity
of the satisfiability problem for $\mainlogic$.
However, the translation of the formula $\aformula$ into the automaton $\aautomaton_{\aformula}$ provides
an automaton of exponential size in $\length{\aformula}$ but 
the cardinal of the basis of $\aautomaton_{\aformula}$
is linear in $\length{\aformula}$.

\noindent
Section~\ref{section-mainproperty} contains our main technical lemmas. We show
there that every run in a simple ordinal  automaton is equivalent
to a short run.
Consequently, we establish that
a formula $\aformula \in \mainlogic$ has an $\aordinal$-model for some countable ordinal $\aordinal$
iff it has a model of length $\truncation{\length{\aformula}+ \plusconstant}(\aordinal)$ where
$\truncation{\length{\aformula}+ \plusconstant}(\aordinal)$ is a truncated part of $\aordinal$ strictly
less than $\omega^{\length{\aformula}+ \plusconstant} \times 2$ (see the definition of truncation in
Section~\ref{section-mainproperty}).
In Section~\ref{section-nonemptiness} we present two algorithms to solve the
nonemptiness problem for simple ordinal automata.
The first one runs in (simple) exponential time
and does not take advantage of the short run property.
The second algorithm runs in polynomial space  and the
 short run property plays the main role in its
design and  its correctness proof.

In Section~\ref{section-complexity} we investigate several variants
of the satisfiability problem  and show that all of them are
\pspace-complete.
Section~\ref{section-related-work} compares our results with related
works. The  satisfiability problem for $\mainlogic$  over
$\omega$-models is \pspace-complete~\cite{Sistla&Clarke85}. Reynolds
~\cite{Reynolds03,Reynolds??} proved  that the satisfiability
problem for $\mainlogic$  over the reals is \pspace-complete. The
proofs in~\cite{Reynolds03,Reynolds??} are non trivial and difficult
to grasp and it is therefore difficult to compare our proof
technique with those of~\cite{Reynolds03,Reynolds??} even though we
believe cross-fertilization would be fruitful. We provide uniform
proofs and we improve upper bounds for decision problems considered
in \cite{Cachat06,Demri&Nowak07,Rohde97}, see also~\cite{Bezem&Langholm&Walicki07}.
 We  also compare our results and techniques  with Rohde's
thesis~\cite{Rohde97}. Finally we show how our results entail most
of the  results from~\cite{Demri&Nowak07} and we solve some open
problems stated there.

\section{Linear-Time Temporal Logic with Until and Since}
\label{section-definition}

\iflong
\subsection{Basic definitions on ordinals}
Let us start smoothly by recalling 
basic definitions and properties about 
ordinals, see e.g.~\cite{Rosenstein82}
for additional material.
An \defstyle{ordinal} is a totally ordered set which is \defstyle{well ordered},
i.e.\ all its non-empty subsets have a least element. Order-isomorphic
ordinals are considered equal.
They can be more conveniently defined inductively by:
the empty set (written $0$) is an ordinal,
if $\aordinal$ is an ordinal, then 
$\aordinal \cup \{ \aordinal \}$ (written $\aordinal + 1$) is an ordinal
and, 
if $\aset$ is a set of ordinals, then $\bigcup_{\aordinal \in \aset} \aordinal$ is 
an ordinal. 
The ordering is obtained by $\aordinalbis < \aordinal$ iff $\aordinalbis \in \aordinal$. 
An ordinal $\aordinal$ is a \defstyle{successor} ordinal iff there exists an
ordinal $\aordinalbis$ such that $\aordinal = \aordinalbis + 1$.
An ordinal which is not $0$ or a successor ordinal, is a \defstyle{limit}
ordinal.
 The first limit ordinal is written $\omega$.
 Addition, multiplication and exponentiation can be defined on ordinals
 inductively: 
 $\aordinal + 0  =  \aordinal$, $\aordinal + (\aordinalbis + 1)  =  
 (\aordinal + \aordinalbis) + 1$
 and $\aordinal + \aordinalbis  =  \mbox{\it sup} \{  \aordinal + \aordinalter : \aordinalter < \aordinalbis \}$
 where $\aordinalbis$ is a limit ordinal.
 Multiplication and exponentiation are defined similarly. 
Whenever $\aordinal \leq \aordinalbis$, there is a unique ordinal $\aordinalter$
such that $\aordinal + \aordinalter = \aordinalbis$ and we write 
$\aordinalbis - \aordinal$ to denote
$\aordinalter$. 

\subsection{Temporal logic}
\fi 

The formulae of $\mainlogic$ are defined as follows:
$$
\aformula ::= \avarprop \ \mid \ 
              \neg \aformula \ \mid \ 
              \aformula_1 \wedge \aformula_2 \ \mid \
              \aformula_1 \until \aformula_2 \ \mid \
              \aformula_1 \since \aformula_2
$$
where $\avarprop \in \varprop$ for some 
\iflong countably infinite \fi set
$\varprop$ of atomic propositions.
Given a formula $\aformula$ in $\mainlogic$,  
we write $\subf{\aformula}$ to denote
the set of subformulae of $\aformula$ or their negation assuming
that $\neg \neg \aformulabis$ is identified with $\aformulabis$. 
The size of $\aformula$ is defined as the cardinality of
$\subf{\aformula}$ and therefore implicitly we encode formulae as DAGs, which
is exponentially more succinct that the representation by trees.
This feature will be helpful for defining translations that increase only polynomially
the number of subformulae but for which the tree representation might suffer an exponential
blow-up. We use the following standard abbreviations
$\always \aformula = \aformula \wedge \neg (\top \until \neg \aformula)$,
$\strictalways \aformula = \neg (\top \until \neg \aformula)$,
$\sometimes \aformula = \neg \always \neg \aformula$,
$\strictsometimes \aformula = \neg \strictalways \neg \aformula$,
$\next \aformula = \perp \until \aformula$ and $\previous \aformula = \perp \since \aformula$
 that do cause only a polynomial
increase in size.

An \defstyle{$\aordinal$-model} $\amodel$ is a 
function $\amodel: \aordinal \rightarrow \powerset{\varprop}$ for some
ordinal $\aordinal \neq 0$.
The \defstyle{satisfaction relation} 
``$\aformula$ holds in the $\aordinal$-model 
$\amodel$ at position $\aordinalbis$'' ($\aordinalbis < \aordinal$)
is defined as follows:
\begin{enumerate}[$\bullet$]
\item $\amodel, \aordinalbis \models \avarprop$ iff $\avarprop \in \amodel(\aordinalbis)$,

\item $\amodel, \aordinalbis \models \neg \aformula$ iff not $\amodel, \aordinalbis \models \aformula$,
\iflong
\item $\amodel, \aordinalbis \models \aformula_1 \wedge \aformula_2$ iff 
        $\amodel, \aordinalbis \models \aformula_1$ and $\amodel, \aordinalbis \models \aformula_2$,
\else
\ \ $\amodel, \aordinalbis \models \aformula_1 \wedge \aformula_2$ iff 
        $\amodel, \aordinalbis \models \aformula_1$ and $\amodel, \aordinalbis \models \aformula_2$,
\fi 

\item $\amodel, \aordinalbis \models \aformula_1 \until \aformula_2$ iff
      there is $\aordinalter \in (\aordinalbis,\aordinal)$ such that 
       $\amodel, \aordinalter \models \aformula_2$ and
      for every $\aordinalter' \in  (\aordinalbis,\aordinalter)$, we have
      $\amodel, \aordinalter' \models \aformula_1$,

\item $\amodel, \aordinalbis \models \aformula_1 \since \aformula_2$ iff
      there is $\aordinalter  \in [0,\aordinalbis)$ such that 
       $\amodel, \aordinalter \models \aformula_2$ and
      for every $\aordinalter' \in  (\aordinalter,\aordinalbis)$, we have
      $\amodel, \aordinalter' \models \aformula_1$.
\end{enumerate}
Observe that $\since$ and $\until$ are strict ``since'' and ``until'' modalities. 

The (initial) satisfiability problem for $\mainlogic$ consists in determining,
given a formula $\aformula$, whether there is a model $\amodel$ such that
$\amodel, 0 \models \aformula$. 
Note that $\aformula$ is satisfiable in a  model $\amodel$
iff  $\sometimes \aformula$  is
initially satisfiable in $\amodel$. Therefore, there is a polynomial-time
reduction from the satisfiability problem to the initial
satisfiability problem. From now on, we will deal only with the
initial satisfiability problem and for the sake of brevity we will
call it ``satisfiability problem''.

We recall that well orders are particular cases of Dedekind complete linear orders.
Indeed, a chain is Dedekind complete iff every non-empty bounded subset has a
least
upper bound. Kamp's theorem applies herein.

\iflong
\begin{thm} \label{theorem-preliminaries} \cite{Kamp68} 
$\mainlogic$ over the class of ordinals is as expressive as
the first-order logic.
\end{thm}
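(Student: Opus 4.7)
The plan is to prove the two inclusions separately. The easy direction, $\mainlogic \subseteq$ FO, is handled by a routine structural induction: to each $\mainlogic$ formula $\aformula$ I associate a FO formula $\aformula^*(x)$ with one free variable, interpreting atomic propositions as unary predicates, commuting with Boolean connectives, and translating $\aformula_1 \until \aformula_2$ at $x$ by $\exists y\, (x < y \wedge \aformula_2^*(y) \wedge \forall z\, (x < z < y \to \aformula_1^*(z)))$, with a symmetric clause for $\since$. A straightforward induction on $\aformula$ shows that the translation preserves the satisfaction relation over every $\aordinal$-model.

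The hard direction, FO $\subseteq \mainlogic$, is the real obstacle and admits two natural routes. The first route is to \emph{invoke the general form of Kamp's theorem for Dedekind-complete linear orders}: since every well-order is Dedekind-complete (any bounded non-empty subset has a least upper bound, namely its supremum, which exists because any non-empty set of ordinals has a least element), the class of ordinals sits inside the class for which the general theorem is already known. This reduces the statement immediately to the classical result. The second route is a direct proof, which I would carry out by proving Gabbay's \emph{separation property} for $\mainlogic$ over ordinals: every $\mainlogic(\until,\since)$ formula is equivalent to a Boolean combination of a pure future formula (in $\until$ only), a pure past formula (in $\since$ only), and a propositional formula about the present.

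Granted the separation property, I would then prove expressive completeness by induction on the quantifier rank $n$ of a FO formula $\aformulabis(x)$. The key technical step is to show that for each $n$ there are only finitely many FO-inequivalent formulas of rank $n$ with one free variable (up to equivalence over ordinals), and that each atomic block in a Hintikka-style normal form can be captured by an $\mainlogic$ formula. Existential quantification over positions strictly greater than $x$ is expressed via nested $\until$-formulas, and symmetrically for $\since$; combining the two pieces via the separation property yields a pure $\mainlogic$ formula equivalent to $\aformulabis(x)$.

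The main obstacle is precisely the separation property and the accompanying normal-form analysis: one must argue carefully that over arbitrary ordinals (including those containing limit points) every combination of past and future modalities can be pushed into a Boolean combination of temporally pure pieces, and this is where the Dedekind-completeness of ordinals is essential — at limit positions one needs that suprema exist in order to witness formulas such as ``$\aformula$ held cofinally below.'' For the purposes of this paper I would therefore simply cite the general Dedekind-complete version and remark that the class of ordinals is Dedekind-complete, deferring the full combinatorial argument to \cite{Gabbay&Hodkinson&Reynolds94}.
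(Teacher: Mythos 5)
Your proposal is correct and, in its final form, coincides with what the paper actually does: the paper gives no proof of its own but merely observes that well orders are Dedekind-complete linear orders (the set of upper bounds of a bounded non-empty subset is non-empty and hence has a least element) and then invokes the general Dedekind-complete version of Kamp's theorem from \cite{Kamp68,Gabbay&Hodkinson&Reynolds94}. The additional sketch of a direct proof via the separation property is a reasonable outline of the classical argument but is not needed here, and you rightly defer it to the cited references.
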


Moreover, satisfiability for $\mainlogic$ is known to be decidable and as stated below we can restrict
ourselves to countable models. 

\begin{thm} \label{theorem-preliminariesbis} \
\begin{enumerate}[\em(I):] 
\item \cite{Buchi&Siefkes73}
The satisfiability problem for $\mainlogic$ over the class
of countable ordinals is decidable.
\item (see e.g.~\cite[Lemma 6]{Gurevich&Shelah85}) 
A formula in $\mainlogic$ is satisfiable iff it is 
satisfiable in a model of length some
countable ordinal.
\end{enumerate}
\end{thm}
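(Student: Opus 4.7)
The plan is to handle the two parts separately, building on classical automata-theoretic and model-theoretic tools.

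For part (I), I would generalise B\"uchi's automata construction from $\omega$-sequences to countable ordinals. Introduce ordinal automata with two kinds of transitions: a successor relation $\delta_{\rm succ}$ and a limit relation $\delta_{\rm lim}$ that fires at limit positions and depends on the set of locations visited cofinally from below. For a formula $\aformula \in \mainlogic$ build $\Aphi$ by a standard Hintikka-style subset construction: its locations are the maximally consistent subsets of $\subf{\aformula}$; successor transitions encode the until/since fixpoint unfoldings (so that, e.g., $\aformula_1 \until \aformula_2$ holds now iff $\aformula_2$ holds next, or both $\aformula_1$ and $\aformula_1 \until \aformula_2$ hold next); and limit transitions enforce that eventualities scheduled cofinally are discharged at the limit. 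Then $\aformula$ is satisfiable iff $\lang{\Aphi}$ is non-empty, which can be decided by a fixed-point iteration over location subsets whose termination is guaranteed because there are only finitely many such subsets.

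For part (II), I would show that an arbitrary ordinal model can be compressed to a countable one while preserving $\aformula$. Given an $\aordinal$-model $\amodel$ of $\aformula$, label each position $\aordinalbis < \aordinal$ by its \emph{local type} $\tau(\aordinalbis) = \set{\aformulabis \in \subf{\aformula} \mid \amodel,\aordinalbis \models \aformulabis}$. There are at most $2^{|\subf{\aformula}|}$ distinct types. A Ramsey-type pumping argument on ordinals (or Shelah's composition method) then lets one replace any uncountable ``block'' of consecutive positions sharing a common type profile by a countable block with the same profile: since the truth of every subformula at every position is determined by the sequence of local types together with the finitely many pending until/since commitments at the endpoints of such blocks, the contracted model is still an $\aordinalbis$-model of $\aformula$ with $\aordinalbis$ countable.

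The main obstacle lies in part (II): the past modality $\since$ makes the truth of a formula at a position depend on all positions before it, including limits of uncountable cofinality. The compression therefore cannot be carried out one position at a time; instead one must assign each position a \emph{two-sided} type encoding both its forward and backward commitments and verify that the resulting composition operation is compatible with ordinal sum. This is precisely where the classical decidability and downward L\"owenheim--Skolem-style results of \cite{Buchi&Siefkes73} and \cite{Gurevich&Shelah85} do the heavy lifting, and for this theorem I would simply invoke them rather than redeveloping the composition calculus from scratch.
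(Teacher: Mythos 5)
Your proposal is correct, but it is worth noting that the paper treats this statement as a pure citation result: it observes that $\mainlogic$ translates easily into first-order (hence monadic second-order) logic, and then (I) follows immediately from the decidability of MSO over countable ordinals in \cite{Buchi&Siefkes73}, while (II) follows from the downward result of \cite[Lemma 6]{Gurevich&Shelah85} that a first-order monadic formula satisfiable over some ordinal is satisfiable over a countable one. For part (II) you end up in the same place, since you explicitly defer the composition/L\"owenheim--Skolem heavy lifting to those references. For part (I), however, you sketch a genuinely different, self-contained route: a Hintikka-style ordinal automaton with successor and limit transitions plus a fixed-point nonemptiness test. This is essentially the machinery the paper builds anyway in Sections~\ref{section-translation} and~\ref{section-nonemptiness} (the automaton $\aautomaton_{\aformula}$, Lemma~\ref{lemma-correctness}, and the relations $R_i$), so your approach buys a uniform development that also yields the sharper \pspace{} bound, whereas the paper's citation buys brevity at this preliminary stage. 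The one place your sketch is thinner than it looks is the nonemptiness test itself: ``fixed-point iteration terminates because there are finitely many subsets'' handles runs of successor length, but accepting runs of limit length require an additional argument (the paper uses Ramsey's theorem in the proof of Lemma~\ref{lemma-nonemptiness-exptime}) to reduce them to a loop through a repeated location with a suitable limit transition; without that, the fixed point does not obviously capture all countable ordinal lengths. Since you can always fall back on \cite{Buchi&Siefkes73} as the paper does, this is an omission of detail rather than a genuine gap.
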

\else
\begin{theorem} (I) \cite{Kamp68} 
$\mainlogic$ over the class of ordinals is as expressive as
the first-order logic over the class of structures $\pair{\aordinal}{<}$ where
$\aordinal$ is an ordinal.
(II) \cite{Buchi&Siefkes73}
The satisfiability problem for $\mainlogic$ over the class
of ordinals is decidable.
\end{theorem}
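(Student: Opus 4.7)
The plan is to prove the two claims in turn. For part (I), the easy inclusion $\mainlogic \subseteq$ FO is immediate: by structural induction on $\aformula$, one builds an FO-formula $\widetilde{\aformula}(x)$ with one free variable such that $\amodel,\aordinalbis \models \aformula$ iff $\pair{\aordinal}{<}, \amodel \models \widetilde{\aformula}(\aordinalbis)$; the clauses for $\until$ and $\since$ translate to bounded existentials guarded by a universal ``interval'' clause. The nontrivial direction, FO $\subseteq \mainlogic$, is Kamp's expressive-completeness result. The classical proof proceeds by first establishing a separation theorem for $\mainlogic$: every formula is equivalent (over any linear order) to a Boolean combination of pure past, present, and pure future formulas. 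This is shown by induction on the structure of the input $\mainlogic$-formula, the only nontrivial cases being nested occurrences of $\until$ and $\since$, which are rearranged into separated form by modal equivalences. Once separation is available, the FO-to-$\mainlogic$ translation proceeds by induction on quantifier rank: each FO-quantifier over the chain can be eliminated by naming, via an $\mainlogic$-formula, the extremal positions at which the matrix holds. Naming those extremal positions requires the chain to be Dedekind complete, as already observed in the remarks preceding the theorem, and every ordinal enjoys this property because it is well ordered.

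For part (II), the plan is to reduce satisfiability to a problem already known to be decidable. Translate the input $\mainlogic$-formula $\aformula$ to an FO-formula $\widetilde{\aformula}$ using part (I), then view $\widetilde{\aformula}$ as an MSO-formula over the chain signature with unary predicates for the atomic propositions. B\"uchi and Siefkes established the decidability of the MSO theory of countable ordinals, so satisfiability of $\widetilde{\aformula}$ over countable ordinals is decidable. To lift this to the full class of ordinals, one invokes a downward L\"owenheim--Skolem style argument: any FO-formula in the chain-with-predicates signature that has an $\aordinal$-model for some ordinal $\aordinal$ also has an $\aordinalbis$-model for some countable $\aordinalbis$. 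This reduction to countable ordinals is a standard fact about chains and is stated separately in the long version of this paper as a companion claim to decidability. Chaining the three steps yields a decision procedure for the satisfiability problem of $\mainlogic$ over the entire class of ordinals.

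The main obstacle is clearly the expressive-completeness half of part (I), i.e., translating an FO-formula into an $\mainlogic$-formula. The delicate point is that the separation theorem has to be proved, not merely cited, and that the subsequent quantifier elimination relies essentially on Dedekind completeness precisely when an FO-existential quantifier ranges over an interval whose witnesses are selected by a supremum: the ``first moment after which $\aformulabis$ holds'' is $\mainlogic$-nameable exactly because such least upper bounds exist in every ordinal. One therefore devotes the bulk of the effort to a careful bookkeeping of how past and future formulas can be separated and recombined, and then shows that each FO-quantifier over a Dedekind-complete chain can be replaced by a finite Boolean combination of $\mainlogic$-formulas anchored at the appropriate position. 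The resulting translation is non-elementary in quantifier rank, but since the theorem only concerns expressive power, this blow-up is acceptable.
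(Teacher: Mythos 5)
Your overall architecture matches the paper's own justification, which is in fact just the one-sentence remark following the theorem: (II) is obtained exactly as you describe, by translating $\mainlogic$ into first-order (hence monadic second-order) logic, invoking B\"uchi--Siefkes decidability of MSO over countable ordinals, and reducing the full class of ordinals to the countable ones via the Gurevich--Shelah result (which the long version of the paper indeed states as a separate companion claim). For (I) the paper simply cites Kamp and offers no proof, so your separation-theorem sketch goes beyond what the paper does; it is the standard Gabbay-style route and is acceptable as a plan. One correction, though: the separation theorem for $\until$ and $\since$ does \emph{not} hold over arbitrary linear orders --- separability of a temporal language over a class of flows is, by Gabbay's characterization, \emph{equivalent} to expressive completeness over that class, and $\mainlogic$ is not expressively complete over general linear orders (one needs the Stavi connectives there). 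The separation equivalences themselves are only valid over Dedekind-complete orders, so you should state the separation step relative to that class rather than ``over any linear order''; this is harmless for the theorem at hand since every ordinal is Dedekind complete, but as written the claim is false.
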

\fi


Observe that in~\cite{Buchi&Siefkes73} it was proved
that monadic second-order logic over the class of countable
ordinals is decidable and in~\cite{Gurevich&Shelah85} it was shown
that if a formula of the first-order monadic logic is satisfiable in a model
over an ordinal then it is satisfiable in a model over a countable ordinal.
(I) and (II) are immediate consequences of these results and 
the fact that $\mainlogic$ can be easily translated into first-order logic.

Consequently, $\mainlogic$ over the class of ordinals is certainly 
a fundamental logic to be studied. 
We recall below
a central complexity result that we will extend to all ordinals.  

\begin{thm} \cite{Sistla&Clarke85}
Satisfiability for $\mainlogic$, restricted to 
$\omega$-models,  is \pspace- complete.
\end{thm}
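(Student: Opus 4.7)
The plan is to prove the two directions separately, with PSPACE-hardness by a standard reduction and the PSPACE upper bound by an on-the-fly tableau/automaton construction that handles the past operator $\since$ locally.

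For the lower bound, I would reduce from acceptance of a polynomial-space deterministic Turing machine (or equivalently from QBF). Given a machine $M$ and input of size $n$, a computation is a sequence of configurations, each of length polynomial in $n$. I encode a configuration cell-by-cell along an $\omega$-sequence using atomic propositions for tape symbols, head position and control state, together with markers delimiting configurations. The correct-computation constraints are local (a cell's next-configuration value depends on the three adjacent cells in the current configuration) and can be expressed using a polynomial-size conjunction of $\always$-formulas, where the lookahead across one configuration is obtained by iterating $\next$ polynomially many times. Initial and accepting configurations are similarly captured; $\since$ is not even needed here, which already gives PSPACE-hardness.

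For the upper bound I would use the classical atom construction. Let $\cl{\aformula}$ be the closure of $\aformula$ under subformulae and single negation, and call an \emph{atom} a maximally propositionally-consistent subset $A \subseteq \cl{\aformula}$; note $\card{\cl{\aformula}} = O(\length{\aformula})$, so each atom has polynomial size but there are exponentially many. I define a one-step transition relation $A \to A'$ requiring, for every $\aformula_1 \until \aformula_2 \in \cl{\aformula}$, that $\aformula_1 \until \aformula_2 \in A$ iff $\aformula_2 \in A'$ or ($\aformula_1 \in A'$ and $\aformula_1 \until \aformula_2 \in A'$), and symmetrically for $\since$ using $A$ as the ``past'' of $A'$: $\aformula_1 \since \aformula_2 \in A'$ iff $\aformula_2 \in A$ or ($\aformula_1 \in A$ and $\aformula_1 \since \aformula_2 \in A$). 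The key point is that the truth of every $\since$-formula at position $i+1$ is fully determined by the atom at position $i$, so the past is summarised locally and no additional past information need be guessed. For the first atom one also requires that every $\since$-formula be false (since position $0$ has empty strict past).

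Satisfiability then amounts to the existence of an infinite sequence of atoms $A_0, A_1, \dots$ with $\aformula \in A_0$, consistent with the transition relation, the initial-atom condition, and such that every \emph{eventuality} $\aformula_1 \until \aformula_2 \in A_i$ is eventually fulfilled (there exists $j > i$ with $\aformula_2 \in A_j$). This is a Büchi-type condition on an exponential-size graph. To stay in PSPACE, I guess atoms one at a time, storing only the current atom and an index of the currently-pending eventuality; a standard lasso-finding algorithm with a Savitch-style reachability test then checks, in PSPACE, whether from the current atom we can reach a cycle that visits an atom witnessing that eventuality and returns to a repeated atom, after which we switch to the next pending eventuality. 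Since each configuration of this search takes polynomial space and PSPACE is closed under complement, this yields the required $\pspace$ upper bound. The main subtlety, and the only place one has to be careful, is the treatment of $\since$ at position $0$ and the correctness of summarising the past in the atom — once that is settled, the argument is essentially the one for pure future $\ltl$.
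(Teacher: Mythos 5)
Your proposal is correct: it is the standard Sistla--Clarke argument (PSPACE-hardness via encoding polynomial-space Turing machine computations with iterated $\next$, and the upper bound via an on-the-fly atom/tableau search with a lasso condition for eventualities), which is precisely the result the paper cites rather than reproves. Moreover, your atom construction --- maximally consistent subsets of the closure, the one-step rules for $\until$ and $\since$, and the requirement that the initial atom contain no $\since$-formulae --- coincides exactly with the $\omega$-restriction of the paper's translation to simple ordinal automata in Section~\ref{section-translation-to-automata}, so the approach is essentially the same.
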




\section{Translation from formulae to simple ordinal automata}
\label{section-translation}

\noindent In Section~\ref{section-simple-ordinal-automata}, we
introduce a new class of ordinal automata which we call simple ordinal
automata. These automata are expressive equivalent to B\"uchi automata
over ordinals~\cite{Buchi&Siefkes73}.  However, the locations and the
transition relations of these automata have additional structures.  In
Section~\ref{section-translation-to-automata}, we provide a
translation from $\mainlogic$ into simple ordinal automata which
assigns to every formula in $\mainlogic$ an automaton that recognizes
exactly its models.  We borrow the automata-based approach for
temporal logics from~\cite{Vardi&Wolper94,Kupferman&Vardi&Wolper00}.


\subsection{Simple ordinal automata}
\label{section-simple-ordinal-automata}


\begin{defi} A \defstyle{simple ordinal automaton} $\aautomaton$ is
a structure $\triple{\abasis}{Q}{\delta_{next},\delta_{lim}}$
such that
\begin{enumerate}[$\bullet$]
\iflpar
\item $\abasis$ is a finite set (the \defstyle{basis} of $\aautomaton$),
      $Q \subseteq \powerset{\abasis}$ (the set of locations),
\else
\item $\abasis$ is a finite set (the \defstyle{basis} of $\aautomaton$),
\item $Q \subseteq \powerset{\abasis}$ (the set of \defstyle{locations}),
\fi 
\item $\delta_{next} \subseteq Q \times Q$ is the 
      \defstyle{next-step transition relation},
\item $\delta_{lim} \subseteq \powerset{\abasis} \times Q$ is the 
      \defstyle{limit transition relation}.
\end{enumerate} 
\end{defi}

$\aautomaton$ can be viewed as a finite directed graph whose set of nodes
is structured. 
Limit transitions, whose interpretation is given below,
allow  reaching a node after an infinite amount of steps. 
Given a simple ordinal automaton $\aautomaton$, 
an $\aordinal$-path (or simply a path)
is a map 
$\arun: \aordinal \rightarrow
Q$  for some $\aordinal > 0$ such that
\begin{enumerate}[$\bullet$]
\item for every $\aordinalbis + 1 < \aordinal$,
      $\pair{\arun(\aordinalbis)}{\arun(\aordinalbis+1)} \in \delta_{next}$,
\item for every limit ordinal $\aordinalbis < \aordinal$, 
      $\pair{\AL(\arun,\aordinalbis)}{\arun(\aordinalbis)} \in \delta_{lim}$
      where 
     $$
     \AL(\arun,\aordinalbis) \egdef
     \set{\anelement \in \abasis:
     \exists \ \aordinalter < \aordinalbis \ {\rm such \ that \ for \ every} \
     \aordinalter' \in (\aordinalter,\aordinalbis), \ \anelement \in \arun(\aordinalter')
     }.
     $$
\end{enumerate} 

\medskip\noindent The set $\AL(\arun,\aordinalbis)$ contains exactly the elements of the basis that belong
to every location from some $\aordinalter < \aordinalbis$ until
$\aordinalbis$. We sometimes write $\AL(\arun)$ instead of $\AL(\arun, \aordinal)$ when
$\aordinal$ is a limit ordinal.

Given an $\aordinal$-path $\arun$,  
for $\aordinalbis, \aordinalbis' < \aordinal$ 
we write 
\begin{enumerate}[$\bullet$]
\item $\arun_{\geq \aordinalbis}$ to denote the restriction of $\arun$ 
       to positions
      greater or equal to $\aordinalbis$,
\item $\arun_{\leq \aordinalbis}$ to denote the restriction of
       $\arun$ to positions
      less or equal to $\aordinalbis$,
\item $\arun_{[\aordinalbis,\aordinalbis')}$ to denote the restriction of $\arun$ 
to positions in $[\aordinalbis,\aordinalbis')$ (half-open interval). 
\end{enumerate}
A simple ordinal automaton with \defstyle{acceptance conditions} 
is a structure of the form
$$\triple{\abasis,Q}{I,F,\mathcal{F}}{\delta_{next},\delta_{lim}}$$
where 
\begin{enumerate}[$\bullet$]
\item $I \subseteq Q$ is the set of \defstyle{initial} locations, 
\item $F \subseteq Q$ is the set of \defstyle{final} locations for accepting 
      runs whose length is some successor ordinal,
\item $\mathcal{F} \subseteq \powerset{\abasis}$ encodes the accepting condition for
runs whose length is some limit ordinal.
\end{enumerate}
Given a simple ordinal automaton with acceptance conditions, 
an \defstyle{accepting run}  is a path 
$\arun: \aordinal \rightarrow
Q$  such that 
\begin{enumerate}[$\bullet$]
\item $\arun(0) \in I$, 
\item if $\aordinal$ is a successor ordinal, then $\arun(\aordinal -1) \in F$,
      otherwise $\AL(\arun) \in \mathcal{F}$. 
\end{enumerate}
The \defstyle{nonemptiness problem} for simple ordinal automata consists in checking whether
$\aautomaton$ has an accepting run. 
Our current definition for simple ordinal automata
does not make them language acceptors since they have no alphabet.
It is possible to add in the definition a finite alphabet $\aalphabet$ and to define the next-step
transition relation as a subset of $Q \times \aalphabet \times Q$,
see an example on the right-hand side of Figure~\ref{figure-examples}.
Additionally, the current definition
can be viewed as the case either when the alphabet is a singleton or when
the read letter is encoded in the locations through the dedicated elements of the basis.
This second reading will be in fact used implicitly in the sequel. 
We also write $\aautomaton$ to denote either a simple ordinal automaton
or its extension with acceptance conditions. 

\subsection{Relationships with B\"uchi automata}

Simple ordinal automata with acceptance conditions and alphabet define the same class of
languages as standard ordinal automata in the sense of~\cite{Buchi64,Buchi65}. Main arguments are 
provided below for the sake of completeness. However, we do not need this correspondence in our forthcoming
developments. The main interest for our model of simple ordinal automata rests on the fact that it allows us
to obtain the promised \pspace \ upper bound.  
A \defstyle{standard ordinal automaton} is a structure 
$\aautomaton = \triple{\aalphabet,\locations,I,F,\mathcal{F}}{\delta_{next}}{\delta_{lim}}$  such that
\begin{enumerate}[$\bullet$]
\item $\aalphabet$ is a finite alphabet, 
\item $\locations$ is a finite set of locations,
\item $\delta_{next} \subseteq \locations \times \aalphabet \times \locations$
      and $\delta_{lim} \subseteq \powerset{Q} \times Q$,
\item $I,F \subseteq \locations$ and $\mathcal{F} \subseteq \powerset{\locations}$.
\end{enumerate}
A word $u: \aordinal \rightarrow \aalphabet$ is \defstyle{accepted} by $\aautomaton$
iff there is $\arun: \aordinal \rightarrow \locations$ such that
\begin{enumerate}[$\bullet$]
 \item for every $\aordinalbis+1 < \aordinal$,
       $\pair{\arun(\aordinalbis),u(\aordinalbis)}{\arun(\aordinalbis+1)} \in \delta_{next}$,
 \item for every limit ordinal $\aordinalbis < \aordinal$, 
       $\pair{\INFO(\arun,\aordinalbis)}{\arun(\aordinalbis)} \in \delta_{lim}$
       where 
      $$
      \INFO(\arun,\aordinalbis) \egdef
      \set{\alocation  \in Q:
      {\rm for \ all}  \ \aordinalter < \aordinalbis \ {\rm there \ is} \
       \aordinalter' \in (\aordinal,\aordinalbis) \ {\rm such \ that} \
       \arun(\aordinalter') = \alocation
     }.
     $$
     As usual, $\INFO(\arun,\aordinalbis)$ denotes the set of locations that appear cofinally 
     before
     $\aordinalbis$.
\item $\arun(0) \in I$ and if $\aordinal$ is a successor ordinal, 
      then $\arun(\aordinal -1) \in F$,
      otherwise $\INFO(\arun,\aordinal) \in \mathcal{F}$. 
\end{enumerate}
We write $\lang{\aautomaton}$ to denote the set of words accepted by $\aautomaton$.
Similar definitions can be given for simple ordinal automata with acceptance conditions and alphabet.

 \begin{lem} \label{lemma-correspondence} \
\begin{enumerate}[\em(I):]
\item Given a simple ordinal automaton $\aautomaton$, there is a standard ordinal automaton
$\aautomaton'$ such that $\lang{\aautomaton} = \lang{\aautomaton'}$. 
\item Given a standard ordinal automaton $\aautomaton$, there is a simple ordinal automaton
$\aautomaton'$ such that $\lang{\aautomaton} = \lang{\aautomaton'}$. 
\end{enumerate}
 \end{lem}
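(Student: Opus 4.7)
The plan is to proceed by exhibiting, in both directions, a natural translation that leaves the alphabet and the underlying run structure untouched. Both constructions rest on one observation that I would establish first as a one-line preliminary: for any run $\arun$ of a simple ordinal automaton and any limit ordinal $\aordinalbis$, because $Q \subseteq \powerset{\abasis}$ is finite, only finitely many locations are visited cofinally before $\aordinalbis$, and the positions at which non-cofinal locations occur form a bounded set. Consequently
\[
\AL(\arun,\aordinalbis) \;=\; \bigcap_{L \,\in\, \INFO(\arun,\aordinalbis)} L,
\]
where $\INFO(\arun,\aordinalbis)$ denotes the subset of $Q$ consisting of locations occurring cofinally below $\aordinalbis$. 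This identifies the liminf-flavored set $\AL$ with an expression in terms of the limsup-flavored $\INFO$ and is the bridge between the two acceptance styles.

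For direction (I), I would define $\aautomaton'$ to share the alphabet, the set of locations $Q$ (now seen as abstract labels), the sets $I,F$, and the next-step relation of $\aautomaton$. The only work is to transport the limit data along the equation above: put $(S,L) \in \delta'_{lim}$ iff $(\bigcap_{L' \in S} L', L) \in \delta_{lim}$, and $S \in \mathcal{F}'$ iff $\bigcap_{L' \in S} L' \in \mathcal{F}$. Every $\aordinal$-word admits exactly the same sequences of locations as runs in both automata, and the preliminary equation shows that acceptance at limit ordinals coincides. This direction is essentially bookkeeping.

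Direction (II) is the main difficulty: $\AL$ captures ``eventually always present'' whereas $\INFO$ captures ``cofinally often'', and the simple model only gives us the former, through its basis. The idea is to engineer a basis whose elements act as \emph{anti-markers} for the standard locations, so that the liminf of the anti-markers recovers the complement of the limsup of the locations. Concretely, I would take $\abasis = Q \sqcup \overline{Q}$ with $\overline{Q} = \{\bar{q} : q \in Q\}$ and, for each $q \in Q$, introduce the simple location $\lambda(q) = \{q\} \cup \{\bar{p} : p \in Q,\ p \neq q\}$. Next-step transitions and $I,F$ are transported through the bijection $q \mapsto \lambda(q)$. The payoff of this encoding is that, at any limit $\aordinalbis$ of a run $\arun_{\rm simp}$ obtained by composing a standard run $\arun_{\rm std}$ with $\lambda$, one has $\bar{p} \in \AL(\arun_{\rm simp},\aordinalbis)$ iff eventually always $\arun_{\rm std}(\cdot) \neq p$, iff $p \notin \INFO(\arun_{\rm std},\aordinalbis)$. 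Hence $\INFO(\arun_{\rm std},\aordinalbis) = \{p : \bar{p} \notin \AL(\arun_{\rm simp},\aordinalbis)\}$, and it then suffices to declare $(A,\lambda(q)) \in \delta'_{lim}$ iff $(\{p : \bar{p} \notin A\},q) \in \delta_{lim}$, and $A \in \mathcal{F}'$ iff $\{p : \bar{p} \notin A\} \in \mathcal{F}$.

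The main obstacle I expect is precisely the design of the anti-marker basis for (II), together with the small sanity check that possible stray elements of $\AL \cap Q$ (arising only when the standard run is eventually constant) are still correctly interpreted by the formula $\{p : \bar{p} \notin A\}$. Once the correspondence $\INFO(\arun_{\rm std},\aordinalbis) = Q \setminus \{p : \bar{p} \in \AL(\arun_{\rm simp},\aordinalbis)\}$ is in place, the verification that the two automata accept the same $\aordinal$-words reduces to a straightforward transfinite induction on run length, using that both models share identical next-step semantics and are related pointwise through $\lambda$.
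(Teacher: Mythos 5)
Your proposal is correct, and the two directions fare differently against the paper. For (I) you and the paper do exactly the same thing: the paper's two bullet conditions on a limit transition $\pair{\asetbis}{\alocation}\in\delta_{lim}'$ (namely $\asetter\subseteq\alocation'$ for all $\alocation'\in\asetbis$, and every $\anelement\notin\asetter$ being excluded from some $\alocation'\in\asetbis$) are precisely your equation $\asetter=\bigcap_{\alocation'\in\asetbis}\alocation'$, justified by the same observation that $\AL(\arun,\aordinalbis)=\bigcap_{L\in\INFO(\arun,\aordinalbis)}L$ for finite $Q$; you are in fact slightly more careful than the paper in spelling out how $\mathcal{F}$ must be transported (the paper silently reuses the same symbol even though the two acceptance sets live in $\powerset{\abasis}$ and $\powerset{\locations}$ respectively). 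For (II) your route is genuinely different: the paper takes $\abasis'=\powerset{\locations}$ and encodes each $\alocation$ as the principal filter $\set{\asetbis\subseteq\locations:\alocation\in\asetbis}$, recovering $\INFO$ as the generator of the filter $\AL$, which makes the basis exponential in $\card{\locations}$; your anti-marker basis $Q\sqcup\overline{Q}$ with $\lambda(\alocation)=\set{\alocation}\cup\set{\bar{p}:p\neq\alocation}$ achieves the same recovery $\INFO=\set{p:\bar{p}\notin\AL}$ with a basis of size $2\card{\locations}$. Since every quantitative result in the paper (the short-run bound $\omega^{\card{\abasis}+2}$, the \pspace\ nonemptiness algorithm) is parametrized by $\card{\abasis}$, your translation is the one that would actually let these results transfer to standard ordinal automata without an exponential penalty; the paper does not need the correspondence for its development and so does not optimize it. Your side remark about stray elements of $\AL\cap Q$ (which appear exactly when the standard run is eventually constant before the limit) is the right sanity check, and it is harmless because your $\delta_{lim}'$ and $\mathcal{F}'$ are defined to depend only on the anti-markers.
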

 
 \begin{proof}
(I) Let $\aautomaton$ be a simple ordinal automaton 
$\aautomaton = \triple{\aalphabet,\abasis,\locations,I,F,\mathcal{F}}{\delta_{next}}{\delta_{lim}}$.  
We consider the standard ordinal automaton $\aautomaton'$ of the form
$\triple{\aalphabet,\locations,I,F,\mathcal{F}}{\delta_{next}}{\delta_{lim}'}$
such that $\pair{\asetbis}{\alocation} \in \delta_{lim}'$ iff there is a limit transition 
 $\pair{\asetter}{\alocation}
 \in \delta_{lim}$
 satisfying the conditions below.
 \begin{enumerate}[$\bullet$]
 \item for every $\alocation' \in \asetbis$, we have $\asetter \subseteq \alocation'$,
 \item for every element $\anelement \in (\abasis \setminus \asetter)$, there is 
 $\alocation' \in \asetbis$
      such that $\anelement \not \in \alocation'$. 
 \end{enumerate}
 One can easily check that  $\lang{\aautomaton} = \lang{\aautomaton'}$. 
 Observe that  $\aautomaton'$ can be exponentially larger 
 than $\aautomaton$.\\
\noindent
(II)
 Let $\aautomaton =  \triple{\aalphabet,\locations,I,F,\mathcal{F}}{\delta_{next}}{\delta_{lim}}$ 
be a standard ordinal automaton.
We build a simple ordinal automaton 
$\aautomaton' = \triple{\aalphabet,\abasis', 
\locations',I',F',\mathcal{F}'}{\delta_{next}'}{\delta_{lim}'}$
as follows. 
 \begin{enumerate}[$\bullet$]
 \item $\abasis' = \powerset{\locations}$.
\item $\locations' = \set{\aset \in \powerset{\abasis'}: \exists \ \alocation \in \locations, \  
                  \aset = \set{\asetbis \in \powerset{\locations}: \alocation \in \asetbis}}$.
      Below, when $\alocation \in \locations$, by abusing notation, we also write
      $\alocation$ to denote the corresponding location in $\locations'$ equal to 
      $\set{\asetbis \in \powerset{\locations}: \alocation \in \asetbis}$. 
\item $I' = I$, $F' = F$ and $\mathcal{F}' = \mathcal{F}$. 
 \item For $\aletter \in \aalphabet$ and $\alocation, \alocation' \in \locations$, 
      $\pair{\alocation,\aletter}{\alocation'} \in \delta_{next}'$ only if
       $\pair{\alocation,\aletter}{\alocation'}
        \in \delta_{next}$.
 \item For $\asetbis' \subseteq \abasis'$ and $\alocation \in \locations$, 
      $\pair{\asetbis'}{\alocation} \in \delta_{lim}'$ only if  there is a limit transition
        $\pair{\asetbis}{\alocation} \in \delta_{lim}$ such that
        $\asetbis' = \set{\anelement \in \abasis': \asetbis \subseteq \anelement}$.
 \end{enumerate}
Again, one can easily check that  $\lang{\aautomaton} = \lang{\aautomaton'}$.
\end{proof} 

Let $\alang_0$ be the set of words $u: \aordinal \rightarrow \set{0,1}$
such that for $\aordinalbis < \aordinal$, $\aordinalbis = \omega^2 
\aordinalter$ for some ordinal $\aordinalter$ iff 
$u(\aordinalbis) = 1$. 
The left-hand side of Figure~\ref{figure-examples} presents
a standard ordinal automaton (with three locations) accepting $\alang_0$.
Next-step transitions are represented by plain arrows whereas 
limit transitions are represented by dashed arrows. Moreover,
$F = 
\set{\alocation_1, \alocation_{\omega}, \alocation_{\geq \omega^2}}$
and $\mathcal{F} = \powerset{\set{\alocation_1, \alocation_{\omega}, \alocation_{\geq \omega^2}}}$. 
The right-hand side of Figure~\ref{figure-examples} presents
a corresponding simple ordinal automaton along the lines of the
proof of Lemma~\ref{lemma-correspondence}. Its 
basis $\abasis$ is equal
to $\powerset{\set{\alocation_1, \alocation_{\omega}, 
\alocation_{\geq \omega^2}}}$ and we write 
$\mathbf{\alocation_1}$ to denote
$\set{\set{\alocation_1}, \set{\alocation_1,\alocation_{\omega}}, \set{\alocation_1,\alocation_{\geq \omega^2}}, 
\set{\alocation_1,\alocation_{\omega},\alocation_{\geq \omega^2}}}$.
$\mathbf{\alocation_{\omega}}$ and 
$\mathbf{\alocation_{\geq \omega^2}}$ 
are defined 
similarly.

\begin{figure}
\includegraphics[scale=0.5,angle=-90]{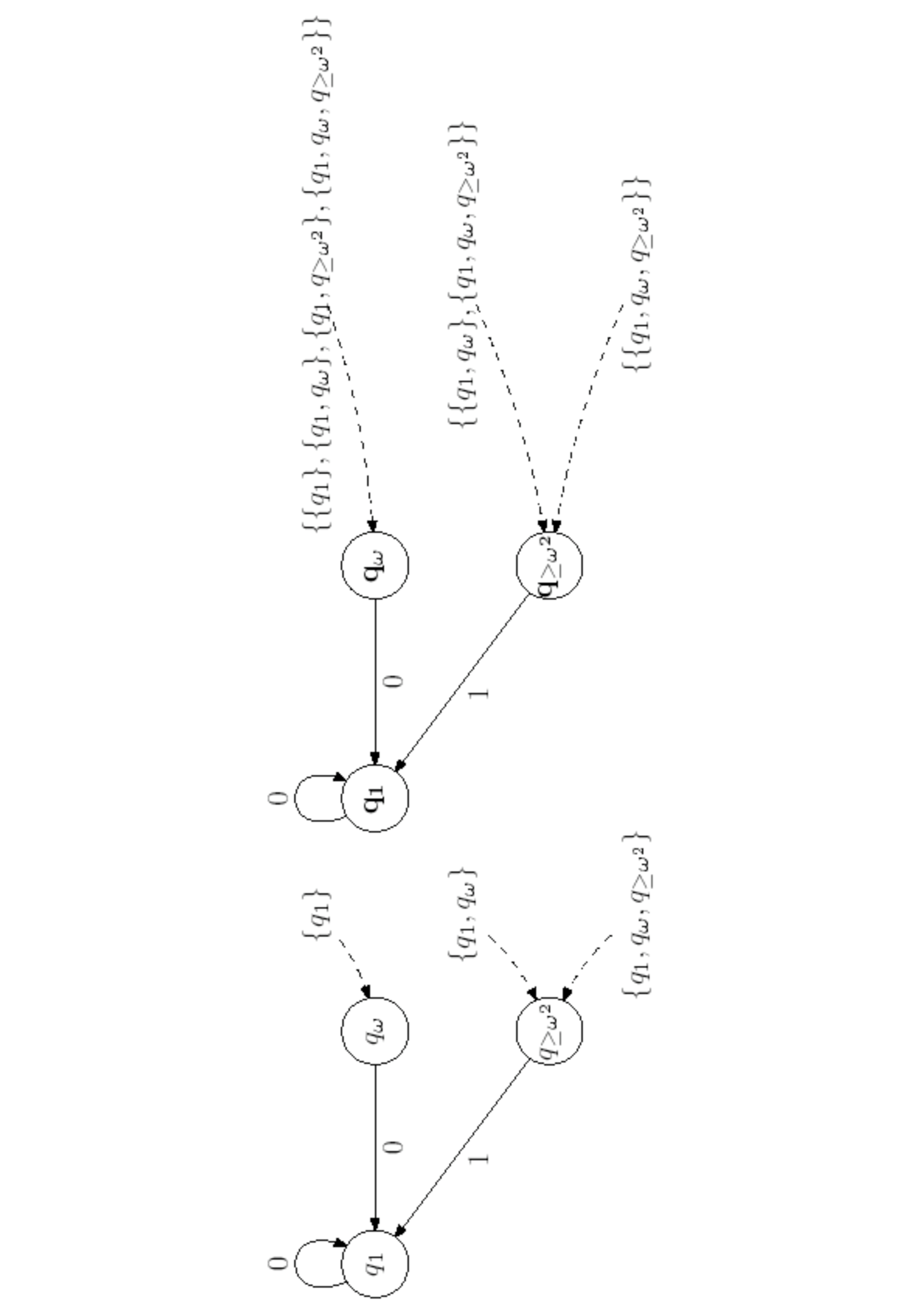}
\vspace{-0.5in}
\caption{Ordinal automata accepting $\alang_0$}
\label{figure-examples}
\end{figure}

\subsection{Translation from \texorpdfstring{$\mainlogic$}{LTL(U,S)} formulae to simple ordinal automata}
\label{section-translation-to-automata} 

As usual, a set $\asetbis$ 
is a \defstyle{maximally Boolean consistent} subset of 
$\subf{\aformula}$ when the following
conditions are satisfied:
\iflong
\begin{enumerate}[$\bullet$]
\item for every $\aformulabis \in \subf{\aformula}$,
      $\neg \aformulabis \in \asetbis$ iff $\aformulabis \not \in \asetbis$,
\item for every $\aformulabis_1 \wedge \aformulabis_2 \in \subf{\aformula}$,
      $\aformulabis_1 \wedge \aformulabis_2 \in \asetbis$ 
      iff $\aformulabis_1, \aformulabis_2 \in \asetbis$. 
\end{enumerate}
\else
for every $\aformulabis \in \subf{\aformula}$,
      $\neg \aformulabis \in \asetbis$ iff $\aformulabis \not \in \asetbis$ and
for every $\aformulabis_1 \wedge \aformulabis_2 \in \subf{\aformula}$,
      $\aformulabis_1 \wedge \aformulabis_2 \in \asetbis$ iff $\aformulabis_1, 
\aformulabis_2 \in \asetbis$. 
\fi
Given a formula $\aformula$, the simple ordinal automaton 
$\aautomaton_{\aformula} = 
\triple{\abasis,Q}{I,F,\mathcal{F}}{\delta_{next},\delta_{lim}}$ is defined as follows:
\begin{enumerate}[$\bullet$]
\item $\abasis = \subf{\aformula}$.
\item $Q$ is the set of maximally Boolean consistent subsets of
      $\subf{\aformula}$.
\item $I$ is the set of locations that contain $\aformula$ 
     and no  elements of the form  $\aformulabis_1 \since \aformulabis_2$.
\item $F$ is the set of locations with no elements of the form $\aformulabis_1 
\until \aformulabis_2$.
\item 
A subset $\asetbis$ of $\abasis$ is in $\mathcal{F}$ if there are no $\aformulabis_1$ and $\aformulabis_2$
such that  $\set{\aformulabis_1, \neg \aformulabis_2, 
     \aformulabis_1 \until \aformulabis_2} \subseteq \asetbis$.
\item For all $\alocation, \alocation' \in Q$,
      $\pair{\alocation}{\alocation'} \in \delta_{next}$ $\equivdef$ the conditions
     below are satisfied:
     \begin{enumerate}[\hbox to8 pt{\hfill}]
     \item\noindent{\hskip-12 pt\bf (next$_{\until}$):}\ 
          for \iflong every \fi $\aformulabis_1 \until \aformulabis_2 \in 
           \subf{\aformula}$,
           $\aformulabis_1 \until \aformulabis_2 \in \alocation$ iff
           either $\aformulabis_2 \in \alocation'$ or 
           $\aformulabis_1, 
            \aformulabis_1 \until \aformulabis_2 \in \alocation'$,
     \item\noindent{\hskip-12 pt\bf (next$_{\since}$):}\ 
          for \iflong every \fi $\aformulabis_1 \since \aformulabis_2 \in 
          \subf{\aformula}$,
           $\aformulabis_1 \since \aformulabis_2 \in \alocation'$ iff
           either $\aformulabis_2 \in \alocation$ or 
           $\aformulabis_1, 
            \aformulabis_1 \since \aformulabis_2 \in \alocation$.
     \end{enumerate}


\item For all $\asetbis \subseteq \abasis$ and $\alocation \in Q$,
      $\pair{\asetbis}{\alocation} \in \delta_{lim}$ $\equivdef$ the conditions
      below are satisfied:
\begin{enumerate}[\hbox to8 pt{\hfill}]
     \item\noindent{\hskip-12 pt\bf(lim$_{\until}$1):}\ 
     if $\aformulabis_1, \neg \aformulabis_2, 
     \aformulabis_1 \until \aformulabis_2 \in \asetbis$, then
     either $\aformulabis_2 \in \alocation$ or 
            $\aformulabis_1, \aformulabis_1 \until \aformulabis_2
            \in \alocation$,
     \item\noindent{\hskip-12 pt\bf(lim$_{\until}$2):}\ 
     if $\aformulabis_1,
      \aformulabis_1 \until \aformulabis_2 
     \in \alocation$ and $\aformulabis_1 \in \asetbis$, 
     then $\aformulabis_1 \until \aformulabis_2 
     \in \asetbis$,
     \item\noindent{\hskip-12 pt\bf(lim$_{\until}$3):}\ 
     if $\aformulabis_1 \in \asetbis$,
                                 $\aformulabis_2 \in \alocation$ and
                         $\aformulabis_1 \until \aformulabis_2$ is in the basis
                         $\abasis$, 
                         then
                              $\aformulabis_1 \until \aformulabis_2 \in 
                               \asetbis$,
     \item\noindent{\hskip-12 pt\bf(lim$_{\since}$):}\ for every
       $\aformulabis_1 \since \aformulabis_2 \in \subf{\aformula}$,
       $\aformulabis_1 \since \aformulabis_2 \in \alocation$ iff
       ($\aformulabis_1 \in \asetbis$ and $\aformulabis_1 \since
       \aformulabis_2 \in \asetbis$).
     \end{enumerate}

%
\end{enumerate}

Even though the conditions above are compatible with the intuition that
a location contains the formulae that are promised to be satisfied, at the current stage it might sound
mysterious how the conditions have been made up (mainly for the conditions related to 
limit transitions). For some of them, their justification comes with
the proof of Lemma~\ref{lemma-correctness}.

Let $\amodel$ be an  $\aordinal$-model   and 
$\aformula$ be a formula in $\mainlogic$.
The \defstyle{Hintikka sequence} for  $\amodel$ and $\aformula$ is an  
$\aordinal$-sequence $H^{\amodel,\aformula}$ defined as follows:
for every $\aordinalbis < \aordinal$,
$$ H^{\amodel, \aformula} (\aordinalbis) \egdef
\{\aformulabis \in \subf{\aformula} ~:~
 \amodel ,\aordinalbis \models \aformulabis\}.
$$
Given a run $\arun: \aordinal \rightarrow Q$, we write $mod(\arun): \aordinal \rightarrow
\powerset{\varprop}$ to denote the $\aordinal$-model $\amodel$ defined as follows:
$\amodel(\aordinalbis) \egdef \set{\avarprop \in \varprop: \avarprop \in \arun(\aordinalbis)}$.
It is clear that if $\arun$ is an Hintikka sequence for $\amodel$ and $\aformula$, then
$mod(\arun) = \amodel$. 

Now we can state the correctness lemma.

 \begin{lem} \label{lemma-correctness} \
\begin{enumerate}[\em(I):]
\item If $\amodel, 0 \models \aformula$,
then 
\iflpar $H^{\amodel,\aformula}$
\else the Hintikka sequence for $\amodel$ and $\aformula$
\fi 
 is an accepting run of $\aautomaton_{\aformula}$.
\item If $\arun$ is an accepting run of $\aautomaton_{\aformula}$,
 then  $mod(\arun), 0 \models \aformula$
and $\arun$ is the Hintikka sequence for $mod(\arun)$ and $\aformula$.
\item $\aformula$ is 
satisfiable  iff $\aautomaton_{\aformula}$ has an
accepting run.
\end{enumerate}
\end{lem}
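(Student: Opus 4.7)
The plan is to handle (III) as an immediate corollary of (I) and (II), so the work lies in the two directions. Throughout I will exploit that each location in $Q$ is maximally Boolean consistent, so Boolean cases of any induction are automatic.

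\textbf{Proof of (I).} First observe that $H^{\amodel,\aformula}(\aordinalbis)$ is in $Q$ for every $\aordinalbis < \aordinal$: this is the standard fact that the set of subformulae of $\aformula$ true at a point is maximally Boolean consistent in $\subf{\aformula}$. I would then check the four conditions for being an accepting run.
\emph{Initial:} $\aformula \in H^{\amodel,\aformula}(0)$ by hypothesis, and no formula $\aformulabis_1 \since \aformulabis_2$ can lie in $H^{\amodel,\aformula}(0)$ because the strict semantics of $\since$ requires a position strictly before~$0$.
\emph{Next-step:} For a successor step $\aordinalbis \to \aordinalbis+1$, the half-open intervals $(\aordinalbis,\aordinalbis+1)$ and $(\aordinalbis, \aordinalter)$ with $\aordinalter = \aordinalbis+1$ are empty, so the semantic clause for $\aformulabis_1 \until \aformulabis_2$ at $\aordinalbis$ reduces exactly to condition (next$_\until$); the case of $\since$ is symmetric.
\emph{Limit transitions:} For a limit $\aordinalbis < \aordinal$ one has $\anelement \in \AL(H^{\amodel,\aformula},\aordinalbis)$ iff $\anelement$ holds everywhere on some final segment of $[0,\aordinalbis)$. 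Conditions (lim$_\until$1)–(lim$_\until$3) are then restatements of how strict until interacts with such cofinal segments: e.g.\ (lim$_\until$1) amounts to ``if $\aformulabis_1 \until \aformulabis_2$ and $\aformulabis_1$ hold on a final segment before $\aordinalbis$ but $\aformulabis_2$ fails there, then the promise of $\aformulabis_2$ must be discharged at the limit $\aordinalbis$ itself, propagating the until''. Condition (lim$_\since$) is the direct unfolding of the semantics of strict since at a limit.
\emph{Acceptance:} If $\aordinal = \aordinalbis+1$, no $\aformulabis_1 \until \aformulabis_2$ can hold at $\aordinalbis$ because the strict future is empty, so $H^{\amodel,\aformula}(\aordinalbis) \in F$. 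If $\aordinal$ is a limit and $\{\aformulabis_1, \neg\aformulabis_2, \aformulabis_1 \until \aformulabis_2\} \subseteq \AL(H^{\amodel,\aformula})$, pick some $\aordinalter$ in the final segment where all three hold; then $\aformulabis_1 \until \aformulabis_2$ at $\aordinalter$ would require a witness for $\aformulabis_2$ strictly after $\aordinalter$ and still below $\aordinal$, contradicting cofinal failure of $\aformulabis_2$. Hence $\AL(H^{\amodel,\aformula}) \in \mathcal{F}$.

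\textbf{Proof of (II).} Fix an accepting run $\arun$ and let $\amodel = mod(\arun)$. I would prove by induction on $\aformulabis \in \subf{\aformula}$ that for every $\aordinalbis < \aordinal$, $\aformulabis \in \arun(\aordinalbis) \iff \amodel, \aordinalbis \models \aformulabis$. The atomic case is by definition of $mod(\arun)$; Boolean cases use maximal consistency. For $\aformulabis = \aformulabis_1 \until \aformulabis_2$ I would argue both directions by transfinite induction on the position witnessing the until. Forward: if $\aformulabis_1 \until \aformulabis_2 \in \arun(\aordinalbis)$, use (next$_\until$) to chase the until to $\aordinalbis+1$; if no successor chain ever hits $\aformulabis_2$, one reaches a limit $\aordinalter$ with $\aformulabis_1, \aformulabis_1 \until \aformulabis_2 \in \AL(\arun,\aordinalter)$, and either $\aformulabis_2 \notin \AL(\arun,\aordinalter)$, in which case (lim$_\until$1) discharges the until at $\aordinalter$, or $\aformulabis_2 \in \arun(\aordinalter)$ directly; the global acceptance condition on $\mathcal{F}$ prevents this process from running to the end of a limit-length run without a witness. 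The converse direction uses (next$_\until$) and (lim$_\until$2)–(lim$_\until$3) to push the formula backward from a witness position to $\aordinalbis$. The $\since$ case is analogous, using (next$_\since$), (lim$_\since$), and the initial condition that no since-formula appears in $\arun(0)$, which anchors the backward induction. Finally, since $\aformula \in \arun(0)$ by the initial condition, the equivalence yields $\amodel, 0 \models \aformula$, and the equivalence applied at every position shows $\arun = H^{\amodel,\aformula}$.

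\textbf{Main obstacle.} The delicate step is the forward direction of the $\until$ case in~(II) at limit positions. A single until-formula can be deferred through many limit ordinals, and one must simultaneously use the three local conditions (lim$_\until$1)--(lim$_\until$3), which only regulate the interaction of the cofinal set $\AL(\arun,\aordinalbis)$ with $\arun(\aordinalbis)$, together with the global condition $\AL(\arun) \in \mathcal{F}$, to exclude an ``eternally postponed'' until. I would isolate this in an auxiliary sublemma stating that on any run, for every $\aordinalbis$ with $\aformulabis_1 \until \aformulabis_2 \in \arun(\aordinalbis)$ there exists a least $\aordinalter \in (\aordinalbis,\aordinal)$ with $\aformulabis_2 \in \arun(\aordinalter)$ and $\aformulabis_1 \in \arun(\aordinalter')$ for every $\aordinalter' \in (\aordinalbis,\aordinalter)$; well-foundedness of the ordinals then makes the transfinite induction go through.
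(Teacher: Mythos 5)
Your proposal is correct and follows essentially the same route as the paper: (III) from (I) and (II), (I) by direct verification of the run conditions against the strict semantics, and (II) by structural induction on subformulae where the $\until$/$\since$ cases are settled by transfinite propagation through the conditions (next$_\until$), (lim$_\until$1)--(lim$_\until$3), (next$_\since$), (lim$_\since$) together with the acceptance sets $F$ and $\mathcal{F}$; your auxiliary ``least witness'' sublemma is just the well-ordering/minimal-counterexample device the paper also uses. The one point treated more lightly than in the paper is the $\since$ direction where $\aformulabis_1$ fails cofinally below a limit with no maximal failure position (the paper's case b.2, anchored by (lim$_\since$) at that limit rather than by the initial condition), but your scheme accommodates it.
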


\iflong
\proof
First, (III) is an immediate consequence of (I) and (II).

\begin{enumerate}[(I):]

\item Suppose that there is a model $\amodel: \aordinal \rightarrow 
\powerset{\varprop}$ (with $\aordinal > 0$) such that $\amodel, 0 \models \aformula$. 
By using  $\mainlogic$ semantics, it is straightforward to check that 
$H^{\amodel, \aformula}$ is accepted by $\aautomaton_{\aformula}$.

\item Let $\arun: \aordinal \rightarrow Q$ be an accepting run of $\aautomaton_{\aformula}$.
Let us show by structural induction that for all $\aformulabis
\in \subf{\aformula}$ and $\aordinalbis < \aordinal$, we have $mod(\arun), \aordinalbis 
\models \aformulabis$ iff $\aformulabis \in \arun(\aordinalbis)$. The base case and the cases with
Boolean operators in the induction step are by an easy verification.
 The only interesting cases in the induction step are related to the temporal operators
$\until$ and $\since$.
 Below, let $\amodel$ be
$mod(\arun)$.

\medskip\noindent
{\em Case $\until$}: $\aformulabis = \aformulabis_1 \until \aformulabis_2$. \\
Let us reason \emph{ad absurdum}. 
Suppose that $\set{\aordinalbis: \aformulabis \in \arun(\aordinalbis)}
\neq \set{\aordinalbis: \amodel, \aordinalbis \models \aformulabis}$. 
Let $\aordinalbis$ be the smallest ordinal which belongs to only one of these sets.
We consider two cases: 
($\amodel, \aordinalbis \models \aformulabis$ and $\aformulabis \not \in \arun(\aordinalbis)$)
 -- Case I below --
or 
 ($\aformulabis  \in \arun(\aordinalbis)$ and $\amodel, 
\aordinalbis \not \models \aformulabis$)
-- Case II below.
\begin{enumerate}[\hbox to8 pt{\hfill}]
\item\noindent{\hskip-12 pt\bf Case I:}\ 
Let $\aordinalter$ be the smallest ordinal verifying $\aordinalbis < \aordinalter <
\aordinal$, $\amodel, \aordinalter \models \aformulabis_2$ and
for every $\aordinalter' \in (\aordinalbis,\aordinalter)$,
we have $\amodel, \aordinalter' \models \aformulabis_1 \wedge \neg \aformulabis_2$. 
By induction hypothesis, $\aformulabis_2 \in \arun(\aordinalter)$ and
for every $\aordinalter' \in  (\aordinalbis,\aordinalter)$,
$\set{\aformulabis_1,\neg \aformulabis_2} \subseteq \arun(\aordinalter')$. 

First, we are going to show that $\neg \aformulabis\in  \arun(
\aordinalter')$ for every   $ \aordinalter'\in [\aordinalbis ,\aordinalter)$.
This is true for $\aordinalbis$. Assume that this is true for $\aordinalbis'$
then it is true for $\aordinalbis'+1$ by condition (next$_{\until}$).  
Assume that
$\aordinalter'$  is a limit ordinal and $\aformulabis \not \in
\arun(\aordinalbis')$ for every 
$\aordinalbis' \in [\aordinalbis ,\aordinalter')$.
Then, by  condition (lim$_{\until}$2) we obtain that 
$\neg \aformulabis \in  \arun(\aordinalter')$.
Next, consider two cases:

\noindent {\em Case a):} $\aordinalter$ is a successor, say
$\aordinalter=\aordinalbis'+1$. We have $ \neg \aformulabis
\in \arun(\aordinalbis')$ and $\aformulabis_2
\in  \arun(\aordinalter)$. 
This contradicts condition (next$_{\until}$).

\noindent {\em Case b):} $\aordinalter$ is a limit ordinal. In
this case $ \{\neg\aformulabis,\aformulabis_1\}\subseteq  
\AL(\arun,\aordinalter)$ and $\aformulabis_2\in \arun(\aordinalter)$. 
This contradicts condition (lim$_{\until}$3).\\
%
%

\item\noindent{\hskip-12 pt\bf Case II:}\ 
  Now suppose that $\aformulabis_1 \until \aformulabis_2 \in \arun(\aordinalbis)$ and
$\amodel, \aordinalbis \not \models \aformulabis_1 \until \aformulabis_2$.\medskip
 
\noindent
{\em Case a\/}): For every $\aordinalter$ such that $\aordinalter \in (\aordinalbis,\aordinal)$,
we have $\amodel, \aordinalter \not \models \aformulabis_2$ ($\aformulabis_2$ does not hold
on $\amodel$ strictly after $\aordinalbis$). 

By induction hypothesis, for 
every $\aordinalter \in (\aordinalbis,\aordinal)$,
$\neg \aformulabis_2 \in \arun(\aordinalter)$. Let us show that for
every $\aordinalter \in (\aordinalbis,\aordinal)$,
$\set{\aformulabis_1, \aformulabis_1 \until \aformulabis_2}
\subseteq \arun(\aordinalter)$.

\medskip{\em Base case}: $\aordinalter = \aordinalbis + 1$. \\
By condition (next$_{\until}$), 
$\aformulabis_1 \until \aformulabis_2 \in \arun(\aordinalbis)$
and $\neg \aformulabis_2 \in \arun(\aordinalbis+1)$ imply
$\set{\aformulabis_1, \aformulabis_1 \until \aformulabis_2}
\subseteq \arun(\aordinalbis+1)$.

\medskip{\em Induction step}: 
\begin{enumerate}[$\bullet$]
\itemsep 0 cm
\item if $\aordinalter = \aordinalter' + 1$, then by
      condition (next$_{\until}$),
     $\aformulabis_1 \until \aformulabis_2 \in \arun(\aordinalter')$
and $\neg \aformulabis_2 \in \arun(\aordinalter'+1)$ imply
$\set{\aformulabis_1, \aformulabis_1 \until \aformulabis_2}
\subseteq \arun(\aordinalter'+1)$.
\item if $\aordinalter$ is a limit ordinal,
     by induction hypothesis, $\set{\aformulabis_1, \neg \aformulabis_2, 
     \aformulabis_1 \until \aformulabis_2} \subseteq \AL(\arun,\aordinalter)$.
     By condition (lim$_{\until}$1), $\set{\aformulabis_1, \aformulabis_1 \until \aformulabis_2}
\subseteq \arun(\aordinalter)$ since $\aformulabis_2 \not \in \arun(\aordinalter)$. 
\end{enumerate}

Consequently, if $\aordinal$ is a limit ordinal, then 
$\set{\aformulabis_1, \neg \aformulabis_2, 
     \aformulabis_1 \until \aformulabis_2} \subseteq \AL(\arun,\aordinal)$
which is in contradiction with the definition of $\mathcal{F}$ in $\aautomaton_{\aformula}$.
Similarly, if $\aordinal = \aordinal'+1$, then
$\aformulabis_1 \until \aformulabis_2 \in \arun(\aordinal')$ which 
is in contradiction with the definition of $F$.

\medskip\noindent
{\em Case b\/}): There is a minimal ordinal $\aordinalter \in (\aordinalbis,\aordinal)$ such that 
$\amodel, \aordinalter \models \neg \aformulabis_1 \wedge \neg \aformulabis_2$
and for every $\aordinalter' \in (\aordinalbis,\aordinalter)$,
we have $\amodel, \aordinalter' \models \aformulabis_1 \wedge \neg \aformulabis_2$.
By induction hypothesis, $\set{\neg \aformulabis_1, \neg \aformulabis_2} 
\subseteq \arun(\aordinalter)$ and 
 for every $\aordinalter' \in (\aordinalbis,\aordinalter)$,
$\set{\aformulabis_1, \neg \aformulabis_2} \subseteq \arun(\aordinalter')$.
Let us show that for
every  $\aordinalter' \in (\aordinalbis,\aordinalter)$,
$\set{\aformulabis_1, \aformulabis_1 \until \aformulabis_2}
\subseteq \arun(\aordinalter')$.

\medskip\noindent
{\em Base case}: $\aordinalter' = \aordinalbis + 1$. \\
By condition (next$_{\until}$), 
$\aformulabis_1 \until \aformulabis_2 \in \arun(\aordinalbis)$
and $\neg \aformulabis_2 \in \arun(\aordinalbis+1)$ imply
$\set{\aformulabis_1, \aformulabis_1 \until \aformulabis_2}
\subseteq \arun(\aordinalter')$.

\medskip{\em Induction step}: 
\begin{enumerate}[$\bullet$]
\item If $\aordinalter' = \aordinalter'' + 1$, then
     by condition (next$_{\until}$),
     $\aformulabis_1 \until \aformulabis_2 \in \arun(\aordinalter'')$
and $\neg \aformulabis_2 \in \arun(\aordinalter''+1)$ imply
$\set{\aformulabis_1, \aformulabis_1 \until \aformulabis_2}
\subseteq \arun(\aordinalter''+1)$.
\item If $\aordinalter'$ is a limit ordinal, then
     by induction hypothesis, $\set{\aformulabis_1, \neg \aformulabis_2, 
     \aformulabis_1 \until \aformulabis_2} \subseteq \AL(\arun,\aordinalter')$.
     By condition (lim$_{\until}$1), $\set{\aformulabis_1, \aformulabis_1 \until \aformulabis_2}
\subseteq \arun(\aordinalter')$ since $\aformulabis_2 \not \in \arun(\aordinalter')$. 
\end{enumerate}

Consequently, if $\aordinalter$ is a limit ordinal, then 
$\set{\aformulabis_1, \neg \aformulabis_2, 
     \aformulabis_1 \until \aformulabis_2} \subseteq \AL(\arun,\aordinalter)$
which leads to a contradiction by condition (lim$_{\until}$1).
Indeed, by induction hypothesis,  $\set{\neg \aformulabis_1, \neg \aformulabis_2} 
\subseteq \arun(\aordinalter)$. 
Similarly, if $\aordinalter = \aordinalter'+1$, then
$\aformulabis_1 \until \aformulabis_2 \not \in \arun(\aordinalter')$ which leads
to a contradiction by condition (next$_{\until}$).\\
\end{enumerate}

\noindent 
{\em Case $\since$}: $\aformulabis = \aformulabis_1 \since \aformulabis_2$. \\
Let us reason \emph{ad absurdum}. 
Suppose that $\set{\aordinalbis: \aformulabis \in \arun(\aordinalbis)}
\neq \set{\aordinalbis: \amodel, \aordinalbis \models \aformulabis}$. 
Let $\aordinalbis$ be the smallest ordinal that belongs to only one of these sets.
Again, we distinguish two cases, namely either 
($\amodel, \aordinalbis \models \aformulabis$ and $\aformulabis \not \in \arun(\aordinalbis)$)
-- Case I below --
or 
 ($\aformulabis  \in \arun(\aordinalbis)$ and $\amodel, \aordinalbis \not \models \aformulabis$)
-- Case II below.

\begin{enumerate}[\hbox to8 pt{\hfill}]
\item\noindent{\hskip-12 pt\bf Case I:}\ 
So $\aordinalbis > 0$ and
there is $\aordinalter \in [0, \aordinalbis)$ such that $\amodel, \aordinalter
\models \aformulabis_2$ and for every $\aordinalter' \in (\aordinalter,\aordinalbis)$,
we have $\amodel, \aordinalter' \models \aformulabis_1$.
By induction hypothesis, $\aformulabis_2 \in \arun(\aordinalter)$ and 
for every $\aordinalter' \in (\aordinalter,\aordinalbis)$,
$\aformulabis_1 \in \arun(\aordinalter')$.
Observe that for every $\aordinalter' \in (\aordinalter,\aordinalbis)$,
we have $\amodel, \aordinalter' \models \aformulabis$ and $\aformulabis \in \arun(\aordinalter')$
($\aordinalbis$ is minimal). 
\begin{enumerate}[$\bullet$]
\item If $\aordinalbis = \aordinalbis' + 1$ then by condition (next$_{\since}$)
       $\aformulabis_2 \not \in \arun(\aordinalbis')$  and
      $\set{\aformulabis_1, \aformulabis_1 \since \aformulabis_2}
      \not \subseteq \arun(\aordinalbis')$. 
      If $\aordinalter = \aordinalbis'$, then this leads to a contradiction since 
      $\aformulabis_2 \in \arun(\aordinalter)$. 
      Similarly, if $\aordinalter < \aordinalbis'$, then 
      $\aformulabis \not \in \arun(\aordinalbis')$ since $\aformulabis_1 \in 
       \arun(\aordinalbis')$.
      Since $\amodel, \aordinalbis' \models \aformulabis_1 \since \aformulabis_2$, this leads
     to a contradiction by the minimality of $\aordinalbis$. 
\item If $\aordinalbis$ is a limit ordinal, then by condition (lim$_{\since}$)
      either $\aformulabis_1 \not \in \AL(\arun,\aordinalbis)$ or
      $\aformulabis_1 \since \aformulabis_2 \not \in \AL(\arun,\aordinalbis)$.  
      By induction hypothesis, $\aformulabis_1 \in \AL(\arun,\aordinalbis)$.
      Hence, there is $\aordinalbis' \in (\aordinalter,\aordinalbis)$ such that
     $\aformulabis_1 \since \aformulabis_2 \not \in \arun(\aordinalbis')$, which is in contradiction with the
     minimality of $\aordinalbis$. 
\end{enumerate}

\item\noindent{\hskip-12 pt\bf Case II:}\ 

\noindent
{\em Case a\/}): For every $\aordinalter \in [0,\aordinalbis)$, 
               $\amodel, \aordinalbis \not \models \aformulabis_2$. \\
By induction hypothesis,  
for every $\aordinalter \in [0,\aordinalbis)$, $\aformulabis_2 \not \in 
\arun(\aordinalter)$. Moreover, 
for every $\aordinalter \in [0,\aordinalbis)$, we have
 $\amodel, \aordinalbis \not \models \aformulabis_1 \since \aformulabis_2$. 
\begin{enumerate}[$\bullet$]
\item If $\aordinalbis = \aordinalbis' + 1$ then by condition (next$_{\since}$), 
$\set{\aformulabis_1,\aformulabis_1 \since \aformulabis_2} \subseteq \arun(\aordinalbis')$
which leads to a contradiction by  minimality of $\aordinalbis$. 
\item If $\aordinalbis$ is a limit ordinal, then $\set{\aformulabis_1, 
\aformulabis_1 \since \aformulabis_2} \subseteq \AL(\arun,\aordinalbis)$ by
condition (lim$_{\since}$). Hence, for some $\aordinalbis' < \aordinalbis$,
$\aformulabis_1 \since \aformulabis_2 \in \arun(\aordinalbis')$, 
which leads again to a contradiction by the minimality of $\aordinalbis$. 
\item If $\aordinalbis = 0$, then we also have a contradiction since
$\arun(0)$ does not contain any since formulae. Observe that in the previous
case analyses with ordinals, the case ``0'' has been irrelevant. 
\end{enumerate}

\medskip{\em Case b\/}): $\amodel, \aordinalbis \not \models \aformulabis$ and not a). \\
Remember that $\aformulabis \in \arun(\aordinalbis)$. 
There is $\aordinalbis' < \aordinalbis$ such that $\amodel, \aordinalbis' 
\not \models \aformulabis_1$. Otherwise, by induction hypothesis and by not a),
we have $\amodel, \aordinalbis \models \aformulabis$, a contradiction.\\
{\em Case b.1:} There is a maximal position $\aordinalter < \aordinalbis$ such that
$\amodel, \aordinalter \not \models \aformulabis_1$. \\
For every $\aordinalter' \in (\aordinalter,\aordinalbis)$,
we have $\amodel, \aordinalter \not \models \aformulabis_2$, otherwise 
$\amodel, \aordinalbis \models \aformulabis$ which would lead to a contradiction.
Let us show by transfinite induction that for every 
$\aordinalter' \in (\aordinalter,\aordinalbis]$,
$\aformulabis \not \in \arun(\aordinalter')$.

\medskip{\em Base case:} $\aordinalter' = \aordinalter + 1$. \\
$\neg \aformulabis_1, \neg \aformulabis_2 \in \arun(\aordinalter)$ imply
by condition (next$_{\since}$) that $\aformulabis \not \in \arun(\aordinalter')$. 

\medskip{\em Induction step:}
\begin{enumerate}[$\bullet$]
\item If $\aordinalter' = \aordinalter'' + 1$, then 
$\neg \aformulabis_2, \neg \aformulabis \in \arun(\aordinalter'')$ by induction hypothesis.
By condition (next$_{\since}$) $\aformulabis \not \in \arun(\aordinalter')$.
\item If $\aordinalter'$ is a limit ordinal, then $\neg \aformulabis \in
\AL(\arun,\aordinalter')$ and by condition (lim$_{\since}$),
 $\aformulabis \not \in \arun(\aordinalter')$.
\end{enumerate}
Hence, $\aformulabis \not \in \arun(\aordinalbis)$, which leads to a contradiction.

\medskip\noindent 
{\em Case b.2} There is no maximal position $\aordinalter < \aordinalbis$ such that
$\amodel, \aordinalter \not \models \aformulabis_1$ (the most delicate case). \\
Consequently, there is a unique position  $\aordinalter \leq \aordinalbis$
such that for every $\aordinalter' < \aordinalter$, there is 
$\aordinalter' < \aordinalter'' < \aordinalter$ verifying $\neg \aformulabis_1
\in \arun(\aordinalter'')$. 
This means that 
\begin{enumerate}[$\bullet$]
\item for every $\aordinalter' \in [\aordinalter,\aordinalbis]$,
      $\aformulabis_1 \in \arun(\aordinalter')$,
\item $\aformulabis_1 \not \in \AL(\arun,\aordinalter)$ and,
\item  by condition (lim$_{\since}$) $\aformulabis \not \in \arun(\aordinalter)$. 
\end{enumerate}
Moreover, for every $\aordinalter' \in (\aordinalter,\aordinalbis)$,
$\neg \aformulabis_2 \in \arun(\aordinalter')$ otherwise by
induction hypothesis, $\amodel, \aordinalbis \models \aformulabis$, which would lead
to a contradiction.
Let us show by transfinite induction that for every 
$\aordinalter' \in (\aordinalter,\aordinalbis]$,
$\aformulabis \not \in \arun(\aordinalter')$.

\medskip{\em Base case:} $\aordinalter' = \aordinalter + 1$. \\
$\neg \aformulabis, \neg \aformulabis_2 \in \arun(\aordinalter)$ imply
by condition (next$_{\since}$)
$\aformulabis \not \in \arun(\aordinalter')$. 

\medskip{\em Induction step:}
\begin{enumerate}[$\bullet$]
\item If $\aordinalter' = \aordinalter'' + 1$, then 
$\neg \aformulabis_2, \neg \aformulabis \in \arun(\aordinalter'')$ by induction hypothesis.
By condition (next$_{\since}$) $\aformulabis \not \in \arun(\aordinalter')$.
\item If $\aordinalter'$ is a limit ordinal, then $\neg \aformulabis \in
\AL(\arun,\aordinalter')$ and by condition (lim$_{\since}$),
 $\aformulabis \not \in \arun(\aordinalter')$.
\end{enumerate} 
Hence, $\aformulabis \not \in \arun(\aordinalbis)$, which leads to a contradiction.\qed
\end{enumerate}
\end{enumerate}

\else
\iflpar
\else
The proof can be found in Appendix~\ref{section-proof-lemma-correctness}.
\fi 
\fi

\section{Short Run Properties}
\label{section-mainproperty}

\noindent In this section, we establish pumping arguments that are
useful to show that
\begin{enumerate}[$\bullet$]
\item in order to check the satisfiability status of the formula $\aformula$,
      there is no need to consider models of length greater than $\omega^{\length{\aformula} + 2}$,
\item simple ordinal automata cannot distinguish ordinals with identical tails (defined
below precisely with the notion of \defstyle{truncation}). 
\end{enumerate}

Let $\aautomaton$ be a simple ordinal automaton and $\asetbis$ be a subset
of its basis. 
$\asetbis$ is said to be \defstyle{present} in $\aautomaton$ iff either 
 there is a limit transition of the form $\pair{\asetbis}{\alocation}$ in $\aautomaton$
or $\asetbis \in \mathcal{F}$.
Given a set $\asetbis$ present in $\aautomaton$, its \defstyle{weight}, 
noted $\weight{\asetbis}$, 
is the maximal $l$ such that $\asetbis_1 \subset \asetbis_2 \subset \cdots \subset
\asetbis_l$ is a sequence of present subsets in $\aautomaton$ and $\asetbis_1 = \asetbis$
($\subset$ denotes proper subset inclusion).
Obviously, $\weight{\asetbis} \leq \card{\abasis} + 1$.

Given a path $\arun: \aordinal  \rightarrow Q$ in $\aautomaton$ with 
a limit ordinal $\aordinal \geq \omega$, 
its \defstyle{weight}, noted $\weight{\arun}$, is the maximal value in the set
\iflong
$$
\set{\weight{\AL(\arun,\aordinalbis}): \aordinalbis < \aordinal, \ 
     \aordinalbis \ {\rm is \ a \ limit \ ordinal}
} \cup \set{\weight{\AL(\arun)}}
$$
When $\aordinal$ is a successor ordinal, the maximal value is computed only 
from the first set of the above union.
\else
$
\set{\weight{\AL(\arun,\aordinalbis}): \aordinalbis < \aordinal, \ 
     \aordinalbis \ {\rm is \ a \ limit \ ordinal}
}.
$
\fi
By convention, if a path has no limit transition, then its weight is zero
(equivalently, its length is strictly less than $\omega + 1$). 
Furthermore, we write $\ALL(\arun)$ to denote the set 
$$\ALL(\arun) \egdef \bigcap_{\aordinalbis < \aordinal}
\arun(\aordinalbis)$$
that corresponds to the set of elements from the basis that are present
in all locations of the run $\arun$. 
Let $\arun$, $\arun'$ be two paths
of respective length $\aordinal$ and $\aordinal'$.
We say that  $\arun$ and $\arun'$ are \defstyle{congruent}, 
written $\arun \sim \arun'$,  iff the conditions below are meet:
\begin{enumerate}[(1)]
\itemsep 0 cm
\item $\arun(0) = \arun'(0)$.
\item Either both $\aordinal$ and $\aordinal'$ are successor ordinals and 
      $\arun(\aordinal-1) = \arun'(\aordinal'-1)$ or 
      both $\aordinal$ and $\aordinal'$ are limit ordinals and $\AL(\arun) = \AL(\arun')$.
\item $\ALL(\arun) = \ALL(\arun')$. 
\end{enumerate}

Let $\arun_1$ be a path of length $\aordinal$ and
$\arun_2$ be a path of length $\aordinalbis$ such that
if $\aordinal$ is a limit ordinal then
$\pair{\AL(\arun_1)}{\arun_2(0)} \in \delta_{lim}$ otherwise
$\arun_1(\aordinal -1) = \arun_2(0)$. 
The concatenation $\arun_1 \cdot \arun_2$ is the path $\arun$ of length
$\aordinal + \aordinalbis$ such that for $\aordinalter \in [0,\aordinal)$,
$\arun(\aordinalter) = \arun_1(\aordinalter)$ and for $\aordinalter \in [0,\aordinalbis)$,
$\arun(\aordinal + \aordinalter) = \arun_2(\aordinalter)$. 
For every ordinal $\aordinal$, the concatenation of $\aordinal$-sequences
of paths is defined similarly.
The relation $\sim$ is a congruence for the concatenation 
operation
on 
\iflpar
paths. 
\else
paths as stated in details below.\fi

\begin{lem} \label{lemma-congruence}\hfill
\begin{enumerate}[\em(I):]
\item Let $\arun \cdot \arun_0 \cdot \arun'$ and $\arun_1$ be
two paths such that $\arun_0 \sim \arun_1$. Then, 
$\arun \cdot \arun_1 \cdot \arun'$ is a path that is congruent to
$\arun \cdot \arun_0 \cdot \arun'$. 
\item Let $\arun_0^0, \arun_0^1, \arun_0^2, \ldots$ and
$\arun_1^0, \arun_1^1, \arun_1^2, \ldots$  be two $\omega$-sequences
of pairwise consecutive paths 
such that for $i \geq 0$, $\arun_0^i \sim \arun_1^i$ and their length is a successor ordinal. If 
$\arun \cdot (\arun_0^0 \cdot \arun_0^1 \cdot \arun_0^2 \cdot  \ldots) \cdot \arun'$
is a path, then it is congruent to 
$\arun \cdot (\arun_1^0 \cdot \arun_1^1 \cdot \arun_1^2 \cdot  \ldots) \cdot \arun'$.
\end{enumerate} 
\end{lem}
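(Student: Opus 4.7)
My plan is to prove both parts by a direct verification, using the three defining conditions of $\sim$ (same starting location, matching end behaviour, and equal $\ALL$) to check (a) that the new sequence is indeed a path, and (b) that it is congruent to the original one.

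For part (I), I would first check well-formedness of $\arun \cdot \arun_1 \cdot \arun'$. The only transitions that could fail are the two ``joins'' at the boundaries of $\arun_1$. At the left boundary, since $\arun_0(0)=\arun_1(0)$, the outgoing transition from $\arun$ (next-step or limit) lands on the same location in both cases, so it is automatically valid. At the right boundary, the compatibility condition distinguishes two subcases: if the lengths of $\arun_0$ and $\arun_1$ are successor ordinals, they end at the same location by $\sim$, so $\arun'$ plugs in the same way; if they are limit ordinals, then $\AL(\arun_0)=\AL(\arun_1)$, and the limit transition into $\arun'(0)$ is exactly the one already used in $\arun \cdot \arun_0 \cdot \arun'$. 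Once well-formedness is secured, congruence is read off: the starting location of both composite paths is $\arun(0)$; their ending locations (resp.\ their $\AL$'s at the final limit) agree because only the suffix $\arun'$ matters, which is shared; and $\ALL(\arun \cdot \arun_i \cdot \arun') = \ALL(\arun) \cap \ALL(\arun_i) \cap \ALL(\arun')$ is unaffected by replacing $\arun_0$ with $\arun_1$ since $\ALL(\arun_0)=\ALL(\arun_1)$.

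For part (II), I proceed in the same style but need to handle the infinite concatenation carefully. Let $\apath_0 = \arun_0^0 \cdot \arun_0^1 \cdots$ and $\apath_1 = \arun_1^0 \cdot \arun_1^1 \cdots$. Since every $\arun_0^i$ and $\arun_1^i$ has successor-ordinal length, they end at the same specific location (by clause~2 of $\sim$), so each consecutive join $\arun_1^i \to \arun_1^{i+1}$ is valid exactly because $\arun_0^i \to \arun_0^{i+1}$ was. The delicate point is the \emph{internal} limit positions of $\apath_1$: at any limit ordinal $\lambda$ falling inside the infinite concatenation, cofinally many of the preceding blocks $\arun_1^j$ are fully traversed, so $\AL(\apath_1,\lambda)$ is the set of basis elements that eventually lie in every location of the blocks approaching $\lambda$, which equals $\bigcap_{j\text{ large}} \ALL(\arun_1^j) = \bigcap_{j\text{ large}} \ALL(\arun_0^j) = \AL(\apath_0,\lambda)$; hence every limit transition used by $\apath_0$ is still legal for $\apath_1$. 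The same computation gives $\AL(\apath_1) = \AL(\apath_0)$ at the supremum, and $\ALL(\apath_1) = \bigcap_i \ALL(\arun_1^i) = \bigcap_i \ALL(\arun_0^i) = \ALL(\apath_0)$. Combining this with the argument of part~(I) applied at the outer positions $\arun$ and $\arun'$ yields congruence of the two full paths.

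The main obstacle I anticipate is precisely this bookkeeping at limit positions of an $\omega$-concatenation: one must verify that the ``eventually in every location'' characterisation of $\AL$ agrees with the blockwise intersection $\bigcap \ALL(\arun^i_\cdot)$, which relies on each block having successor-ordinal length (so it contributes a contiguous stretch of positions and its own $\ALL$ coincides with the elements present at every position of that stretch). Once this identification is in hand, parts~(I) and~(II) collapse into the same three-line verification of the congruence clauses.
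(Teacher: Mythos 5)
Your proof is correct and follows the same route as the paper, which dismisses the lemma as an easy verification while singling out exactly the point you identify: the third congruence condition $\ALL(\arun_0^i)=\ALL(\arun_1^i)$ is what legitimises the limit transition at the supremum of the $\omega$-concatenation in part (II), since $\AL$ there is the eventual intersection of the blockwise $\ALL$'s. (One small wording slip: "cofinally many preceding blocks are fully traversed" holds only at that supremum, not at limit positions internal to a single block, but those are already covered by each block being a path, so nothing is missing.)
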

\noindent The proof of the above lemma  is by an easy 
\iflong verification but observe that for the proof of (II)
the third set of equalities from the definition of the 
congruence $\sim$ ensures that $\arun \cdot (\arun_0^0 \cdot \arun_0^1 \cdot \arun_0^2 \cdot  
\ldots) \cdot \arun'$ is a path.
\else verification.
\fi

\begin{lem} \label{lemma-mainproperty}
Let $\arun: \aordinal \rightarrow Q$ be  a
path in $\aautomaton$ for some countable ordinal $\aordinal$ such that if 
$\aordinal$ is a limit ordinal, then $\AL(\arun)$ is present in $\aautomaton$.
Then, there is  a path $\arun': \aordinal'  \rightarrow Q$ for
 $\aordinal' < \omega^{max(1,\weight{\arun}) +1}$ such that $\arun \sim \arun'$ and
$\weight{\arun'} \leq \weight{\arun}$.
\end{lem}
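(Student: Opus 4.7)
The plan is to prove the claim by induction on the weight $w = \weight{\arun}$, leveraging Lemma~\ref{lemma-congruence} to substitute congruent sub-paths in a controlled way. For the base case $w = 0$, the convention on weight forces $\aordinal < \omega + 1$, and moreover $\aordinal$ cannot be a limit ordinal (otherwise the hypothesis that $\AL(\arun)$ is present would contribute weight at least $1$ to $\weight{\arun}$), so $\aordinal < \omega$ and one simply takes $\arun' = \arun$, which already lies below $\omega^{2} = \omega^{\max(1,0)+1}$.

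For the inductive step with $w \geq 1$, I would first settle the successor case $\aordinal = \aordinalbis + 1$: the prefix $\arun_{[0,\aordinalbis)}$ has weight at most $w$, and when $\aordinalbis$ is a limit its $\AL$-set is present since it sources the limit transition entering $\aordinalbis$. Applying the induction (with a secondary transfinite induction on length if $\aordinalbis$ is itself a successor) yields a congruent shorter prefix $\arun''$, and by Lemma~\ref{lemma-congruence}(I) the extension $\arun'' \cdot \langle \arun(\aordinalbis)\rangle$ is a valid path of length $< \omega^{w+1}$ congruent to $\arun$. The limit case requires a genuine pumping argument. At every limit position $\aordinalbis < \aordinal$, the present subset $\AL(\arun,\aordinalbis)$ has weight at most $w$ and the pair $(\arun(\aordinalbis), \AL(\arun,\aordinalbis))$ ranges over a finite set. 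Using a pigeonhole plus cofinality argument, I would isolate an $\omega$-sequence of limit positions sharing a common pair $(\alocation^\star, \asetbis^\star)$, with $\weight{\asetbis^\star}$ equal to the maximum weight attained along $\arun$. Between consecutive such positions, any $\AL$-set appearing internally either repeats $\asetbis^\star$ (permitting further pumping) or strictly contains $\asetbis^\star$ and therefore has strictly smaller weight by the chain-length definition of $\weight{\cdot}$; the inductive hypothesis then delivers a congruent replacement of length $< \omega^w$ for each intermediate sub-path, and Lemma~\ref{lemma-congruence}(II) allows gluing $\omega$ such pieces into a path of length at most $\omega \cdot \omega^w = \omega^{w+1}$.

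The principal obstacle is the choice of pumping points in the limit case. One must verify (i) that the intermediate sub-paths really do admit a drop in weight so that the induction applies with the tighter bound $\omega^w$, (ii) that $\ALL(\arun)$ is preserved under substitution, which is precisely what the third clause of the definition of $\sim$ guarantees and the reason Lemma~\ref{lemma-congruence}(II) carries the $\ALL$-component in its hypothesis, and (iii) that the assembled piece stays below $\omega^{w+1}$ in length. Weight preservation $\weight{\arun'} \leq \weight{\arun}$ then follows because the surgery leaves retained $\AL$-sets unchanged at preserved limit positions and any $\AL$-set of a replacement piece is either congruent to the one it replaces or is a proper enlargement, both situations precluding any increase in weight.
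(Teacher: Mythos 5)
Your skeleton --- induction on the weight, pumping at limit positions where a configuration repeats, gluing via Lemma~\ref{lemma-congruence} --- is the same as the paper's, and your base case is fine. The limit case, however, rests on a false dichotomy. You claim that between two consecutive chosen positions with $\AL$-set $\asetbis^\star$, every internal $\AL$-set either equals $\asetbis^\star$ or strictly contains it. This is not true: an element of $\asetbis^\star=\AL(\arun,\aordinalbis')$ only has to belong to all locations on some final segment of $[0,\aordinalbis')$, and that final segment may begin \emph{after} an internal limit position $\aordinalter$, so $\AL(\arun,\aordinalter)$ can be incomparable with $\asetbis^\star$ and the weight of the intermediate segment need not drop. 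The paper repairs exactly this point by first fixing a threshold position $\omega^{\weight{\arun}}\cdot n$ with $\ALL(\arun_{\geq \omega^{\weight{\arun}}\cdot n})=\AL(\arun)$, past which every location contains $\AL(\arun)$; only then is every later $\AL$-set a superset of $\AL(\arun)$, and ``equal (truncate there) or strictly larger (weight drops, apply the weight induction)'' becomes a genuine case distinction. This tail-stabilization step is indispensable and absent from your argument.

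Two further gaps. First, your ``repeats $\asetbis^\star$, permitting further pumping'' branch is not an argument: the set of positions realizing the pair $(\alocation^\star,\asetbis^\star)$ can itself have limit order type, so there are no ``consecutive'' occurrences to refine to; moreover the prefix before your first pumping point is an arbitrary ordinal below $\aordinal$ and must also be shortened. The paper handles both by a secondary transfinite induction on the length (reducing length $\omega^{\weight{\arun}+1}\times\aordinalbis$ to smaller $\aordinalbis$ along a cofinal $\omega$-sequence), which your limit case does not contain --- you invoke such an induction only in the successor case. Second, your final bound ``at most $\omega\cdot\omega^{w}=\omega^{w+1}$'' does not meet the required strict inequality $\aordinal'<\omega^{\max(1,\weight{\arun})+1}$. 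If each glued piece really had length strictly below $\omega^{w}$ you could rescue this via additive indecomposability of $\omega^{w}$ (the concatenation of $\omega$ such pieces has length at most $\omega^{w}$), but for $w=1$ the weight-zero pieces are only bounded by $\omega$, not strictly by $\omega$, so the concatenation can reach $\omega^{2}$ exactly; here the paper uses a separate truncation argument exploiting that $\weight{\AL(\arun)}=1$ forces all tail $\AL$-sets to \emph{equal} $\AL(\arun)$, so that the run can be cut at a finite multiple of $\omega$.
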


\iflong
\proof
The proof is by induction on the weight of the paths.
When the weight of the path is zero, $\arun' = \arun$ 
already satisfies the condition $\arun \sim \arun'$.  
We only treat below the cases with paths of length some limit ordinals.
The case with paths of length some successor ordinals is similar. 
All the runs $\arun'$ built below satisfy that $\weight{\arun'} \leq \weight{\arun}$
for the following reasons. Indeed, no additional
limit transitions are applied when building $\arun'$ from $\arun$ and when $\arun$ is of length
some limit ordinal, $\AL(\arun) = \AL(\arun')$.
Hence, below we shall not further emphasize 
$\weight{\arun'} \leq \weight{\arun}$. 
 
\medskip\noindent
{\em Base case 1}:  $\weight{\arun} = 1$ and $\aordinal = \omega^2$. \\
There is $n \geq 0$ such that 
\begin{enumerate}[(1)]
\item for every $\anelement \in 
                 \abasis \setminus \ALL(\arun)$, there is $\aordinalter
\leq \omega \cdot n$ such that $\anelement \not \in \arun(\aordinalter)$,
\item $\ALL(\arun_{\geq \omega \cdot n}) = \AL(\arun)$.
\end{enumerate}
The first condition states that if $\anelement$ does not belong to $\ALL(\arun)$, then this is already
witnessed by $\arun_{\leq \omega \cdot n}$. 
Furthermore, in general 
 $\ALL(\arun_{\geq \omega \cdot k}) \subseteq \AL(\arun)$ but the second condition above 
states that for $n$ large enough,
we can obtain $\AL(\arun) \subseteq \ALL(\arun_{\geq \omega \cdot n})$.  

Hence,  $\ALL(\arun_{\leq \omega \cdot n})  = \ALL(\arun)$ 
and  $\AL(\arun_{\geq \omega \cdot (n + 1)}) = \AL(\arun)$.  Besides, 
$\AL(\arun)$ is present in $\aautomaton$. 
Let
$\asetbis_i = \AL(\arun_{\leq \omega \cdot i})$ for $i \geq n+1$. 
By construction of $n$, for all $i \geq n+1$, 
$ \AL(\arun) \subseteq \asetbis_i$. Moreover, $\weight{\AL(\arun)} = 1$.
Hence, for all $i \geq n+1$, 
$\asetbis_i = \AL(\arun)$.
Consequently,  $\arun' : \omega \cdot (n+1)  \rightarrow
Q$ with 
 $\arun'(\aordinalbis) = \arun(\aordinalbis)$
for $\aordinalbis < \omega \cdot (n+1)$ verifies $\arun \sim \arun'$. 
In order to show that $\ALL(\arun) = \ALL(\arun')$ it is sufficient
to observe that $\ALL(\arun) \subseteq \ALL(\arun')$ since $\arun'$ contains less locations
than $\arun$ 
and $\ALL(\arun'_{\leq \omega \cdot n})  = \ALL(\arun)$.

\medskip\noindent
{\em Base case 2}: $\weight{\arun} = 1$ and $\aordinal = \omega^2 \times \aordinalbis$. \\
The proof is by transfinite induction. 
The base case with $\aordinalbis = 1$ is actually the above base case 1. 
Now suppose that 
$\aordinal = \omega^2 \times (\aordinalbis + 1)$.
By induction hypothesis and by the base case 1, there are paths
$\arun': \aordinalter \rightarrow Q$ and $\arun'':\aordinalter'  \rightarrow Q$ such that 
$\arun' \sim \arun_{< \omega^2 \times \aordinalbis}$,
$\arun'' \sim \arun_{\geq \omega^2 \times \aordinalbis}$ and 
 $\aordinalter  + \aordinalter' < \omega^2$.
Consequently, the concatenation of $\arun'$ and $\arun''$ provides a path satisfying
the adequate conditions.

Now suppose that $\aordinal = \omega^2 \times \aordinalbis$ where $\aordinalbis$ is a limit
ordinal. 
Since $\aordinal$ is countable, 
there is an increasing 
sequence $(\aordinalbis_i)_{i \in \Nat}$ of ordinals
strictly smaller than $\aordinalbis$ such that $\aordinalbis_0 = 0$ and
$\aordinalbis = lim \ \aordinalbis_i$ (see e.g.~\cite[Theorem 3.36]{Rosenstein82}). 
%
%
Observe that for every $i$, $\aordinalbis_{i+1} - \aordinalbis_{i} < \aordinalbis$.
Hence, for every $i$, by induction hypothesis, there is a path 
$\arun'_i: \aordinalter_i \rightarrow Q$ such that
$\arun'_i \sim \arun_{[\omega^2 \times \aordinalbis_i, 
       \omega^2 \times \aordinalbis_{i+1})}$ and
$\aordinalter_i < \omega^2$.
Consequently, $\arun_0' \cdot \arun_1' \cdot \arun_2' \cdots$ is
a path of length at most $\omega^2$ congruent to $\arun$ by Lemma~\ref{lemma-congruence}
(the length may be exactly $\omega^2$).
By using again arguments from the base case 1, we obtain a path that satisfies the adequate
conditions.

\medskip\noindent
{\em Base case 3}: $\weight{\arun} = 1$ and $\aordinal = \omega^2 \times \aordinalbis + 
\omega \times n$ ($n \in \Nat$). \\
The existence of a path satisfying the adequate conditions is an immediate consequence
of the base case 2.

\medskip\noindent
{\em Induction case}. 

\noindent
{\em Case 1\/}:  $\aordinal = \omega^{\weight{\arun} + 1}$. \\
There is $n \geq 0$ such that 
\begin{enumerate}[(1)]
\item for every $\anelement \in 
          \abasis \setminus \ALL(\arun)$, there is $\aordinalter
\leq \omega^{\weight{\arun}} \cdot n$ such that $\anelement \not \in \arun(\aordinalter)$,
\item $\ALL(\arun_{\geq \omega^{\weight{\arun}} \cdot n}) = \AL(\arun)$.
\end{enumerate}
Hence,  $\ALL(\arun_{\leq \omega^{\weight{\arun}} \cdot n}) 
           = \ALL(\arun)$ 
and  $\AL(\arun_{\geq \omega^{\weight{\arun}} \cdot (n + 1)}) = \AL(\arun)$.  
Besides, $\AL(\arun)$ is present in $\aautomaton$ and $\weight{\AL(\arun)} \leq \weight{\arun}$. 
If there is a limit ordinal $\aordinalbis \in [\omega^{\weight{\arun}} \cdot n, 
\aordinal)$ such that
 $\AL(\arun_{< \aordinalbis}) =  \AL(\arun)$, then  
 $\arun' : \aordinalbis \rightarrow
 Q$ with 
  $\arun'(\aordinalter) = \arun(\aordinalter)$
 for $\aordinalter < \aordinalbis$ verifies the required conditions.
Otherwise, 
 suppose that for every limit ordinal $\aordinalbis \in [\omega^{\weight{\arun}} \cdot n ,\aordinal)$, 
 we have 
$\AL(\arun_{< \aordinalbis}) \neq  \AL(\arun)$.
By construction of $n$, 
 for every limit ordinal $\aordinalbis$ in $[\omega^{\weight{\arun}} \cdot n, \aordinal)$, 
 $\AL(\arun) \subset \AL(\arun_{< \aordinalbis})$.
By induction hypothesis, for every $i > n$, 
there is a path 
$\arun'_i: \aordinalter_i  \rightarrow Q$ such that
$\arun_{[\omega^{\weight{\arun}} \times i, 
        \omega^{\weight{\arun}} \times (i+1))} 
\sim \arun'_i$ and $\aordinalter_i  < \omega^{\weight{\arun}}$.
Consequently, $\arun'_0 \cdot \arun'_1 \cdot \arun'_2 \cdots$ is a path of length
less than $\omega^{\weight{\arun}}$ that is congruent to $\arun$ by 
Lemma~\ref{lemma-congruence}.

\medskip\noindent
{\em Case 2\/}: $\aordinal = \omega^{\weight{\arun}+1} \times \aordinalbis$. \\
The proof is by transfinite induction as in the base case 2. Indeed, suppose that 
$\aordinal = \omega^{\weight{\arun}+1} \times (\aordinalbis + 1)$.
There are paths
$\arun': \aordinalter  \rightarrow Q$ and $\arun'':\aordinalter'  \rightarrow Q$ such that 
$\arun' \sim \arun_{< \omega^{\weight{\arun}+1} \times \aordinalbis}$,
$\arun'' \sim \arun_{\geq \omega^{\weight{\arun}+1} \times \aordinalbis}$ and 
 $\aordinalter  + \aordinalter'  < \omega^{\weight{\arun}+1}$.
Consequently, the concatenation of $\arun'$ and $\arun''$ provides a path satisfying
the adequate conditions.

Now suppose that $\aordinal = \omega^{\weight{\arun}+1} 
\times \aordinalbis$ where $\aordinalbis$ is a limit
ordinal. Hence, there is an increasing 
sequence $(\aordinalbis_i)_{i \in \Nat}$ of ordinals
strictly smaller than $\aordinalbis$ such that $\aordinalbis_0 = 0$ and
$\aordinalbis = lim \ \aordinalbis_i$ (see e.g.~\cite[Theorem 3.36]{Rosenstein82}). 
Observe that for every $i$, $\aordinalbis_{i+1} - \aordinalbis_{i} < \aordinalbis$.
Hence, for every $i$, by induction hypothesis, there is a path 
$\arun'_i: \aordinalter_i \rightarrow Q$ such that
$\arun'_i \sim \arun_{[\omega^{\weight{\arun}+1} \times \aordinalbis_i, 
       \omega^{\weight{\arun}+1} \times \aordinalbis_{i+1})}$ and
$\aordinalter_i +1 < \omega^{\weight{\arun}+1}$.
Consequently, $\arun_0' \cdot \arun_1' \cdot \arun_2' \cdots$ is
a path of length less than $\omega^{\weight{\arun}+1}$ 
congruent to $\arun$ by Lemma~\ref{lemma-congruence} (the lenght may be equal to $\omega^{\weight{\arun}+1}$).
By using the case 1 in the induction step, we can get a path that satisfies the adequate
conditions.

\medskip\noindent
{\em Case 3\/}: $\aordinal = \omega^{\weight{\arun} + 1} \times \aordinalbis + 
\omega^{\weight{\arun}} \times n_{\weight{\arun}} +  \cdots + 
\omega^{1} \times n_{1}$ with $n_{\weight{\arun}}$, \ldots, $n_1 \in \Nat$. \\
The existence of a path satisfying the required conditions is an
immediate consequence of the case 2.\qed
 
\else
\iflpar
\else The proof can be found in Appendix~\ref{section-proof-lemma-mainproperty}. 
\fi 
\fi
Lemma~\ref{lemma-mainproperty} below states  
a crucial property for most of  complexity results established in the sequel.
Indeed, for usual ordinal automata, it is not possible to get this polynomial
bound as an exponent of $\omega$ for the length of the short paths. 
Actually, the exponent is linear in the cardinal of its basis and can be
logarithmic in the number of locations for large automata. 
By combination of Lemma~\ref{lemma-correctness} and Lemma~\ref{lemma-mainproperty}, we obtain
the following interesting result.

\begin{cor} \label{corollary-small-model}
If $\aformula$ is satisfiable, then $\aformula$ has an  $\aordinal$-model 
with  $\aordinal < \omega^{\length{\aformula} + \plusconstant}$. 
\end{cor}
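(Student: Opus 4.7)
My plan is to derive this immediately by composing Lemma~\ref{lemma-correctness}(III) with Lemma~\ref{lemma-mainproperty}, then bookkeeping the arithmetic. Assume $\aformula$ is satisfiable. By Lemma~\ref{lemma-correctness}(III) the simple ordinal automaton $\aautomaton_\aformula$ has an accepting run $\arun: \aordinal \rightarrow Q$, and by Theorem~\ref{theorem-preliminariesbis} we may assume $\aordinal$ is countable. I want to apply Lemma~\ref{lemma-mainproperty} to $\arun$, and for that I first check the hypothesis: if $\aordinal$ is a limit ordinal, acceptance forces $\AL(\arun) \in \mathcal{F}$, so by definition $\AL(\arun)$ is present in $\aautomaton_\aformula$, as required.

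Lemma~\ref{lemma-mainproperty} then yields a congruent path $\arun': \aordinal' \rightarrow Q$ with $\aordinal' < \omega^{\max(1,\weight{\arun})+1}$. I would next verify that $\arun'$ is still accepting, which falls directly out of $\arun \sim \arun'$: clause~(1) of congruence gives $\arun'(0) = \arun(0) \in I$; clause~(2) gives either both runs have successor length and agree on the last location (so $\arun'(\aordinal'-1) \in F$), or both have limit length and $\AL(\arun') = \AL(\arun) \in \mathcal{F}$. Either way $\arun'$ is accepting, so by Lemma~\ref{lemma-correctness}(II), $mod(\arun'), 0 \models \aformula$, giving an $\aordinal'$-model.

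It remains to bound $\aordinal'$. Since $\abasis = \subf{\aformula}$, we have $\card{\abasis} = \length{\aformula}$. The definition of weight for a present subset gives $\weight{\asetbis} \leq \card{\abasis}+1$ because any strictly increasing chain of subsets of $\abasis$ has length at most $\card{\abasis}+1$, and therefore $\weight{\arun} \leq \length{\aformula}+1$. Plugging in, $\aordinal' < \omega^{\max(1,\weight{\arun})+1} \leq \omega^{\length{\aformula}+2}$, which is exactly the required bound $\omega^{\length{\aformula}+\plusconstant}$.

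No step is really the hard one: all the content lives in Lemma~\ref{lemma-mainproperty}. The only care needed is to confirm that applying the short-run lemma preserves acceptance (handled by clauses~(1) and~(2) of congruence, not by clause~(3), which is only needed for concatenation as in Lemma~\ref{lemma-congruence}), and that the weight bound in terms of $\length{\aformula}$ comes from $\card{\abasis} = \length{\aformula}$ rather than from the exponentially larger number of locations $\card{Q}$; this is precisely the payoff of introducing simple ordinal automata rather than standard ordinal automata in Section~\ref{section-translation}.
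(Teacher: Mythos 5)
Your proof is correct and follows exactly the route the paper intends: the paper's entire justification for Corollary~\ref{corollary-small-model} is the single sentence that it follows ``by combination of Lemma~\ref{lemma-correctness} and Lemma~\ref{lemma-mainproperty}'', and your argument is a careful unpacking of that combination, including the details the paper leaves implicit (reduction to countable $\aordinal$ via Theorem~\ref{theorem-preliminariesbis}, presence of $\AL(\arun)$, preservation of acceptance under $\sim$, and the bound $\weight{\arun} \leq \card{\abasis}+1 = \length{\aformula}+1$).
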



For $n \in \Nat$, let $\truncation{n}$
be the function that assigns to every ordinal $\aordinal > 0$ an ordinal in $(0,\omega^n 2)$ as follows.  
$\aordinal$ can be written in the form $\aordinal = \omega^n \aordinalter + \aordinalbis$
with $\aordinalbis \in [0,\omega^n)$. Then $\truncation{n}(\aordinal) = 
\omega^n \times min(\aordinalter,1) + \aordinalbis$.
For instance $\truncation{2}(\omega^3) = \omega^2$, 
             $\truncation{2}(\omega^2 + \omega) = \omega^2 + \omega$ 
             and $\truncation{2}(\omega^2 \times 2) = \omega^2$.
The ordinals $\aordinal$, $\aordinalbis$ are \defstyle{$n$-equivalent},
written $\aordinal \approx_n \aordinalbis$, $\equivdef$ $\truncation{n}(\aordinal) =
\truncation{n}(\aordinalbis)$. \\


\begin{lem} \label{lemma-pumping}
Let $\aautomaton$ be a simple ordinal automaton.
\begin{enumerate}[\em(I):]
\item If  $\arun$ is a path
of length $\omega^{\weight{\arun} + 1} \times \aordinal$
for some countable ordinal $\aordinal > 0$, then there is
a path $\arun'$ of length $\omega^{\weight{\arun} + 1}$ such that
$\arun \sim \arun'$ and $\weight{\arun'} \leq \weight{\arun}$.
\item  If a path $\arun$ has length $\omega^{\weight{\arun} + 1}$ and $\weight{\arun} \geq 1$, 
 then for every ordinal $\aordinal > 0$, there is a path $\arun'$
 of length $\omega^{\weight{\arun} + 1} \times \aordinal$ such that
 $\arun \sim \arun'$ and $\weight{\arun'} \leq \weight{\arun}$.
\item If $\arun$ is a path of length some countable ordinal $\aordinal$ and
 $\aordinalbis 
             \approx_{\card{\abasis} + 2} \aordinal$, then
             there is a path $\arun'$ of length $\aordinalbis$ such that
             $\arun \sim \arun'$.  
\end{enumerate}
\end{lem}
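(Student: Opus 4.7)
The three parts are intertwined. The plan is to establish (I) and (II) first, and then derive (III) by combining them on a suitable decomposition.

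For (I), I apply Lemma~\ref{lemma-mainproperty} to $\arun$ to obtain a congruent path of length strictly less than $\omega^{w+1}$ (with $w = \weight{\arun}$), and then lift the length to exactly $\omega^{w+1}$ by inspecting the transfinite induction inside the proof of Lemma~\ref{lemma-mainproperty} and arranging the final concatenated block to terminate precisely at $\omega^{w+1}$, using an additional congruent sub-block obtained from $\arun$'s internal repetitions. The weight bound $\weight{\arun'} \leq w$ is inherited from Lemma~\ref{lemma-mainproperty}. For (II), I proceed by transfinite induction on $\aordinal > 0$. The base case $\aordinal = 1$ is trivial. For $\aordinal = \aordinalbis + 1$ successor, I concatenate the inductively obtained path of length $\omega^{w+1} \cdot \aordinalbis$ with a fresh copy of $\arun$, the splice being legal because congruence preserves the trailing limit set (equal to $\AL(\arun)$) and the required limit transition into $\arun(0)$ is available in the contexts in which the lemma is applied, in particular in (III) below. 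For $\aordinal$ limit, I pick a cofinal $\omega$-sequence $\aordinal_0 < \aordinal_1 < \cdots$ in $\aordinal$ and splice the inductively obtained blocks of lengths $\omega^{w+1} \cdot (\aordinal_{i+1} - \aordinal_i)$ via Lemma~\ref{lemma-congruence}(II). The initial state, the trailing limit set, and $\ALL$ are all preserved throughout, yielding $\arun \sim \arun'$.

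For (III), the case $\aordinal = \aordinalbis$ is immediate. Otherwise, with $n := \card{\abasis} + 2$, both ordinals exceed $\omega^n \cdot 2$ and share a common residue $\rho < \omega^n$, so that $\aordinal = \omega^n \cdot \eta_1 + \rho$ and $\aordinalbis = \omega^n \cdot \eta_2 + \rho$ with $\eta_1, \eta_2 \geq 1$. Split $\arun$ as $\arun_A \cdot \arun_B$ with $\length{\arun_A} = \omega^n \cdot \eta_1$ and $\length{\arun_B} = \rho$. Since $\length{\arun_A} > \omega$, we have $w_A := \weight{\arun_A} \geq 1$; and since $w_A \leq \card{\abasis} + 1 = n - 1$, we have $\omega^{w_A + 1} \leq \omega^n$. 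Apply (I) to $\arun_A$, whose length factors as $\omega^{w_A + 1} \cdot (\omega^{n - w_A - 1} \cdot \eta_1)$, to get a congruent representative of length $\omega^{w_A + 1}$; iterate (I) at the possibly smaller weight of the output until the weight stabilizes at some $w^* \geq 1$. Then apply (II) to pump the stabilized representative to length $\omega^n \cdot \eta_2 = \omega^{w^* + 1} \cdot (\omega^{n - w^* - 1} \cdot \eta_2)$, and concatenate with $\arun_B$; the splice is valid because congruence preserves the boundary state and $\AL$ on which $\arun_B$ depends in $\arun$. Lemma~\ref{lemma-congruence}(I) then certifies $\arun \sim \arun'$ with $\length{\arun'} = \aordinalbis$.

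The main obstacle is the weight interaction between (I) and (II): (II) requires input of length exactly $\omega^{\weight{}+1}$, which can fail after a single compression by (I) if the weight strictly drops. The iterated compression above sidesteps this, relying on the observation that any path of length $\omega^{k+1}$ with $k \geq 1$ carries weight at least $1$, so the iteration never collapses the weight to $0$ and (II) is eventually applicable. A secondary subtlety is ensuring the splice-point transitions needed by (II); in the use-case of (III), those transitions are inherited from the ambient path $\arun$ at the boundary of $\arun_A$ and $\arun_B$, so no extra hypothesis is needed.
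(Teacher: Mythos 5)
Your outline of (I) and (III) tracks the paper's proof closely --- in particular the decomposition $\aordinal=\omega^{\card{\abasis}+2}\aordinalter_0+\aordinalter_1$ in (III) and the iteration of (I) until the weight stabilizes are exactly the paper's moves --- but your proof of (II) has a genuine gap that the deferral to (III) does not repair. In the successor step you append ``a fresh copy of $\arun$'' to the inductively obtained path; since that path has limit length and its $\AL$ equals $\AL(\arun)$, the concatenation is a path only if $\pair{\AL(\arun)}{\arun(0)}\in\delta_{lim}$. Nothing provides such a transition: $\arun(0)$ is the initial location of the path and is entered by no transition at all, and in the application inside (III) the relevant start location is again $\arun(0)$, the very first location of the ambient path, into which no limit transition need lead. (An automaton whose initial location has no incoming limit transition is a concrete counterexample to your splice.) The transitions you say are ``inherited from the ambient path at the boundary of $\arun_A$ and $\arun_B$'' are beside the point: the problematic splices are the internal ones, from the end of one copy of $\arun$ back to $\arun(0)$. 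The same defect affects your limit case, where consecutive pumped blocks each begin at $\arun(0)$.

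The paper's proof of (II) is built precisely to avoid this. It applies Ramsey's theorem to the coloring $\set{i,j}\mapsto\triple{\arun(\omega^{\weight{\arun}}\times i)}{\AL(\arun_{[\omega^{\weight{\arun}}\times i,\omega^{\weight{\arun}}\times j)})}{\arun(\omega^{\weight{\arun}}\times j)}$ to extract positions $i_0<i_1<\cdots$ at which $\arun$ visits one fixed location $\alocation^{\star}$ and between which the limit sets all equal $A=\AL(\arun)$. The block between consecutive such positions is then repeatable because the limit transition $\pair{A}{\alocation^{\star}}$ needed to re-enter it is already exercised inside $\arun$ itself, and repeating it preserves the start location, $\AL$, and $\ALL$, hence the congruence. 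This Ramsey step --- together with the secondary induction on weight used when $\AL(\arun)$ is properly contained in $\AL(\arun,\aordinalbis)$ on a tail --- is the actual content of (II) and is absent from your argument. Your treatment of (I) is also vaguer than it should be: the paper compresses only the proper prefix $\arun_{<\omega^{\weight{\arun}+1}\times\aordinal}$ via Lemma~\ref{lemma-mainproperty} and keeps the final block of length exactly $\omega^{\weight{\arun}+1}$ intact, so no ad hoc ``lifting'' of the length is required. That part is repairable; (II), as written, is not.
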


Only in (I), the ordinal $\aordinal$ is supposed to be countable. 
\iflong
\proof
(III) is a direct consequence of (I) and (II). 
Indeed, suppose $\aordinal = \omega^{\card{\abasis} + 2} \aordinalter_0 + \aordinalter_1$ and
$\aordinalbis = \omega^{\card{\abasis} + 2} \aordinalter_0' + \aordinalter_1'$
with $ \aordinalter_1 = \aordinalter'_1 \in [0, \omega^{\card{\abasis} + 2})$, and $\aordinalter_0 \geq 1$ iff 
 $\aordinalter_0' \geq 1$. If $\aordinalter_0 = \aordinalter_0' = 0$, 
then $\aordinal = \aordinalbis$ and we are done.
Otherwise ($\weight{\arun} \geq 1$), let $K > 0$ such that $K + \weight{\arun} = \card{\abasis} + 2$. 
Since $\weight{\arun} \leq \card{\abasis} + 1$ such a value $K$ exists and therefore (I) can be applied. 
There is a run $\arun'$ such that $\arun' \sim \arun_{\leq \omega^{\card{\abasis} + 2} \aordinalter_0}$ 
and $\arun'$ is of length 
$\omega^{ \weight{\arun} + 1}$ by (I). 
If $\weight{\arun'} \neq \weight{\arun}$, then we apply again (I) on $\arun'$ in order to obtain
a run $\arun''$ such that $\arun'' \sim \arun'$,  $\arun''$ is of length 
$\omega^{ \weight{\arun'} + 1}$. If again $\weight{\arun''} \neq \weight{\arun'}$, we cannot repeat
this process more than $\card{\abasis} +1$ times. Eventually, we obtain a run
$\arun_0$ such that $\arun_0 \sim \arun_{\leq \omega^{\card{\abasis} + 2} \aordinalter_0}$ 
and $\arun_0$ is of length 
$\omega^{ \weight{\arun_0} + 1}$. 
By (II), there is a run $\arun_1$ such that $\arun_1 \sim \arun_0$
and $\arun_1$ is of length $\omega^{\card{\abasis} + 2} \aordinalter_0'$ by (II). 
Consequently, $\arun_1 \cdot \arun_{\geq \omega^{\card{\abasis} + 2} \aordinalter_0} 
\sim \arun$ and  $\arun_1 \cdot \arun_{\geq \omega^{\card{\abasis} + 2} \aordinalter_0} $ is of length
$\aordinalbis$. 


\begin{enumerate}[(I):]
\item The proof is by transfinite induction
on $\aordinal$. 
Again, all the runs $\arun'$ built below satisfy that $\weight{\arun'} \leq \weight{\arun}$
for the following reasons. Indeed, no additional
limit transitions are applied when building $\arun'$ from $\arun$ and when $\arun$ is of length
some limit ordinal, $\AL(\arun) = \AL(\arun')$.
Hence, below we shall not further emphasize  
$\weight{\arun'} \leq \weight{\arun}$. 
We behave similarly for the proof of (II).

\noindent
Observe that the run $\arun$ cannot be of length $\omega$. In the sequel,
we assume that $\weight{\arun} \geq 1$. 
The  base case
with $\aordinal = 1$ is immediate. Suppose that the induction assertion holds true for
$\aordinal$ and let us show that it holds true for $\aordinal + 1$.  
By Lemma~\ref{lemma-mainproperty}, there is a run $\arun'$ of length
strictly less than $\omega^{\weight{\arun}+1}$ 
such that $\arun' \sim \arun_{< \omega^{\weight{\arun}+1} 
\times \aordinal}$. Hence $\arun' \cdot 
\arun_{\geq \omega^{\weight{\arun}+1} \times \aordinal}
\sim \arun$ and its length is exactly $\omega^{\weight{\arun}+1}$. 
Now suppose that $\aordinal$
is a limit ordinal and for every smaller ordinal, the property holds true.
Let $\arun$ be a run of length $\omega^{\weight{\arun}+1} \times \aordinal$. 
There exists an increasing sequence $(\aordinal_i)_{i \in \Nat}$ with
$\aordinal_0 = 0$ and $\aordinal = lim \ \aordinal_i$ (see e.g.~\cite[Theorem 3.36]{Rosenstein82}).
For $i \geq 0$, let $\aordinal'_i$ be $\omega^{\weight{\arun}+1} \aordinal_i + \omega^{\weight{\arun}}$.
Observe that $\aordinal_i' - \omega^{\weight{\arun}+1} \aordinal_i = \omega^{\weight{\arun}}$ and
$\omega^{\weight{\arun}+1} \aordinal_i < \aordinal'_i < \omega^{\weight{\arun}+1} \aordinal_{i+1}$. 
For $i \geq 0$, let $\aordinalbis_i$ be 
$\omega^{\weight{\arun}+1} \times \aordinal_i$.
For every $i \geq 0$, let $\arun_i$ be
the path $\arun_{[\aordinal_i', \aordinalbis_{i+1})}$. 
By Lemma~\ref{lemma-mainproperty}, for every $j \geq 0$, there is
a path $\arun'_{j}$ congruent to $\arun_{j}$ of length
strictly less than $\omega^{\weight{\arun}+1}$ and $\weight{\arun'_j} \leq
\weight{\arun_j}$. 
Let $\arun'$ be the run
$\arun_{[\aordinalbis_0, \aordinal_0')} \arun'_0 \arun_{[\aordinalbis_1, \aordinal_1')} \arun'_1 
 \arun_{[\aordinalbis_2, \aordinal_2)} \arun'_2 \ldots$.
The path $\arun'$ is  exactly of length 
$\omega^{\weight{\arun}+1}$ and it is congruent to $\arun$.

The proof is by double induction on the weight and on
$\aordinal$.

\medskip\noindent{\em Base case:} $\weight{\arun} = 1$. \\
Let $\amap: [\Nat]^2 \rightarrow Q \times \powerset{Q} \times Q$ be the  
function whose domain is made of  unordered pairs $\set{i,j}$ of natural numbers (say, $i < j$) 
such that 
$$
\amap(\set{i,j}) \egdef
\triple{\arun(\omega \times i)
}{
\AL(\arun_{[\omega \times i, \omega \times j)})
}{\arun(\omega \times j)
}
$$
By Ramsey's Theorem  (see e.g. \cite{Ramsey30,Rosenstein82}), 
there is an infinite set $\asetbis \subseteq \Nat$ such that 
$\amap$ restricted to $[\asetbis]^2$ is constant.
Hence, there is a value $\triple{\alocation^{\star}}{A}{\alocation^{\star}}$
and an infinite sequence $0 \leq i_0 < i_1 < i_2 < \cdots$ such that for every $k \geq 0$,
we have $\amap(\set{i_k,i_{k+1}}) = \triple{\alocation^{\star}}{A}{\alocation^{\star}}$. 
Observe that $A = \AL(\arun)$ and for every $k$, 
we also have $\AL(\arun) \subseteq \AL(\arun_{[\omega \times i_k, \omega \times i_{k+1})})$.
Since $\weight{\arun} = 1$, we get that 
$\AL(\arun) = \AL(\arun_{[\omega \times i_k, \omega \times i_{k+1})})$.

Let us come back to the proof by induction. 
The base case with $\aordinal = 1$ is immediate. 
Suppose that the property holds true for
$\aordinal$ and let us show that it holds true for $\aordinal + 1$.  
By induction hypothesis, there is a path $\arun'$ congruent to $\arun$ of length 
$\omega^2 \times \aordinal$. 
Since $\arun_{\geq \omega \times i_0}$ is also a path of
length $\omega^2$, $\arun' \cdot  \arun_{\geq \omega \times i_0}$
is a path ($\AL(\arun_{< \omega \times i_0}) = A$), it 
 is congruent to $\arun$ and its length is precisely $(\omega^2 \times \aordinal) + \omega^2$. 

Now suppose that $\aordinal$
is a limit ordinal and for every smaller ordinal, the property holds true.
There exists a strictly increasing sequence $(\aordinal_i)_{i \in \Nat}$ with
$\aordinal_0 = 0$ and $\aordinal = lim \ \aordinal_i$.
By the induction hypothesis
there is a run $\arun_j$ of length  $\omega^{2} \times (\aordinal_{j+1} - \aordinal_{j})$ 
congruent to
$\arun_{\geq \omega \times i_j}$ ($\arun_{\geq \omega \times i_j}$ is also of length $\omega^2$).
Then, $\arun_0 \cdot \arun_1 \cdot \arun_2 \cdots$ is congruent to $\arun$
and it is of length $\omega^2 \times \aordinal$. Observe that
$\AL(\arun_0 \cdot \arun_1 \cdot \arun_2 \cdots )$ is precisely $A$ that is equal
to $\AL(\arun)$, as stated above.

\medskip\noindent
{\em Induction step:}: $\weight{\arun} > 1$ and the property holds for all the paths of
weight strictly less than $\weight{\arun}$. \\
The base case with $\aordinal = 1$ is immediate.

\begin{enumerate}[$\bullet$]
\item Suppose that the property holds true for
$\aordinal$ and let us show that it holds true for $\aordinal + 1$. 
As in the base case, we define a coloring function $\amap$ such that we color
the interval with endpoints at positions of the form
$\omega^{\weight{r}} \times n$. 
Similarly to the base case, there is a triple 
$\triple{\alocation^{\star}}{A}{\alocation^{\star}}$ 
and a 
sequence $0 \leq i_0 < i_1 < i_2 < \cdots$ such that for 
every $k \geq 0$,
$\amap(\set{i_k,i_{k+1}}) = \triple{\alocation^{\star}}{A}{\alocation^{\star}}$.
If there is $\aordinalbis < \omega^{\weight{\arun} + 1}$ such that
$\AL(\arun_{< \aordinalbis}) = \AL(\arun)$ then by induction
hypothesis, there is $\arun' \sim \arun$ such that
$\arun'$ is of length $\omega^{\weight{\arun} + 1} \times \aordinal$ and 
$\AL(\arun') = \AL(\arun_{< \aordinalbis})$. 
Hence $\arun' \cdot \arun_{\geq \aordinalbis}$ is a path, 
$\arun' \cdot \arun_{\geq \aordinalbis} \sim \arun$ and its length
is $\omega^{\weight{\arun} + 1} \times (\aordinal + 1)$.
If there is no such an ordinal $\aordinalbis$, for every limit ordinal
$\aordinalbis \in [\omega^{\weight{\arun}} \times (i_1-1),\omega^{\weight{\arun}} \times i_1)$,
$\AL(\arun) \subset \AL(\arun_{\aordinalbis})$
 since $A \subset \AL(\arun_{< \aordinalbis})$. 
Hence $W = \weight{\arun_{[\omega^{\weight{\arun}} \times (i_1-1), 
\omega^{\weight{\arun}} \times i_1)}} < \weight{\arun}$.
By the induction hypothesis there is a run $\arun' \sim  
\arun_{[\omega^{\weight{\arun}} \times (i_1-1), 
\omega^{\weight{\arun}} \times i_1)}$ 
of length $\omega^{W+1} \times (\omega^{(\weight{\arun} - W)} \times \aordinal)$,
that is of length $\omega^{\weight{\arun} + 1} \times \aordinal$ by associativity of multiplication.
Hence $\arun_{< \weight{\arun} \times (i_1-1)} 
\cdot \arun' \cdot \arun_{\geq \omega^{\weight{\arun}} \times i_1} \sim \arun$ and it is of length 
$\omega^{\weight{\arun} + 1} \times (\aordinal+1)$.
\item Now suppose that $\aordinal$
is a limit ordinal and for every smaller ordinal, the property holds true.
There exists a strictly increasing sequence $(\aordinal_i)_{i \in \Nat}$ with
$\aordinal_0 = 0$ and $\aordinal = lim \ \aordinal_i$. 
As above,  a triple of the form  $\triple{\alocation^{\star}}{A}{\alocation^{\star}}$
and an $\omega$-sequence $i_0 < i_1 < i_2 < \ldots$ can be defined. 
Observe that for every $k \geq 1$, for every limit ordinal
$\aordinalbis \in [\omega^{\weight{\arun}} \times (i_k-1),\omega^{\weight{\arun}} \times i_k)$, 
$\AL(\arun) \subseteq \AL(\arun, \aordinalbis)$
since $A = \AL(\arun)$ and $A \subseteq \AL(\arun, \aordinalbis)$.
Hence the weight of  $\arun_{[\omega^{\weight{\arun}} \times (i_k-1), 
\omega^{\weight{\arun}} \times i_k)}$, noted $W_k$, is less or equal to $\weight{\arun}$. 
By induction hypothesis, for every $k \geq 1$, there is a path
$\arun_k \sim \arun_{[\omega^{\weight{\arun}} \times (i_k-1),  
\omega^{\weight{\arun}} \times i_k)}$ of length 
 $\omega^{W_k+1} \times (\omega^{(\weight{\arun} - W_k)} \times (\aordinal_{k+1} - \aordinal_{k})$,
that is of length $\omega^{\weight{\arun} + 1} (\aordinal_{k+1} - \aordinal_{k})$.
Hence  $\arun' = \arun_1 \arun_2 \arun_3 \ldots$ is path, it is congruent to $\arun$ and of length
 $\omega^{\weight{\arun} + 1} \times \aordinal$.  It is worth observing that $\AL(\arun') = A$.\qed
\end{enumerate}
\end{enumerate}

\else
\iflpar
\else The proof can be found in Appendix~\ref{section-proof-lemma-pumping}.
\fi
\fi
Because of the translation from formulae to automata, we can also
establish a pumping lemma at the level of formulae.


\begin{lem} \label{lemma-pumping-formulae} \ 
\begin{enumerate}[\em(I):]
\item Let $\aautomaton$ be a simple ordinal automaton with acceptance conditions
           and $\aordinal$, $\aordinalbis$ be countable ordinals 
           such that $\aordinal \approx_{\card{\abasis}+2}
           \aordinalbis$. 
      Then, $\aautomaton$ has an accepting run of length $\aordinal$ iff
      $\aautomaton$ has an accepting run of length $\aordinalbis$. 
\item Let $\aformula$ be a formula in $\mainlogic$
and $\aordinal$, $\aordinalbis$ be countable ordinals such that $\aordinal \approx_{\length{\aformula} + 2}
           \aordinalbis$.
Then $\aformula$ has an $\aordinal$-model  iff 
$\aformula$ has a $\aordinalbis$-model.
\end{enumerate}
\end{lem}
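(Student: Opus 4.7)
The plan is to reduce both parts to the congruence-based pumping machinery already established, with the small extra observation that the $\approx_{n}$ relation preserves acceptance conditions.

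For part (I), I would start from an accepting run $\arun$ of $\aautomaton$ of length $\aordinal$ and apply Lemma~\ref{lemma-pumping}(III) with $n = \card{\abasis}+2$ to obtain a path $\arun'$ of length $\aordinalbis$ satisfying $\arun \sim \arun'$. It then suffices to verify that each clause in the definition of accepting run transfers along $\sim$. The initial condition is immediate from $\arun(0) = \arun'(0)$. For the final condition, I would observe that $\aordinal \approx_{n} \aordinalbis$ forces $\aordinal$ and $\aordinalbis$ to share the same ``shape'': writing $\aordinal = \omega^{n}\aordinalter_1 + \aordinalbis_1$ and $\aordinalbis = \omega^{n}\aordinalter_2 + \aordinalbis_2$ with $\aordinalbis_i \in [0,\omega^n)$, equal truncations yield $\aordinalbis_1 = \aordinalbis_2$ and $\aordinalter_1 = 0$ iff $\aordinalter_2 = 0$; hence $\aordinal$ is a successor iff $\aordinalbis_1$ is a successor iff $\aordinalbis$ is a successor. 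In the successor subcase, the second clause of the congruence gives $\arun(\aordinal-1) = \arun'(\aordinalbis-1) \in F$; in the limit subcase, it gives $\AL(\arun') = \AL(\arun) \in \mathcal{F}$. Thus $\arun'$ is accepting, and by symmetry the converse direction is identical.

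For part (II), I would invoke the translation of Section~\ref{section-translation-to-automata}: by Lemma~\ref{lemma-correctness}, the $\aordinal$-models of $\aformula$ correspond exactly (via the Hintikka-sequence construction and the map $mod$) to the accepting runs of $\aautomaton_{\aformula}$ of length $\aordinal$. The basis of $\aautomaton_{\aformula}$ is $\subf{\aformula}$, whose cardinality is exactly $\length{\aformula}$, so that $\card{\abasis}+2 = \length{\aformula}+2$. Applying part (I) to $\aautomaton_{\aformula}$ with this parameter then yields the desired equivalence between $\aordinal$-models and $\aordinalbis$-models.

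The only nonroutine point is the verification in part (I) that the truncation $\truncation{n}$ preserves the successor/limit dichotomy, so that the clause selected in the definition of $\sim$ is exactly the one needed to match the acceptance condition at length $\aordinalbis$; everything else is a direct combination of the congruence's closure properties with Lemma~\ref{lemma-pumping}(III) and Lemma~\ref{lemma-correctness}. No new combinatorial argument is required, since Lemma~\ref{lemma-pumping} already packages the short-run and pumping content into a single statement about $\approx_{n}$.
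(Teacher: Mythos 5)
Your proposal is correct and follows essentially the same route as the paper: part (I) is obtained from Lemma~\ref{lemma-pumping}(III) by viewing accepting runs as paths, and part (II) reduces to (I) via Lemma~\ref{lemma-correctness} and the observation that $\card{\abasis} = \length{\aformula}$ for $\aautomaton_{\aformula}$. The paper states (I) as a "direct consequence" without spelling out that $\approx_n$ preserves the successor/limit dichotomy and that $\sim$ transfers the acceptance clauses, so your added verification is a welcome (and accurate) elaboration rather than a deviation.
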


\proof\hfill
\begin{enumerate}[(I):]
\item Direct consequence of Lemma~\ref{lemma-mainproperty} and 
Lemma~\ref{lemma-pumping} since accepting runs can be viewed as paths.

\item By Lemma~\ref{lemma-correctness},  
$\aformula$ has an $\aordinal$-model  iff 
$\aautomaton_{\aformula}$ has an accepting run $\arun$
of length $\aordinal$. Since the cardinal of the basis of $\aautomaton_{\aformula}$ is precisely $\length{\aformula}$,
by (I)  we get
that $\aautomaton_{\aformula}$ has an accepting run $\arun$
of length $\aordinal$ iff
$\aautomaton_{\aformula}$ has an accepting run $\arun$
of length $\aordinalbis$. Equivalently, 
$\aformula$ has a $\aordinalbis$-model. 
\qed
\end{enumerate}

\section{Checking nonemptiness of simple ordinal automata}
\label{section-nonemptiness}

\noindent In this section, we provide algorithms to check whether a
simple ordinal automaton admits accepting runs.  The first one runs in
exponential time.  Our optimal algorithm runs in polynomial space in
the size of the basis (see Section~\ref{section-polynomial-space}).

\iflpar
\else \subsection{An exponential-time algorithm for checking nonemptiness}
\fi 

Let $\aautomaton$ be a simple ordinal automaton 
$\triple{\abasis}{Q,I,F,\mathcal{F}}{\delta_{next},\delta_{lim}}$. 
We provide below an algorithm to check given $q,q' \in Q$ and $n \in \Nat$
whether there is path $\arun: \aordinal + 1 \rightarrow Q$ such that
$\arun(0) = q$, $\arun(\aordinal) = q'$ and $\aordinal < \omega^n$. 
Given an ($\aordinal+1$)-path we write $\abstraction{\arun}$ to denote 
the triple $\triple{\arun(0)}{\ALL(\arun)}{\arun(\aordinal)}$.  
We define a family of relations containing the 
triples
of the form $\abstraction{\arun}$. 
Each relation $R_i$ below is therefore a subset of  
$Q \times \powerset{\abasis} \times Q$.

\begin{enumerate}[$\bullet$]
\item 
      $R_0 = \set{\triple{\alocation}{\alocation \cap 
             \alocation'}{\alocation'}: 
             \pair{\alocation}{\alocation'} \in \delta_{next}}$,
\item 
      For $i \in \Nat$,
      $$
      R'_i = \set{ \triple{\alocation_0}{\bigcap_{j=0}^{m} A_j}{\alocation_{m+1}} :
      \exists \ \alocation_0, \ldots, \alocation_{m+1},
                A_0, \ldots, A_{m} \ {\rm s.t.} \bigwedge_{j=0}^{m}  \triple{\alocation_j}{A_j}{\alocation_{j+1}} \in R_i
      }
      $$
\item 
     For $i \in \Nat$, $R_{i+1}$ is defined from $R_i'$ as follows: 
     $\triple{\alocation}{A}{\alocation'} \in R_{i+1}$ iff one of the conditions holds true:
     \begin{enumerate}[(1.):]
     \item $\triple{\alocation}{A}{\alocation'} \in R_{i}'$, 
     \item there exist $\triple{\alocation}{A'}{\alocation''} \in R_i'$ (2.1),
                $\triple{\alocation''}{\asetbis}{\alocation''} \in R_i'$ and
    a limit transition $\pair{\asetbis}{\alocation'} \in \delta_{lim}$ (2.2)
          such that $A = A' \cap \asetbis \cap \alocation'$. 
     \end{enumerate}
\end{enumerate}
The above numbering will be reused in Figure~\ref{figure-algorithm}. 

Let us first observe a few facts, whose proofs are by an easy verification.
\begin{enumerate}[(1)]
\item Whenever $\triple{\alocation}{A}{\alocation'} \in R_i$, $A \subseteq \alocation \cap \alocation'$.
\item Because  $R_i \subseteq R_{i+1}$ for all $i$, 
for some $N \leq 2^{3 \times \card{\abasis}} + 1$, 
$R_{N+1} = R_{N}$.
The bound $2^{3 \times \card{\abasis}} + 1$ takes simply into account that $Q \subseteq
\powerset{\abasis}$.
\end{enumerate}

In the sequel, for $n \geq 0$ and for $\triple{\alocation}{A}{\alocation'} \in \locations \times \powerset{\abasis} \times \locations$, 
we establish the  equivalence of the propositions below:
\begin{enumerate}[$\bullet$]
\item there is $\aordinal + 1 < \omega^{n+1}$ and an $(\aordinal+1)$-path $\arun$ such that
      $\abstraction{\arun} = \triple{\alocation}{A}{\alocation'}$,
\item $\triple{\alocation}{A}{\alocation'} \in R'_n$. 
\end{enumerate}

\iflpar
\begin{lemma} \label{lemma-nonemptiness-one}
(I) If $\triple{\alocation}{E,A}{\alocation'} \in R_{n}$, 
then there is an  ($\aordinal+1$)-path such that $\abstraction{\arun} = 
\triple{\alocation}{E,A}{\alocation'}$ and $\aordinal < \omega^n$. 
(II) Conversely, 
let $\arun: \aordinal + 1 \rightarrow Q$ be a path such that
$\aordinal < \omega^{n}$. Then $\abstraction{\arun} \in R_{n}'$.
\end{lemma}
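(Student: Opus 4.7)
The plan is to prove (I) and (II) separately by induction on $n$. The two statements together yield the equivalence announced just before the lemma, since trivially $R_n\subseteq R_n'$.

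For (I), induct on $n$. The base $n=0$ is immediate: each triple $\triple{\alocation}{\alocation\cap\alocation'}{\alocation'}\in R_0$ is witnessed by the two-point path $\alocation,\alocation'$, whose abstraction is exactly $\triple{\alocation}{\alocation\cap\alocation'}{\alocation'}$. For the inductive step, distinguish the two clauses in the definition of $R_{n+1}$. If $\triple{\alocation}{A}{\alocation'}$ lies in $R_n'$, then it is presented as a composition of triples $\triple{\alocation_j}{A_j}{\alocation_{j+1}}\in R_n$; IH gives a witnessing path for each, and since consecutive endpoints match these paths glue into a single path whose $\ALL$ equals $\bigcap_j A_j=A$ and whose length stays below $\omega^{n+1}$. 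If the triple comes from clause~(2), build the witness by concatenating the path realizing $\triple{\alocation}{A'}{\alocation''}\in R_n'$ with $\omega$ successive copies of a path realizing the idempotent $\triple{\alocation''}{\asetbis}{\alocation''}\in R_n'$, and finally apply the limit transition $\pair{\asetbis}{\alocation'}\in\delta_{lim}$ to land at $\alocation'$. At the limit ordinal $\lambda$ reached after the $\omega$ copies one checks $\AL(\arun,\lambda)=\asetbis$: every $b\in\asetbis$ lies in every location of every copy (since $\ALL$ of each copy is $\asetbis$) and so is eventually constant, while every $b\notin\asetbis$ is missed somewhere within each copy. The overall $\ALL$ is then $A'\cap\asetbis\cap\alocation'=A$, and the total length is strictly below $\omega^{n+1}\cdot\omega+1<\omega^{n+2}$.

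For (II), induct again on $n$. The base $n=0$: a path of length $<\omega$ is a finite chain of next-step transitions, each contributing a triple to $R_0$, whose composition yields the required triple of $R_0'$. For the inductive step, given a path $\arun$ of length $\aordinal+1<\omega^{n+2}$, write $\aordinal+1=\omega^{n+1}\cdot k+s$ with $s<\omega^{n+1}$ and treat each limit position $\lambda_j=\omega^{n+1}\cdot j$ lying inside the path as a seam. Each segment between consecutive seams, restricted to its successor-length prefixes, has length strictly below $\omega^{n+1}$ and by IH yields a triple in $R_n'$. A Ramsey-style colouring argument in the spirit of Lemma~\ref{lemma-mainproperty} extracts from this sequence of triples an infinite sub-sequence that collapses to a single idempotent $\triple{\alocation''}{\asetbis}{\alocation''}\in R_n'$, with $\asetbis=\AL(\arun,\lambda)$ at the corresponding limit point $\lambda$. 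Clause~(2) can then be invoked with this idempotent and the limit transition $\pair{\asetbis}{\arun(\lambda)}\in\delta_{lim}$ used by $\arun$ at $\lambda$, folding the $\omega$-many copies into a single triple in $R_{n+1}'$. Performing this folding sequentially across the finitely many seams and composing with the leftover suffix of length $<\omega^{n+1}$ eventually presents $\abstraction{\arun}$ as an element of $R_{n+1}'$.

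The principal obstacle is the decomposition in the inductive step of (II): carving a path of length just below $\omega^{n+2}$ into a prefix plus an $\omega$-sequence of successor-length bricks whose abstractions all collapse to the same idempotent triple of $R_n'$, so that clause~(2) applies and folds them into a single limit. The Ramsey/idempotence reasoning is parallel to the one already deployed in the proof of Lemma~\ref{lemma-mainproperty}; the key local verification is that the intersection $\AL$ produced by the folded infinite piece matches exactly the second component of the chosen idempotent triple, which is what makes the available limit transition $\pair{\asetbis}{\alocation}\in\delta_{lim}$ usable and yields the equation $A=A'\cap\asetbis\cap\alocation$ demanded by clause~(2).
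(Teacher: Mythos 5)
Your strategy coincides with the paper's: part (I) by induction on $n$, realizing clause (1) by concatenating the witnessing paths and clause (2) by prefix $\cdot$ ($\omega$-iteration of the idempotent) $\cdot$ limit transition; part (II) by induction on $n$, cutting each block of length a power of $\omega$ at an $\omega$-cofinal sequence of positions, applying Ramsey's theorem to obtain a constant idempotent triple of $R_n'$ whose middle component equals $\AL$ at the limit, and folding via clause (2), then composing across the finitely many remaining blocks. Your verification that $\AL$ at the limit equals $\asetbis$ (every $b\in\asetbis$ lies in all locations of every copy because $\asetbis=\ALL$ of each copy; every $b\notin\asetbis$ is missed cofinally) is exactly the point the paper leaves implicit, and it is correct.

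The one genuine defect is the length accounting in part (I), which is part of what the lemma asserts. You bound the $\omega$-fold repetition by $\omega^{n+1}\cdot\omega$, i.e.\ you multiply the \emph{upper bound} on the length of one copy by $\omega$; this lands at $\omega^{n+2}$, a full exponent too high (and, incidentally, ``strictly below $\omega^{n+1}\cdot\omega+1$'' means $\le\omega^{n+2}$, not $<\omega^{n+2}$). What is needed, and what the construction actually delivers, is this: the $\omega$ copies are all the \emph{same} path, of a single fixed length $\ell$ strictly below the additively indecomposable ordinal $\omega^{n+1}$, so the repetition has length $\ell\cdot\omega=\sup_k \ell\cdot k\le\omega^{n+1}$, and the prefix of length $p<\omega^{n+1}$ is absorbed on the left, $p+\ell\cdot\omega\le p+\omega^{n+1}=\omega^{n+1}$. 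Hence a clause-(2) witness for a triple of $R_{n+1}$ has length at most $\omega^{n+1}+1$, which is the bound the statement requires and which the nonemptiness algorithm (stopping at $R_{\card{\abasis}+3}'$) depends on; with your accounting the recursion depth would have to grow. A smaller imprecision in (II): the infinite family fed to Ramsey cannot be the finitely many seam-to-seam segments at the positions $\omega^{n+1}\cdot j$; it must be the $\omega$-sequence of successor-length sub-blocks obtained by cutting one such segment at positions cofinal in it (e.g.\ at $\omega^{n}\cdot i$, $i<\omega$). Your closing paragraph describes the correct decomposition, but the main text conflates the two levels of cutting.
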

\else
\begin{lem} \label{lemma-nonemptiness-one}
If $\triple{\alocation}{A}{\alocation'} \in R_{n}$, 
then there exist  $\aordinal < \omega^n$ and  an  ($\aordinal+1$)-path such that $\abstraction{\arun} = 
\triple{\alocation}{A}{\alocation'}$. 
\end{lem}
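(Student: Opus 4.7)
My plan is to prove this lemma by induction on $n$, carrying along simultaneously the companion claim for $R'_n$ (namely: if $\triple{\alocation}{A}{\alocation'} \in R'_n$, then there is an $(\aordinal+1)$-path $\arun$ with $\aordinal + 1 < \omega^{n+1}$ and $\abstraction{\arun} = \triple{\alocation}{A}{\alocation'}$). The parallel induction is natural because $R'_n$ is built from $R_n$ by finite chaining, while $R_{n+1}$ is built from $R'_n$ by adding one use of a limit transition, so each inductive step depends on the other.

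For the base case $n=0$, a triple in $R_0$ arises from a single next-step transition $\pair{\alocation}{\alocation'} \in \delta_{next}$, and the two-element path $\arun(0) = \alocation$, $\arun(1) = \alocation'$ witnesses the claim with $\ALL(\arun) = \alocation \cap \alocation'$, which matches the middle component by construction of $R_0$. For the inductive step on $R'_n$, take a chain $\triple{\alocation_j}{A_j}{\alocation_{j+1}} \in R_n$ for $j = 0,\ldots,m$; apply the induction hypothesis on $R_n$ to obtain paths $\arun_j$ realizing each triple, then concatenate them (endpoints match by construction). The resulting path has length bounded by a finite sum of ordinals below $\omega^n$, hence below $\omega^{n+1}$, and $\ALL$ distributes over the concatenation of successor-length paths, yielding $\bigcap_j A_j$ as required.

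For the inductive step on $R_{n+1}$, case (1) of the definition is immediate from the just-established $R'_n$ claim. Case (2) is the substantive one: I have a prefix triple $\triple{\alocation}{A'}{\alocation''} \in R'_n$, a self-loop triple $\triple{\alocation''}{\asetbis}{\alocation''} \in R'_n$, and a limit transition $\pair{\asetbis}{\alocation'} \in \delta_{lim}$. Using the $R'_n$ claim, pick paths $\arun_1$ and $\arun_2$ realizing these triples. I then form the concatenation $\arun_1 \cdot \arun_2 \cdot \arun_2 \cdot \arun_2 \cdots$ of $\arun_1$ followed by $\omega$ copies of the loop $\arun_2$, then append a single further position holding $\alocation'$, justified by the limit transition. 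The total length stays bounded by the desired exponent of $\omega$ because each $\arun_j$ has length below $\omega^{n+1}$ and $\omega^{n+1} \cdot \omega = \omega^{n+2}$; more careful accounting using $\aordinal_2 + 1 < \omega^{n+1}$ gives the precise bound stated in the lemma.

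The main obstacle I expect is verifying that the composed path has $\AL$ and $\ALL$ equal to the intended values at the limit position and globally. Concretely, I must check that $\AL$ at the gluing point immediately before $\alocation'$ is exactly $\asetbis$: the inclusion $\asetbis \subseteq \AL$ follows because $\asetbis = \ALL(\arun_2)$ holds in every loop copy from some index onwards, and the reverse inclusion follows because anything in $\AL$ must appear cofinally, hence in every complete loop copy past some point, hence in $\ALL(\arun_2) = \asetbis$. A similar bookkeeping shows that the global $\ALL$ of the constructed path is $A' \cap \asetbis \cap \alocation'$, matching the definition $A = A' \cap \asetbis \cap \alocation'$ in case (2) of $R_{n+1}$. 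With these two intersection identities in hand, the path genuinely realizes the triple $\triple{\alocation}{A}{\alocation'}$ and the induction closes.
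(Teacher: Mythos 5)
Your proposal follows the same route as the paper's proof: induction on $n$ with a companion claim for $R'_n$, realization of $R_0$-triples by single next-step transitions, concatenation for the finite-chaining step, and the construction $\arun_1 \cdot (\arun_2)^{\omega} \cdot \alocation'$ justified by the limit transition $\pair{\asetbis}{\alocation'}$ for case (2) of $R_{n+1}$ --- including the verification, which the paper leaves implicit, that $\AL$ at the gluing point equals $\asetbis = \ALL(\arun_2)$ and that the global $\ALL$ is $A' \cap \asetbis \cap \alocation'$.

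The one point to tighten is the ordinal bound in your companion invariant. As you state it ($R'_n$ yields a path with $\aordinal + 1 < \omega^{n+1}$), the induction does not close in case (2): if the loop $\arun_2$ is only known to have length $\aordinal_2 + 1$ with $\aordinal_2 < \omega^{n+1}$, then $\omega$ glued copies of it can reach length exactly $\omega^{n+1}$ (take $\aordinal_2 = \omega^{n}$), and after prepending $\arun_1$ and appending the final position $\alocation'$ you obtain $\aordinal = \omega^{n+1}$, which is not strictly below $\omega^{n+1}$ as the induction for $R_{n+1}$ requires. The repair is already contained in your chaining step: a finite sum of ordinals below $\omega^{n}$ is again below $\omega^{n}$ by additive indecomposability, so triples in $R'_n$ are in fact realized by paths with $\aordinal < \omega^{n}$ --- the same bound as for $R_n$ --- and this is the invariant the paper's proof actually uses. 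With it, the $\omega$ copies of the loop occupy length at most $\omega^{n}$, the whole construction has length at most $\omega^{n} \cdot 2 + 1 < \omega^{n+1}$, and the induction closes. Strengthen the companion claim accordingly and your argument matches the paper's.
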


\iflong
\begin{proof}
The proof is by induction on $n$. For the base $n = 0$, the proof is by an easy
verification. In the induction step, suppose that 
$\triple{\alocation}{A}{\alocation'} \in R_{n+1}$.
First suppose that $\triple{\alocation}{A}{\alocation'} \in R_{n}'$, that is
there are $\alocation_0$, \ldots, $\alocation_{m+1}$,
                $A_0$, \ldots, $A_{m}$ such that $\bigwedge_{j=0}^{m}  \triple{\alocation_j}{A_j}{\alocation_{j+1}} \in R_{n}$,
$A = \bigcap_{j=0}^{m} A_j$, $\alocation_0 = \alocation$ and $\alocation_{m+1} = \alocation'$.
By induction hypothesis, for $i \in [0,m]$,  
there is a path $\arun_i: \aordinal_i + 1 \rightarrow Q$
such that 
$\abstraction{\arun_i} = \triple{\alocation_i}{A_i}{\alocation_{i+1}}$ and
$\aordinal_i < \omega^n$. Hence, $\arun_0 \cdot \cdots \cdot \arun_{m}$ is
a path of the desired form of length strictly less than $\omega^{n}$.


If $\triple{\alocation}{A}{\alocation'} \not \in R_{n}'$, then necessarily, 
by definition of $R_{n+1}$, 
there exist  $\pair{\asetbis}{\alocation'} \in \delta_{lim}$,
           $\triple{\alocation}{A'}{\alocation''} \in R_n'$ and $\triple{\alocation''}{\asetbis}{\alocation''} \in R_n'$ 
          such that $A = A' \cap \asetbis \cap \alocation'$. 

Hence, by definition of $R_n'$ and by induction hypothesis there is a path $\arun: \aordinal +1 \rightarrow
Q$ of length  strictly less than $\omega^n$ between $\alocation$ and $\alocation''$.
Similarly, there is a path $\arun': \aordinalbis +1 \rightarrow
Q$ of length  strictly less than $\omega^n$ between $\alocation''$ and $\alocation''$.
Observe that $\arun'' = \arun \cdot (\arun')^{\omega} \alocation'$ is a path of length strictly less than 
$\omega^{n+1}$, $\AL(\arun'') = \asetbis$ and $\abstraction{\arun''} = \triple{\alocation}{A}{\alocation'}$. 
\end{proof}
\else
The proof can be found in Appendix~\ref{section-proof-lemma-nonemptiness-one}. 
\fi 

Consequently, if  $\triple{\alocation}{A}{\alocation'} \in R'_n$, 
then  there is $\aordinal + 1 < \omega^{n+1}$ and an $(\aordinal+1)$-path $\arun$ such that
      $\abstraction{\arun} = \triple{\alocation}{A}{\alocation'}$. 
A converse result can also be established.

\begin{lem} \label{lemma-nonemptiness-two}
Let $\arun: \aordinal + 1 \rightarrow Q$ be a path such that
$\aordinal < \omega^{n}$. Then $\abstraction{\arun} \in R_{n}'$.  
\end{lem}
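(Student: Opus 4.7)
The proof proceeds by induction on $n$.

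For the base case $n = 1$, any path $\arun : \aordinal + 1 \to Q$ with $\aordinal < \omega$ has finite length and uses only next-step transitions. For each $i < \aordinal$, the transition $\pair{\arun(i)}{\arun(i+1)} \in \delta_{next}$ gives $\triple{\arun(i)}{\arun(i) \cap \arun(i+1)}{\arun(i+1)} \in R_0$. Chaining these $\aordinal$ triples via the definition of $R_0'$ produces $\triple{\arun(0)}{\ALL(\arun)}{\arun(\aordinal)} = \abstraction{\arun}$, using the identity $\ALL(\arun) = \bigcap_{i < \aordinal}(\arun(i) \cap \arun(i+1))$. Since $R_0 \subseteq R_1$ also $R_0' \subseteq R_1'$, so $\abstraction{\arun} \in R_1'$.

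For the induction step, assume the statement for $n$ and let $\arun : \aordinal + 1 \to Q$ with $\aordinal < \omega^{n+1}$. Write $\aordinal = \omega^n \cdot m + \aordinal'$ with $m \in \Nat$ and $\aordinal' < \omega^n$. If $m = 0$ the induction hypothesis gives $\abstraction{\arun} \in R_n' \subseteq R_{n+1}'$ directly. For $m \geq 1$, decompose $\arun$ into $m$ blocks, where the $j$-th block $\arun^{(j)}$ is the restriction of $\arun$ to positions in $[\omega^n \cdot j, \omega^n \cdot (j+1)]$ (length $\omega^n + 1$, for $j < m$), plus a tail $\arun^{(m)}$ restricted to $[\omega^n \cdot m, \aordinal]$ (length $\aordinal' + 1 < \omega^n + 1$). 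The induction hypothesis gives $\abstraction{\arun^{(m)}} \in R_n' \subseteq R_{n+1}$.

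The main obstacle is to show $\abstraction{\arun^{(j)}} \in R_{n+1}$ for each $j < m$, i.e.\ for a block of length $\omega^n + 1$ that closes with a limit transition. Set $\asetbis = \AL(\arun^{(j)})$, so that $\pair{\asetbis}{\arun^{(j)}(\omega^n)} \in \delta_{lim}$. Fix a cofinal $\omega$-sequence $\theta_k = \omega^{n-1} \cdot k$ in $\omega^n$ and color each pair $\{i,k\}$ with $i < k$ in $\Nat$ by the triple of endpoint locations $\arun^{(j)}(\theta_i), \arun^{(j)}(\theta_k)$ together with the intersection $\bigcap\{\arun^{(j)}(\eta) : \eta \in [\theta_i, \theta_k]\}$. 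Ramsey's theorem yields an infinite monochromatic subset, giving indices $i_0 < i_1 < \cdots$, a common location $\alocation^*$, and a common middle value $\asetter$. A cofinality argument, based on the definition of $\AL$ and the fact that $\omega^n$ has cofinality $\omega$, identifies $\asetter = \asetbis$: any element of $\asetter$ lies in $\arun^{(j)}(\eta)$ for all $\eta$ in a cofinal neighborhood of $\omega^n$, hence in $\asetbis$, and conversely any element of $\asetbis$ is eventually in every location, hence in $\ALL$ of some late loop, which equals $\asetter$. Applying the induction hypothesis to the initial segment (on positions in $[0, \theta_{i_0}]$, of length $< \omega^n$) and to a single loop (positions $[\theta_{i_0}, \theta_{i_1}]$, length $< \omega^n$) puts $\triple{\arun^{(j)}(0)}{A'}{\alocation^*}$ and $\triple{\alocation^*}{\asetbis}{\alocation^*}$ in $R_n'$. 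Clause~(2) of the definition of $R_{n+1}$ then gives $\triple{\arun^{(j)}(0)}{A' \cap \asetbis \cap \arun^{(j)}(\omega^n)}{\arun^{(j)}(\omega^n)} \in R_{n+1}$, and a direct calculation confirms this middle component equals $\ALL(\arun^{(j)})$. Chaining the $m+1$ triples $\abstraction{\arun^{(0)}}, \ldots, \abstraction{\arun^{(m)}}$ via the definition of $R_{n+1}'$ finally yields $\abstraction{\arun} \in R_{n+1}'$, using $\ALL(\arun) = \bigcap_{j \leq m} \ALL(\arun^{(j)})$.
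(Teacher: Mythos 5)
Your proof is correct and follows essentially the same route as the paper's: induction on $n$, a Ramsey argument on the positions $\omega^{n-1}\times i$ to extract a constant triple $\triple{\alocation^{\star}}{\asetbis}{\alocation^{\star}}$ identified with $\AL$ of the block, and clause (2.) of the definition of $R_{n+1}$ to absorb the limit transition. The only (harmless) differences are that you decompose explicitly into $m$ blocks where the paper instead reduces to the single case $\aordinal=\omega^n$ by closure of $R'_{n+1}$ under composition, and you spell out the cofinality argument identifying the Ramsey colour with $\AL$ that the paper leaves implicit.
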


\iflong
\begin{proof}
The proof is by induction on $n$.  
The base case $n = 0$ is immediate. In the induction step, let $\arun$ be a path of length $\aordinal
< \omega^{n+1}$. If $\aordinal < \omega^{n}$, by induction hypothesis 
 $\triple{\arun(0)}{\ALL(\arun)}{\arun(\aordinal)} \in R_{n}'$ and therefore 
 $\triple{\arun(0)}{\ALL(\arun)}{\arun(\aordinal)} \in R_{n+1}$ 
since $R_n' \subseteq R_{n+1}$. 
Now suppose that $\aordinal = \omega^n \times m + \aordinalbis$
with $\aordinalbis < \omega^n$ and $m > 0$.  In order to show that  
$\triple{\arun(0)}{\ALL(\arun)}{\arun(\aordinal)} 
\in R_{n+1}'$ 
it is sufficient to consider the case $\aordinal = \omega^n$.
Indeed, $R'_{n+1}$ is closed under composition, i.e. if
$\triple{\alocation_0}{A_0}{\alocation'_0} \in R'_{n+1}$ and
$\triple{\alocation_0'}{A_1}{\alocation'_1} \in R'_{n+1}$,
then $\triple{\alocation_0}{A_0 \cap A_1}{\alocation'_1} \in R'_{n+1}$. 
So, suppose that $\arun$ is of length $\omega^n + 1$. By induction hypothesis, for every
$0 \leq i < i'$, 
 $\triple{\arun(\omega^{n-1} \times i)}{A_{i,i'}}{\arun(\omega^{n-1} \times i')}
\in R'_{n}$ for some $A_{i,i'}$. 
By Ramsey's Theorem, there are $0 < i_0 < i_1 < \ldots $ such that
$\triple{\arun(\omega^{n-1} \times i_k)}{A_{i_k,i_{k+1}}}{\arun(\omega^{n-1} \times i_{k+1})}$
is the same for all $k \geq 0$. 
Let $j = i_0$ and $j' = i_1$. 
By induction hypothesis, 
$\triple{\arun(\omega^{n-1} \times j)}{A_{j,j'}}{\arun(\omega^{n-1} \times j')} \in R'_n$ 
since the length of $\arun_{[\omega^{n-1} \times j, \omega^{n-1} \times j']}$ is strictly less than
$\omega^{n}$. Moreover, we have  $A_{j,j'} = 
\AL(\arun,\omega^n)$. So,  there exist $\triple{\arun(0)}{A'}{\alocation''} \in R_n'$ 
($A' = A_{0,j}$, $\alocation'' =  \arun(\omega^{n-1} \times j)$),
                $\triple{\alocation''}{\asetbis}{\alocation''} \in R_n'$
($\asetbis = A_{j,j'}$)
 and
    a limit transition $\pair{\asetbis}{\arun(\omega^n)} \in \delta_{lim}$ 
          such that $A = A' \cap \asetbis \cap \arun(\omega^n)$.
Consequently, $\triple{\arun(0)}{\ALL(\arun)}{\arun(\aordinal)} \in R_{n+1}'$.

\end{proof}
\else
The proof can be found in Appendix~\ref{section-proof-lemma-nonemptiness-two}. 
\fi


\fi
We provide below a first complexity result. 

\begin{lem} \label{lemma-nonemptiness-exptime}
The nonemptiness problem for simple ordinal automata with acceptance conditions
can be checked in 
 exponential time
in $\card{\abasis}$. 
\end{lem}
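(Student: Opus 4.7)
The plan is to compute iteratively the relations $R_0, R_0', R_1, R_1', \ldots$ up to stabilisation, then decide nonemptiness from the fixed point by a case split on the shape of an accepting run.

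First I would observe that each $R_i$ lives in $Q \times \powerset{\abasis} \times Q$, whose cardinality is at most $2^{3\card{\abasis}}$. Since the sequence $(R_i)_{i \geq 0}$ is monotonically increasing, it stabilises at some index $N \leq 2^{3\card{\abasis}} + 1$. Alternatively, by the short-run lemma~\ref{lemma-mainproperty}, every realisable $\abstraction{\arun}$ is already witnessed by a path of length strictly less than $\omega^{\card{\abasis}+2}$, so by Lemmas~\ref{lemma-nonemptiness-one} and~\ref{lemma-nonemptiness-two} one may in fact take $N = \card{\abasis}+1$. Each step $R_i \mapsto R_i' \mapsto R_{i+1}$ is a standard fixed-point computation over a triple-set of exponential size and so runs in time polynomial in $2^{\card{\abasis}}$.

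Next I would reduce nonemptiness to a query on $R^{\star} := R_N'$. Writing $\arun$ for a putative accepting run, either (i) $\arun$ has successor length, in which case $\aautomaton$ accepts iff there exist $\alocation \in I$, $\alocation' \in F$ and $A \subseteq \abasis$ with $\triple{\alocation}{A}{\alocation'} \in R^{\star}$, which is immediate from Lemmas~\ref{lemma-nonemptiness-one} and~\ref{lemma-nonemptiness-two}; or (ii) $\arun$ has limit length, in which case $\AL(\arun) \in \mathcal{F}$ is required. For case (ii) the claim is that such an $\arun$ exists iff there are $\alocation \in I$, $\alocation'' \in Q$, $A \subseteq \abasis$ and $\asetbis \in \mathcal{F}$ with $\triple{\alocation''}{\asetbis}{\alocation''} \in R^{\star}$ and either $\alocation = \alocation''$ or $\triple{\alocation}{A}{\alocation''} \in R^{\star}$. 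The ``if'' direction concatenates a witness prefix with an $\omega$-iteration of the cycle, yielding a path of limit length with $\AL = \asetbis \in \mathcal{F}$. The ``only if'' direction uses a Ramsey-style argument as in the proof of Lemma~\ref{lemma-nonemptiness-two}: colour unordered pairs from an $\omega$-cofinal sequence of positions in $\arun$ by their abstraction, extract a monochromatic infinite subsequence, and the common colour provides a self-loop triple whose loop label equals $\AL(\arun)$.

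The main obstacle is handling limit-length acceptance cleanly, since limits are not end-positions and the relations $R_i$ only track successor-length prefixes. The Ramsey argument above resolves this, and the fact that the extracted loop label coincides with $\AL(\arun)$ ensures the acceptance condition $\asetbis \in \mathcal{F}$. Putting the pieces together, the dominant cost is the iterative construction of $R^{\star}$, and both the number of iterations and the cost per iteration are bounded by a polynomial in $2^{\card{\abasis}}$, yielding an algorithm running in time exponential in $\card{\abasis}$.
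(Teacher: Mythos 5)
Your proposal is correct and follows essentially the same route as the paper: iterate the relations $R_i$ to a fixed point reachable after polynomially many (in $\card{\abasis}$) rounds each costing time polynomial in $2^{\card{\abasis}}$, then split on whether the accepting run has successor or limit length, handling the former by a direct membership query $\triple{\alocation_0}{A}{\alocation_f} \in R^{\star}$ with $\alocation_0 \in I$, $\alocation_f \in F$, and the latter via a Ramsey extraction of a prefix plus a self-loop triple $\triple{\alocation}{\asetbis}{\alocation}$ with $\asetbis \in \mathcal{F}$. The only cosmetic differences are the precise cut-off index ($\card{\abasis}+1$ versus the paper's $\card{\abasis}+3$) and your explicit treatment of the empty-prefix case, neither of which changes the argument.
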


\proof
 Let $\aautomaton$ be of the form $\triple{\abasis}{Q, I, F,
\mathcal{F}}{\delta_{next},\delta_{lim}}$.
$\aautomaton$ has an accepting run iff either ($\mathbf{A}$) there are
$\alocation_0 \in I$, $\alocation_f \in F$ and 
 $A \subseteq \abasis$ such that $\triple{\alocation_0}{A}{\alocation_f} \in R_{n}'$  
for some $n$ or ($\mathbf{B}$) there are $\alocation_0 \in I$, and a
run $\arun$ from $\alocation_0$ such that $\AL(\arun)\in \mathcal{F}$. 
($\mathbf{A}$)
deals with accepting runs of length some successor ordinal,
whereas ($\mathbf{B}$) deals with accepting runs of length some limit
ordinal.

In order to check ($\mathbf{A}$), it
is sufficient to test for
 $\triple{\alocation_0}{A}{\alocation_f} \in I \times \powerset{\abasis} \times F$ 
 whether
 $\triple{\alocation_0}{A}{\alocation_f} \in R_{\card{\abasis} +
 3}' \subseteq R_{\card{\abasis} + 4}$. Since $\card{Q}$ is in
 $\mathcal{O}(2^{\card{\abasis}})$, computing
 $R_{\card{\abasis}+ 4}$ takes $\card{\abasis}+ 4$ steps that
requires polynomial time in $\length{\aautomaton}$ and exponential
time in $\card{\abasis}$, we obtain the desired result. Observe
that we can take advantage of the fact that computing the
transitive closure of a relation and the maximal strongly
connected components  can be done in polynomial time in the size
of the relations.

 By Ramsey's theorem,  ($\mathbf{B}$) is equivalent to the 
following condition:
  there are  $\alocation \in Q$, 
 $A \subseteq \abasis$, $A' \in \mathcal{F}$ and runs  $r_1$ and $r_2$
  such that $\abstraction{r_1} = \triple{\alocation_0}{A}{\alocation}$ and
 $\abstraction{r_2} = \triple{\alocation}{A'}{\alocation}$.

Hence. in order to check these, it   is enough to check whether
there are $\alocation_0 \in I$, $\alocation \in Q$ and 
 $A \subseteq \abasis$ such that
 $\triple{\alocation_0}{A}{\alocation} \in R_{\card{\abasis} +
 3}'$, $\triple{\alocation}{A'}{\alocation} \in
 R_{\card{\abasis} + 3}'$ and $A' \in \mathcal{F}$. 
This can be done  in  exponential time as for  ($\mathbf{A}$).\qed

As a corollary of Lemma~\ref{lemma-nonemptiness-exptime}, satisfiability for $\mainlogic$
is in $\exptime$. Moreover, this can be improved as shown in the proof
of Theorem~\ref{theorem-pspace} presented in Section~\ref{section-complexity}.

\iflpar
\else \subsection{A polynomial-space algorithm}
\label{section-polynomial-space}
\fi 

\iflong
We improve below the bound in Lemma~\ref{lemma-nonemptiness-exptime} by taking advantage
that the recursive depth is linear and  only paths of at most exponential length need
to be computed. 
\else
We improve below the bound in Lemma~\ref{lemma-nonemptiness-exptime}. 
\fi 

\begin{thm} \label{theorem-nonemptiness-pspace}
The nonemptiness problem for simple ordinal automata
can be checked in polynomial space in $\card{\abasis}$. 
\end{thm}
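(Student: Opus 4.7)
The plan is to design a recursive alternating procedure that implements the fixed-point characterization of $R_n$ and $R'_n$, and combine it with the short-run bound from Lemma~\ref{lemma-pumping}(III). By that lemma together with Lemma~\ref{lemma-nonemptiness-one} and Lemma~\ref{lemma-nonemptiness-two}, every accepting run can be shortened so that the witnessing triples lie in $R'_N$ with $N = \card{\abasis} + 3$. Acceptance then reduces to: either (A) guess $\alocation_0 \in I$, $\alocation_f \in F$ and $A \subseteq \abasis$ and check $\triple{\alocation_0}{A}{\alocation_f} \in R'_N$, or (B) guess $\alocation_0 \in I$, $\alocation \in Q$, $A \subseteq \abasis$ and $A' \in \mathcal{F}$ and check both $\triple{\alocation_0}{A}{\alocation} \in R'_N$ and $\triple{\alocation}{A'}{\alocation} \in R'_N$. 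Since $N$ is linear in $\card{\abasis}$, it suffices to test membership in $R_n$ and $R'_n$ for $n \leq N$ using space polynomial in $\card{\abasis}$.

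I would introduce two mutually recursive Boolean procedures $\textsc{InR}_n(\alocation, A, \alocation')$ and $\textsc{InR}'_n(\alocation, A, \alocation')$. The base case $n = 0$ for $\textsc{InR}_0$ is a direct inspection of $\delta_{next}$ and the condition $A = \alocation \cap \alocation'$. For $n > 0$, $\textsc{InR}_n$ either calls $\textsc{InR}'_{n-1}(\alocation, A, \alocation')$, or nondeterministically guesses $\alocation''$, $A'$, $\asetbis$, checks $\pair{\asetbis}{\alocation'} \in \delta_{lim}$ and the identity $A = A' \cap \asetbis \cap \alocation'$, then recurses on $\textsc{InR}'_{n-1}(\alocation, A', \alocation'')$ and $\textsc{InR}'_{n-1}(\alocation'', \asetbis, \alocation'')$. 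Each recursive call decreases $n$ by one; the local frame holds only a constant number of triples of size $O(\card{\abasis})$.

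The delicate step is evaluating $\textsc{InR}'_n$, which corresponds to the compositional closure of $R_n$. The idea is to view it as reachability in an implicit graph whose nodes are pairs $(\alocation, A) \in Q \times \powerset{\abasis}$, with an edge $(\alocation_1, A_1) \to (\alocation_2, A_1 \cap A'')$ whenever $\triple{\alocation_1}{A''}{\alocation_2} \in R_n$; the task is to reach $(\alocation', A)$ from $(\alocation, \abasis)$. Because the second component only shrinks along any path and there are at most $|Q| \cdot 2^{\card{\abasis}} \leq 2^{2\card{\abasis}}$ distinct pairs, any witnessing walk can be shortened to have length at most $2^{2\card{\abasis}}$. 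One therefore traverses this graph nondeterministically, storing only the current pair $(\alocation_\text{cur}, A_\text{cur})$, the guessed next triple $\triple{\alocation_\text{cur}}{A''}{\alocation_\text{next}}$, and a binary counter bounded by $2^{2\card{\abasis}}$, verifying at each step that $\triple{\alocation_\text{cur}}{A''}{\alocation_\text{next}} \in R_n$ by recursively invoking $\textsc{InR}_n$. The counter uses $O(\card{\abasis})$ bits, and each frame contributes $O(\card{\abasis})$ additional workspace.

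Putting these pieces together, the overall recursion has depth $O(\card{\abasis})$ (the parameter $n$ strictly decreases when passing from $\textsc{InR}_n$ to $\textsc{InR}'_{n-1}$), each frame uses $O(\card{\abasis})$ tape, so the whole nondeterministic procedure runs in $O(\card{\abasis}^2)$ space. Savitch's theorem converts the resulting NPSPACE bound to PSPACE, yielding the theorem. The main obstacle is precisely the handling of $R'_n$: since the graph of pairs has exponentially many nodes, we cannot enumerate it, and the correctness of the polynomial-size counter relies on the monotone decrease of $A$ along composition sequences, which ensures that a witness of exponentially bounded length always exists when one exists at all.
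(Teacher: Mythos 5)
Your proposal is correct and follows essentially the same route as the paper: the paper's algorithm PATH is exactly your recursive nondeterministic procedure, guessing the compositional chains for $R'_n$ on-the-fly while keeping only the current triple, an accumulated intersection, and an $O(\card{\abasis})$-bit counter, with recursion depth linear in $\card{\abasis}$ and a final appeal to Savitch's theorem. Your justification of the exponential bound on the chain length via the monotone shrinking of the accumulated set $A$ is a slightly more explicit version of the paper's cutoff $m < 2^{3 \times \card{\abasis}+1} + 1$, but the argument is the same in substance.
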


\iflong
\begin{proof}
\iflpar 
Following the proof of Lemma~\ref{lemma-nonemptiness-exptime},
\else By Lemma~\ref{lemma-nonemptiness-one} and Lemma~\ref{lemma-nonemptiness-two} and 
by the fact that for all $n \geq 0$, we have $R_n \subseteq R_{\card{\abasis} + 4}$,
we obtain that  
\fi 
$\aautomaton$ has an accepting run iff 
($\mathbf{A}$) there are $\alocation_0 \in I$, $\alocation_f \in F$ and 
$A \subseteq \abasis$ such that $\triple{\alocation_0}{A}{\alocation_f} \in R_{\card{\abasis}+ 3}$ 
or  
($\mathbf{B}$) there are $\alocation_0 \in I$, $\alocation \in Q$ and 
$A' \subseteq \abasis$ such that $\triple{\alocation_0}{A'}{\alocation} \in R_{\card{\abasis}+ 4}$, 
$\triple{\alocation}{A'}{\alocation} \in 
R_{\card{\abasis}+ 4}$ and $A' \in \mathcal{F}$. 
$\abasis$ denotes the basis of $\aautomaton$. 


The function PATH defined in Figure~\ref{figure-algorithm} checks recursively whether a triple belongs to $R_N$.
Typically, the specification is that there exists an accepting computation for
PATH($\aautomaton, \triple{\alocation}{A}{\alocation'}, N$) 
 iff $\triple{\alocation}{A}{\alocation'} \in R_N$ for the ordinal automaton $\aautomaton$.
It takes into account that the number of potential triples in $R_N$ is bounded. Observe that the algorithm is nondeterministic and any
guess that breaks some condition somewhere aborts the computation.

In order to check ($\mathbf{A}$), 
the  non-deterministic
algorithm  guesses 
$q_0 \in I$, $q_f \in F$ and 
$A \subseteq \abasis$ 
(encoded in polynomial space
in $\mathcal{O}(\card{\abasis})$ and test whether 
\begin{center}
PATH($\aautomaton,\triple{q_0}{A}{q_f}, \card{\abasis} + 4)$
\end{center}
returns {\tt true}. 
Condition ($\mathbf{B}$) admits a similar treatment. 
The non-deterministic algorithm PATH defined below works in polynomial space
in $\card{\abasis}$ assuming that the last argument is polynomial in $\card{\abasis}$ which is
the case with $\card{\abasis} + 4$. 

\begin{figure}
PATH($\aautomaton, \triple{\alocation}{A}{\alocation'}, N$)
\begin{itemize}
\itemsep 0 cm
\item If $N = 0$ then  (if  (either  
      $A \neq \alocation \cap \alocation'$ or $\pair{\alocation}{\alocation'} 
      \not \in \delta_{next}$) then {\tt abort} else return  {\tt true});
\item If $N > 0$ then go non-deterministically to  1. or 2.
      \begin{description}
      \itemsep 0 cm
      \item[(1.)] Guess on-the-fly a sequence 
                  $$\triple{\alocation_0}{A_0}{\alocation_1},  
           \triple{\alocation_1}{A_1}{\alocation_2}, \ldots,  
                 \triple{\alocation_m}{A_m}{\alocation_{m+1}}$$
                 such that
                 \begin{itemize}
                 \itemsep 0 cm
                 \item $m < 2^{3 \times \card{\abasis}+1} + 1$,
                 \item for $0 \leq i \leq m$, 
                       PATH($\aautomaton, 
                       \triple{\alocation_i}{A_i}{\alocation_{i+1}}, N-1$) returns {\tt true},
                 \item $A = \bigcap_j A_j$
                 \item $\alocation = \alocation_0$, $\alocation' = \alocation_{m+1}$;
                 \end{itemize}
        \item[(2.)] We guess here two long sequences:
                    \begin{description}
                    \item[(2.1)] Guess on-the-fly a sequence 
                  $$\triple{\alocation_0}{A_0}{\alocation_1},  
           \triple{\alocation_1}{A_1}{\alocation_2}, \ldots,  
                 \triple{\alocation_m}{A_m}{\alocation_{m+1}}$$
                 such that
                 \begin{itemize}
                 \itemsep 0 cm
                 \item $m < 2^{3 \times \card{\abasis}+1} + 1$,
                 \item for $0 \leq i \leq m$, 
                       PATH($\aautomaton, 
                       \triple{\alocation_i}{A_i}{\alocation_{i+1}}, N-1$) returns {\tt true},
                 \item $A' = \bigcap_j A_j$;
                 \item $\alocation_0 = \alocation$;
                 \end{itemize}
                    \item[(2.2)]
                    \iflpar 
                    Guess a limit transition $\pair{\asetbis}{\alocation'} \in \delta_{lim}$ and
                          on-the-fly a sequence 
                          $\triple{\alocation_0'}{E_0',A_0'}{\alocation_1'},  
           \triple{\alocation_1'}{E_1',A_1'}{\alocation_2'}, \ldots,  
                 \triple{\alocation_m'}{E_{m'}',A'_{m'}}{\alocation_{m'+1}'}$
                 such that
                     \else
                     Guess a limit transition $\pair{\asetbis}{\alocation'} \in \delta_{lim}$ and
                          on-the-fly a sequence 
                          $$\triple{\alocation_0'}{A_0'}{\alocation_1'},  
           \triple{\alocation_1'}{A_1'}{\alocation_2'}, \ldots,  
                 \triple{\alocation_m'}{A'_{m'}}{\alocation_{m'+1}'}$$
                 such that
                     \fi
                 \begin{itemize}
                 \itemsep 0 cm
                 \item $m' < 2^{3 \times \card{\abasis}+1}$,
                 \item for $0 \leq i \leq m'$, 
                       PATH($\aautomaton, 
                       \triple{\alocation_i'}{A_i'}{\alocation_{i+1}'}, N-1$) returns {\tt true},
                 \iflpar
                 \item $A = 
                             (A' \cap \alocation_{m'+1}') \cap \bigcap_j A_j'$,
                       $\asetbis = \bigcap_j A_j'$,
                       $\alocation_0' = \alocation_{m+1}$;
                  \else
                  \item $A = 
                             (A' \cap \alocation_{m'+1}') \cap \bigcap_j A_j'$,
                       $\asetbis = \bigcap_j A_j'$,
                 \item $\alocation_0' = \alocation_{m+1}$;
                  \fi
                 \end{itemize}
                    \end{description}
                  
      \end{description}
\item Return  {\tt true}.
\end{itemize}
\caption{Algorithm PATH}
\label{figure-algorithm}
\end{figure}

\noindent In (1.), guessing on-the-fly a long sequence means that only two consecutive triples are kept
in memory at any time. We introduce a counter that will guarantee that
$m < 2^{3 \times \card{\abasis}+1}$ and it requires only space in
$\mathcal{O}(\card{\abasis})$.  Moreover, in order to check
$A = \bigcap_j A_j$ we need two auxiliary
variables that bookkeep the 
$A_j$ computed so far. Similar techniques
are used in (2.) to guarantee that this non-deterministic algorithm requires
only polynomial space in  $\mathcal{O}(\card{\abasis} + N)$ (we only need 
more variables and steps). 
It is straightforward to show that
 PATH($\aautomaton, \triple{\alocation}{A}{\alocation'}, N$)
has a computation that returns {\tt true} (all the guesses were correct) iff
$\triple{\alocation}{A}{\alocation'} \in R_{N}$.
Finally, by using Savitch Theorem~\cite{Savitch70}, we can conclude that nonemptiness can be checked in 
deterministic polynomial space in $\card{\abasis}$. 

\end{proof}
\else
\iflpar
\begin{proof}

\end{proof}
\else The proof can be found in Appendix~\ref{section-proof-lemma-nonemptiness-pspace}. 
\fi 
\fi 

Observe that the algorithm in the proof of Theorem~\ref{theorem-nonemptiness-pspace}
runs in space $\mathcal{O}(\card{\abasis} \times (\card{\abasis} + log \ (\card{\locations}) +
log \ (\card{\delta_{lim}}) + log \ (\card{\delta_{next}}))$. 
Indeed,  the recursive depth is in $\mathcal{O}(\card{\abasis})$. 
This is certainly sufficient
to get forthcoming results about the complexity of $\mainlogic$. Nevertheless, 
 the exact complexity 
characterization of the nonemptiness problem is open. It seems unlikely that the problem can be solved
in \nlogspace.

\newcommand{\tuple}[1]{\left(#1\right)}
\def\om{\omega}
\newtheorem{observation}{Observation}

\section{Complexity of Satisfiability Problems}
\label{section-complexity}

\noindent We establish new complexity results for problems related
to $\mainlogic$ satisfiability which follow from the intermediate results we have established
so far.

%
%

\subsection{Complexity of \texorpdfstring{$\mainlogic$}{LTL(U,S)}}


Here is the main result of the paper.

\begin{thm} \label{theorem-pspace}
The satisfiability problem for $\mainlogic$ over the class of
ordinals is \pspace-complete.
\end{thm}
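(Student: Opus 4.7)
The proof splits into a matching lower and upper bound.

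For the lower bound, the plan is to reduce from the $\omega$-satisfiability problem for $\mainlogic$, which is already \pspace-hard by Sistla--Clarke. The reduction conjoins the input with a fixed short formula that pins the length of the model to $\omega$. A natural choice is
\[
\Psi_\omega \egdef \always \next \top \; \wedge \; \strictalways \previous \top.
\]
Since $\next \top$ holds at a position $\aordinalbis$ iff $\aordinalbis+1$ exists in the model, $\always \next \top$ forces the length to be a limit ordinal. Dually, $\previous \top$ holds at $\aordinalbis > 0$ iff $\aordinalbis$ is a successor ordinal, so $\strictalways \previous \top$ forbids any internal limit position. The only ordinal satisfying both constraints is $\omega$, and any $\omega$-model trivially satisfies $\Psi_\omega$. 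Hence $\aformula$ is $\omega$-satisfiable iff $\aformula \wedge \Psi_\omega$ is satisfiable over the class of ordinals, and this is a polynomial-time (in fact constant-overhead) reduction.

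For the upper bound, the plan is to combine the translation of Section~\ref{section-translation-to-automata} with Theorem~\ref{theorem-nonemptiness-pspace}. Given $\aformula$, Lemma~\ref{lemma-correctness}(III) says $\aformula$ is satisfiable iff $\aautomaton_\aformula$ has an accepting run, and by construction the basis $\abasis$ of $\aautomaton_\aformula$ has cardinality exactly $\length{\aformula}$. Theorem~\ref{theorem-nonemptiness-pspace} then yields a nonemptiness procedure running in space polynomial in $\card{\abasis} = \length{\aformula}$.

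The main subtlety to address is that $\aautomaton_\aformula$ itself has size exponential in $\length{\aformula}$ (locations are maximally Boolean consistent subsets of $\subf{\aformula}$, and the transition relations $\delta_{next}, \delta_{lim}$ are correspondingly large), so one cannot afford to materialize it. The plan is to run PATH on $\aautomaton_\aformula$ with on-the-fly access: each location is a subset of $\abasis$ of size $\length{\aformula}$, hence stored in polynomial space, and membership in $Q$, $I$, $F$, $\mathcal{F}$, $\delta_{next}$, $\delta_{lim}$ can each be checked in polynomial time directly from the syntactic definitions, without listing the relations explicitly. Inspecting PATH, the recursion depth is $O(\length{\aformula})$, each stack frame stores a constant number of triples of polynomial size together with counters of $O(\length{\aformula})$ bits (to bound the guessed sequences of length up to $2^{3 \card{\abasis}+1}$ that are traversed two triples at a time), so the overall space is polynomial in $\length{\aformula}$. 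An appeal to Savitch's theorem, already used inside Theorem~\ref{theorem-nonemptiness-pspace}, converts this nondeterministic procedure into a deterministic \pspace \ algorithm, matching the lower bound.
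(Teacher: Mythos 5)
Your proposal is correct and follows essentially the same route as the paper: the upper bound is obtained, exactly as in the paper's proof, by running the PATH procedure of Theorem~\ref{theorem-nonemptiness-pspace} on $\aautomaton_{\aformula}$ with locations and the relations $\delta_{next}$, $\delta_{lim}$ computed on demand (the basis having cardinality linear in $\length{\aformula}$), followed by Savitch's theorem, while your length-pinning formula $\Psi_\omega$ is a correct concrete instantiation of the paper's one-line remark that the lower bound is inherited from LTL over $\omega$-models. The only step the paper makes explicit that you leave implicit is the preliminary appeal to Theorem~\ref{theorem-preliminariesbis}(II) restricting attention to models of countable length, which is harmless since the automaton machinery is applied to countable runs anyway.
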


\begin{proof} By Theorem~\ref{theorem-preliminariesbis}(II), 
a formula is satisfiable iff it is satisfiable on some model of countable length. 
By Lemma~\ref{lemma-correctness}, given
a formula $\aformula$ in $\mainlogic$, there is an automaton
$\aautomaton_{\aformula}$ whose accepting runs correspond exactly to models of $\aformula$.
In order to check nonemptiness of $\aautomaton_{\aformula}$, we do not build it explicitly (as usual)
but we run the algorithm from the proof of Theorem~\ref{theorem-nonemptiness-pspace}
and we compute the locations, and transition relations of $\aautomaton_{\aformula}$
on demand. Hence, we obtain a polynomial space non-deterministic algorithm since
the cardinality of the basis of $\aautomaton_{\aformula}$ is in $\mathcal{O}(\length{\aformula})$
and checking whether a subset of $\abasis$ is a location of
$\aautomaton_{\aformula}$ or
$\pair{\alocation}{\alocation'} \in \delta_{next}$
or $\pair{\asetbis}{\alocation} \in \delta_{lim}$ can be done in
polynomial space in  $\mathcal{O}(\length{\aformula})$.
Again by Savitch Theorem~\cite{Savitch70}, we get that the satisfiability problem  for $\mainlogic$
is in \pspace. The \pspace \ lower bound can be easily shown inherited from LTL.
\end{proof}

\iflpar
\else
Our procedure to show the \pspace \ upper bound is not optimal and it is subject to
many refinements but it is sufficient for our needs. For instance, it is possible
to have as a base set for $\aautomaton_{\aformula}$ the subset of $\subf{\aformula}$ made
 of until or since subformulae and propositional variables.
Indeed, the implicit presence of other subformulae can be deduced
thanks to maximal consistency. This refinement  possibly decreases
the length of the small models.
\fi

Due to Kamp's Theorem~\cite{Kamp68}, we get the following corollary.

\begin{cor} \label{corollary-pspace}
 Let $\ltl(\until, \since, \anoperator_1, \ldots,
\anoperator_k)$ be an extension of $\mainlogic$ with $k$
first-order definable temporal operators.

Then the satisfiability
problem for the logic $\ltl(\until, \since, \anoperator_1, \ldots,
\anoperator_k)$ over the class of ordinals is in \pspace.
\end{cor}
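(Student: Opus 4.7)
The plan is to reduce satisfiability in the extended logic to satisfiability in $\mainlogic$, via a translation that is polynomial thanks to the DAG representation of formulae adopted earlier.

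By Kamp's theorem (Theorem~\ref{theorem-preliminaries}/part~(I) of the first preliminary theorem), $\mainlogic$ is expressively complete over the class of ordinals for first-order monadic logic of order. Hence, for each $i \in \{1,\ldots,k\}$, the first-order definable temporal operator $\anoperator_i$, whose arity is some fixed $n_i$, is equivalent to a fixed $\mainlogic$ formula $\aformula_{\anoperator_i}(\avarprop_1,\ldots,\avarprop_{n_i})$ built over fresh propositional variables $\avarprop_1,\ldots,\avarprop_{n_i}$. Crucially, these definitions depend only on the logic $\ltl(\until,\since,\anoperator_1,\ldots,\anoperator_k)$, not on the input formula: each $\aformula_{\anoperator_i}$ has a constant size $c_i$, and we set $c = \max_i c_i$.

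Given an input formula $\aformula$ of the extended logic, I would define its translation $\aformula^\sharp$ into $\mainlogic$ by induction on the DAG structure of $\aformula$: Boolean and $\until,\since$ nodes are preserved, while each occurrence of $\anoperator_i(\aformulabis_1,\ldots,\aformulabis_{n_i})$ is replaced by $\aformula_{\anoperator_i}[\avarprop_j \mapsto \aformulabis_j^\sharp]$, substituting the already-translated subformulae. Because we work over DAGs, each subformula of $\aformula$ is translated only once, so the number of nodes in $\aformula^\sharp$ is bounded by $c \cdot \length{\aformula}$, i.e.\ polynomial in $\length{\aformula}$ (here $k$ and the $c_i$'s are constants of the logic). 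By construction, $\aformula$ and $\aformula^\sharp$ are satisfied in exactly the same $\aordinal$-models, so $\aformula$ is satisfiable iff $\aformula^\sharp$ is. Applying Theorem~\ref{theorem-pspace} to $\aformula^\sharp$ then yields the \pspace\ upper bound.

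The only delicate point is the polynomial size bound on $\aformula^\sharp$: under the tree representation, nesting of $\anoperator_i$-operators could trigger an exponential blow-up, since the $n_i$ arguments may each occur several times inside $\aformula_{\anoperator_i}$. The DAG representation, explicitly advocated in Section~\ref{section-definition} precisely for this kind of translation, sidesteps the issue by sharing identical subtrees. A careful bookkeeping of the translation (e.g.\ a memoised recursion over the DAG of $\aformula$) is enough to make this rigorous; one may even avoid building $\aformula^\sharp$ explicitly and instead compute the automaton $\aautomaton_{\aformula^\sharp}$ on-the-fly, as already done in the proof of Theorem~\ref{theorem-pspace}.
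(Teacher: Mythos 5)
Your proposal is correct and follows essentially the same route as the paper: invoke Kamp's theorem to replace each fixed first-order definable operator $\anoperator_i$ by a constant-size $\mainlogic$ definition, observe that the DAG encoding keeps the resulting translation polynomial in $\length{\aformula}$, and then apply Theorem~\ref{theorem-pspace}. Your write-up merely spells out the memoised substitution and the size bookkeeping that the paper leaves implicit.
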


Indeed, every formula $\anoperator_i(\avarprop_1,
\ldots, \avarprop_{n_i})$ encoded as a DAG can be translated into an equivalent
 formula in $\mainlogic$ encoded as a DAG over the propositional
variables $\avarprop_1, \ldots, \avarprop_{n_i}$. Since
$\anoperator_1, \ldots, \anoperator_k$ and their definition in
$\mainlogic$ are constants of $\ltl(\until, \since, \anoperator_1,
\ldots, \anoperator_k)$, we obtain a translation in polynomial-time
(with our definition for the size of formulae).

\subsection{A family of satisfiability problems}

The satisfiability problem for $\mainlogic$ asks for the existence
of a model for a given formula. A natural variant of this problem
consists in fixing the length of the models in advance as for LTL.  The
satisfiability problem for $\mainlogic$ over $\aordinal$-models,
noted $\SAT(\aordinal,\mainlogic)$, is defined as follows:
\iflpar
given a formula  $\aformula$ in $\mainlogic$, is 
$\aformula$ satisfiable over an $\aordinal$-model?
\else 
\begin{description}
\itemsep 0 cm
\item[input:] a formula $\aformula$ in $\mainlogic$;
\item[question:] Is  $\aformula$ satisfiable over an $\aordinal$-model?
\end{description}
\fi 
In this subsection we prove 
that $\SAT(\aordinal,\mainlogic)$ is in \pspace \ for every
countable ordinal $\aordinal$.
\newcommand{\Def}{\mathit{Def}}
First we consider  the case of ordinals strictly less than $\omega^\omega$. Let us establish 
that for every $\aordinal <\omega^\omega$ there is a
formula $\defordinal{\aordinal}$ in $\mainlogic$  with the truth constant
$\top$ (no propositional variable) such that
for every $\aordinalbis$-model $\amodel$,
we have $\amodel, 0 \models  \defordinal{\aordinal}$ iff
$\aordinalbis = \aordinal$.


\begin{lem} \label{lemma-length}
Given an ordinal $0 < \aordinal = \omega^{k_1} a_{k_1} + \cdots 
\omega^{k_m} a_{k_m} < \omega^{\omega}$ with $ k_1 > \ldots > k_m \geq 0$ and 
$a_{k_1}, \ldots a_{k_m} \in \Nat \setminus \set{0}$, there is a formula 
$\defordinal{\aordinal}$ in $\mainlogic$
of linear size in $\sum_i (k_i \times a_{k_i})$  such that for any model
$\amodel$, we have
$\amodel, 0 \models \defordinal{\aordinal}$ iff $\amodel$ is of length $\aordinal$. 
\end{lem}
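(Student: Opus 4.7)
The plan is to build $\defordinal{\aordinal}$ in two layers: I first introduce, for each $k \geq 0$, an auxiliary formula $L_k$ that holds at a position $\aordinalbis$ exactly when $\omega^k$ divides $\aordinalbis$; then I define, by induction on $\aordinal$ guided by its Cantor normal form (CNF), a formula $\theta_\aordinal$ saying ``the suffix starting here has order type $\aordinal$'', and take $\defordinal{\aordinal} := \theta_\aordinal$.

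For the level predicates I set $L_0 = \top$ and $L_1 = \neg\previous\top$, so that $L_1$ holds precisely at $0$ and at the limits, i.e.\ at the multiples of $\omega$. For $k \geq 1$, since $\omega^{k+1} \mid \aordinalbis$ iff $\omega^k \mid \aordinalbis$ and the $\omega^k$-multiples are cofinal strictly below $\aordinalbis$, I take
\[
L_{k+1} \;=\; L_k \,\wedge\, \neg(\neg L_k \since \top),
\]
where the second conjunct says that no open interval $(\aordinalter, \aordinalbis)$ is entirely free of $L_k$-points. A short transfinite induction on the CNF of $\aordinalbis$ confirms that $L_k$ has the intended meaning, and in DAG representation the whole family $L_0,\ldots,L_{k_1}$ contributes $O(k_1)$ shared subformulae.

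For the suffix formula I proceed by induction on $\aordinal$. The base cases are $\theta_n = \next^{n-1}\neg\next\top$ for finite $n \geq 1$ and, for $k \geq 1$,
\[
\theta_{\omega^k} \;=\; \always\next\top \,\wedge\, \bigwedge_{j=1}^{k-1}\always\strictsometimes L_j \,\wedge\, \strictalways\neg L_k,
\]
asserting that the suffix has no maximum, that level-$j$ markers are cofinal in the suffix for each $1 \leq j < k$, and that no strict-future position is a level-$k$ marker. For the inductive step, suppose $\aordinal > \omega^k$ where $k$ is the largest exponent in the CNF of $\aordinal$; putting $\aordinal' := \aordinal - \omega^k > 0$, I define
\[
\theta_\aordinal \;=\; \neg L_k \;\until\; (L_k \,\wedge\, \theta_{\aordinal'}).
\]
The recursion strictly decreases $\aordinal$ and switches to a smaller exponent once a full $\omega^{k_i} a_{k_i}$-block has been peeled off, so it terminates at one of the base cases.

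Correctness is a transfinite induction on $\aordinal$. The key arithmetic fact, which is CNF absorption ($r + \omega^k = \omega^k$ for $r < \omega^k$), is that for every ordinal $\aordinalbis$ the least $\aordinalter > \aordinalbis$ with $\omega^k \mid \aordinalter$ equals $\aordinalbis + \omega^k$. Hence the until-formula at $\aordinalbis$ forces the marker $\aordinalbis + \omega^k$ to lie in $\amodel$, and the inductive hypothesis applied at that marker forces the remaining suffix to have length $\aordinal - \omega^k$; the total is then $\omega^k + (\aordinal - \omega^k) = \aordinal$. The converse direction and the pure-power base case are verified directly from the same arithmetic fact by inspecting cofinality of $L_j$-points and the absence of a strict-future $L_k$-point. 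Counting nodes in the DAG (each peeling adds constantly many new subformulae on top of the shared $L_k$; the pure-power base costs $O(k_1)$; the finite tail is handled by a $\next$-chain) yields a formula of the claimed linear size. The main subtlety is to arrange the level predicates and the peeling step so that they behave correctly not only at position $0$ but at every intermediate marker $\aordinalbis + \omega^{k_i} + \cdots$ visited during the recursion; once the ``next multiple of $\omega^k$'' identity above is in hand, the remaining verification is routine bookkeeping.
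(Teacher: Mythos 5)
Your proposal is correct and follows essentially the same route as the paper's proof: level predicates $L_k$ (the paper's $\varphi_k$) defined by a Since-formula expressing cofinality of $\omega^{k-1}$-multiples, followed by a Cantor-normal-form peeling recursion $\neg L_k \until (L_k \wedge \theta_{\alpha-\omega^k})$ with pure-power and finite base cases, all resting on the identity that the least multiple of $\omega^k$ above $\beta$ is $\beta+\omega^k$. The only differences are cosmetic (e.g.\ $\since\top$ versus $\since L_k$ in the level predicates, and keeping all cofinality conjuncts $j<k$ where the paper uses only $j=k-1$).
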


\iflpar
\else
\begin{proof} We define a family $(\aformulater_i)_{i \geq 0}$ such that
for all $\aordinal$-models $\amodel$  and $\aordinalbis < \aordinal$,
we have $\amodel, \aordinalbis \models \aformulater_i$ iff $\aordinalbis$ is a multiple of
$\omega^i$. We set $\aformulater_0 = \top$ and by induction
$\aformulater_{i+1} = \aformulater_i \wedge \neg (\neg \aformulater_i \since \aformulater_i)$.
Observe that $\length{\aformulater_i}$ is polynomial in $i$ since it is defined as 
the cardinality
of $\subf{\aformulater_i}$. 
Now let us define $\defordinal{\aordinal}$  as $t(\aordinal)$ defined recursively below:
\begin{enumerate}[$\bullet$]
\item $t(1) = \neg \strictsometimes \top$, $t(n) = \next t(n-1)$ for $n >1$,
\item $t(\omega^{k_1} a_{k_1} + \cdots +
\omega^{k_m} a_{k_m}) = \neg \aformulater_{k_1} \until (\aformulater_{k_1} \wedge 
      t(\omega^{k_1} (a_{k_1} -1) + \cdots + 
\omega^{k_m} a_{k_m}))$ with $k_1 > 0$ and ($a_{k_1} \geq 2$ or $m > 1$),
\item $t(\omega) = \strictalways \previous \top \wedge \strictsometimes \top \wedge \strictalways \next \top$,
\item $t(\omega^{k_1}) = \strictalways \neg \aformulater_{k_1} \wedge
                         \always \strictsometimes \aformulater_{k_1-1}$ with $k_1 > 1$.
\end{enumerate}
The size of $\defordinal{\aordinal}$ 
is in $\mathcal{O}(\sum_i (k_i \times a_{k_i}))$. 
\end{proof}

We are now in position to state the following result. 

\begin{cor} \label{theorem-sat-family}
For every $\aordinal<\omega^\omega$, the problem
$\SAT(\aordinal,\mainlogic)$
is in  \pspace.
\end{cor}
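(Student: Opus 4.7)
The plan is to reduce $\SAT(\aordinal,\mainlogic)$ to the general satisfiability problem for $\mainlogic$, using the defining formula $\defordinal{\aordinal}$ from Lemma~\ref{lemma-length} as a ``harness'' that forces any model to have length exactly $\aordinal$. Since $\aordinal$ is fixed (it is not part of the input), the formula $\defordinal{\aordinal}$ has constant size, so the reduction is trivially polynomial.

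Concretely, given an input formula $\aformula$, I would form $\aformulabis \egdef \aformula \wedge \defordinal{\aordinal}$. By Lemma~\ref{lemma-length}, for any model $\amodel$, $\amodel,0 \models \defordinal{\aordinal}$ iff $\amodel$ has length $\aordinal$. Hence $\amodel,0 \models \aformulabis$ iff $\amodel$ is an $\aordinal$-model and $\amodel,0 \models \aformula$. Consequently, $\aformula$ is satisfiable over some $\aordinal$-model iff $\aformulabis$ is satisfiable (in the general sense, over the class of ordinals).

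It then suffices to invoke Theorem~\ref{theorem-pspace}: we run the \pspace \ procedure for general $\mainlogic$ satisfiability on $\aformulabis$. Since $\defordinal{\aordinal}$ depends only on the fixed ordinal $\aordinal$ (and not on $\aformula$), we have $\length{\aformulabis} = \length{\aformula} + \mathcal{O}(1)$, so the procedure uses space polynomial in $\length{\aformula}$. There is no real obstacle here beyond observing that the reduction preserves the DAG-based succinct encoding of formulae; the key work has already been done in Lemma~\ref{lemma-length} (to build $\defordinal{\aordinal}$) and in Theorem~\ref{theorem-pspace} (to decide satisfiability in \pspace).
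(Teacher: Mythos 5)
Your proposal is correct and is essentially identical to the paper's own proof: conjoin $\aformula$ with the length-defining formula $\defordinal{\aordinal}$ from Lemma~\ref{lemma-length} and apply the general \pspace \ decision procedure of Theorem~\ref{theorem-pspace}. Your added remark that $\defordinal{\aordinal}$ has constant size because $\aordinal$ is fixed is a correct (and welcome) explicit justification of a point the paper leaves implicit.
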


\begin{proof}
 $\aformula$ has a $\aordinal$-model
 iff
 $\aformulabis =
      \aformula \wedge \defordinal{\aordinal}$
is satisfiable over the class of ordinals.
Thanks to Lemma~\ref{lemma-length} and Theorem~\ref{theorem-pspace},
we obtain the \pspace \ upper bound. 
\end{proof}

Now we consider the case of a countable ordinal $\alpha\geq
\omega^\omega$. Let $\aordinal'$ be the unique ordinal strictly
less than $\omega^{\omega}$ such that $\aordinal = \omega^{\omega}
\times \aordinalter + \aordinal'$ for some ordinal $\aordinalter$.
Note that for every $k$,  $\truncation{k} (\aordinal)=
\truncation{k} (\omega^k+\aordinal')<\omega^\omega$.
By  Lemma~\ref{lemma-pumping-formulae}(II), $\aformula$ has an $\aordinal$-model
iff $\aformula$ has a $\aordinal_{\length{\aformula}}$-model
with
$\aordinal_{\length{\aformula}} =
\truncation{\length{\aformula}+
\plusconstant}(\aordinal)=\truncation{\length{\aformula}+
\plusconstant}(\omega^{\length{\aformula}+
\plusconstant}+\aordinal')$. Hence, $\aformula$ has an $\aordinal$-model
iff $\aformula\wedge \defordinal{\aordinal_{\length{\aformula}}}$
is satisfiable (over
the class of countable ordinals). Since the size of
$\defordinal{\aordinal_{\length{\aformula}}}$
is polynomial in the
size of $\aformula$, we derive from Theorem \ref{theorem-pspace}
the following result.

\begin{cor} \label{theorem-sat-family-big}
For every countable $\aordinal \geq \omega^\omega$, the problem
$\SAT(\aordinal,\mainlogic)$
 is in  \pspace.
\end{cor}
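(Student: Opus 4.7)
The plan is to reduce $\SAT(\aordinal,\mainlogic)$ for $\aordinal \geq \omega^\omega$ to the general satisfiability problem for $\mainlogic$, which is already known to be in \pspace\ by Theorem~\ref{theorem-pspace}. The reduction proceeds via truncation: by Lemma~\ref{lemma-pumping-formulae}(II), whether $\aformula$ has an $\aordinal$-model depends only on the $(\length{\aformula}+\plusconstant)$-equivalence class of $\aordinal$, so one can replace the (possibly large) $\aordinal$ by its much smaller truncation and then define this truncation by a short formula using Lemma~\ref{lemma-length}.

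Concretely, given $\aformula$ with $n = \length{\aformula}$, set $k = n + \plusconstant$. Since $\aordinal \geq \omega^\omega > \omega^k$, decompose $\aordinal = \omega^k \cdot \aordinalter + \aordinal'$ with $\aordinal' < \omega^k$ and $\aordinalter \geq 1$, so that $\truncation{k}(\aordinal) = \omega^k + \aordinal'$ is a specific ordinal strictly below $\omega^\omega$. Applying Lemma~\ref{lemma-pumping-formulae}(II) to $\aordinal \approx_{n+\plusconstant} \truncation{k}(\aordinal)$ yields that $\aformula$ has an $\aordinal$-model iff it has a $\truncation{k}(\aordinal)$-model.

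Next, invoke Lemma~\ref{lemma-length} to obtain a formula $\defordinal{\truncation{k}(\aordinal)}$ whose initial satisfaction characterizes models of length exactly $\truncation{k}(\aordinal)$. Reading off the Cantor normal form $\truncation{k}(\aordinal) = \omega^k + \aordinal'$, the size of this formula is bounded by $\mathcal{O}(k) + c_{\aordinal}$, where the constant $c_{\aordinal}$ depends only on the fixed parameter $\aordinal$ (through the Cantor normal form of $\aordinal'$). Hence $\defordinal{\truncation{k}(\aordinal)}$ has size polynomial in $\length{\aformula}$.

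The reduction is then immediate: $\aformula$ has an $\aordinal$-model iff $\aformula \wedge \defordinal{\truncation{k}(\aordinal)}$ is satisfiable over the class of countable ordinals (the new conjunct pins the model length to $\truncation{k}(\aordinal)$, and by the pumping lemma this is equivalent to the original question, while countability is free by Theorem~\ref{theorem-preliminariesbis}(II)). Since the conjunction has polynomial size in $\length{\aformula}$, Theorem~\ref{theorem-pspace} gives the \pspace\ upper bound. The main subtlety worth verifying is that the dependence on the fixed $\aordinal$ enters only as an additive constant in the size of $\defordinal{\truncation{k}(\aordinal)}$, so the reduction is genuinely polynomial-time in $\length{\aformula}$ even when $\aordinal$ itself is extremely large.
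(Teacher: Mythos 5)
Your proof is correct and follows essentially the same route as the paper's: truncate $\aordinal$ at level $\length{\aformula}+\plusconstant$ via Lemma~\ref{lemma-pumping-formulae}(II), pin the model length with the formula $\defordinal{\truncation{\length{\aformula}+\plusconstant}(\aordinal)}$ from Lemma~\ref{lemma-length}, and conclude by Theorem~\ref{theorem-pspace}. The only (immaterial) difference is that the paper first splits $\aordinal$ at $\omega^\omega$ and then truncates, whereas you split directly at $\omega^{\length{\aformula}+\plusconstant}$; your remark that the tail below the truncation point contributes only an additive constant $c_{\aordinal}$ to the size of the defining formula is precisely the point the paper relies on when asserting that the reduction is polynomial.
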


 Corollaries~\ref{theorem-sat-family},~\ref{theorem-sat-family-big} and the arguments similar 
to the arguments in
 the proof of Corollary \ref{corollary-pspace} imply the result below.

\begin{thm} \label{theorem-finite-operators}
The satisfiability problem
for $\ltl(\anoperator_1, \ldots, \anoperator_k)$ restricted to
$\aordinal$-models is in \pspace, for every  finite set $\set{\anoperator_1, \ldots,
\anoperator_k}$ of first-order definable temporal operators and
for every countable ordinal $\aordinal$.
\end{thm}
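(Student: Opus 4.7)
The plan is to combine the two ingredients already at hand: the polynomial-size translation from $\ltl(\anoperator_1, \ldots, \anoperator_k)$ into $\mainlogic$ justified by Kamp's theorem, and the \pspace\ upper bounds for $\SAT(\aordinal,\mainlogic)$ given by Corollaries~\ref{theorem-sat-family} and~\ref{theorem-sat-family-big}. The temporal operators $\anoperator_1, \ldots, \anoperator_k$ are fixed constants of the logic, so their defining $\mainlogic$-formulae (encoded as DAGs) have fixed size. Consequently, as in the proof of Corollary~\ref{corollary-pspace}, any formula $\aformula$ of $\ltl(\anoperator_1, \ldots, \anoperator_k)$ can be translated in polynomial time into an equivalent formula $\aformulabis$ of $\mainlogic$ whose size is polynomial in $\length{\aformula}$.

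First I would make explicit the reduction: given an instance $\aformula$, compute in polynomial time a formula $\aformulabis \in \mainlogic$ with $\aformula$ and $\aformulabis$ equisatisfiable on every $\aordinal'$-model, for every ordinal $\aordinal'$. This uses exactly the same bookkeeping as in the proof of Corollary~\ref{corollary-pspace}, exploiting the DAG encoding so that repeated occurrences of $\anoperator_i$ do not blow up the size. Thus $\aformula$ has an $\aordinal$-model iff $\aformulabis$ has an $\aordinal$-model.

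Next I would simply invoke the right satisfiability result. If $\aordinal < \omega^{\omega}$, Corollary~\ref{theorem-sat-family} yields that deciding whether $\aformulabis$ has an $\aordinal$-model is in \pspace\ in $\length{\aformulabis}$, hence in \pspace\ in $\length{\aformula}$. If $\aordinal \geq \omega^{\omega}$ (and countable), Corollary~\ref{theorem-sat-family-big} applies verbatim, via the truncation $\truncation{\length{\aformulabis}+\plusconstant}(\aordinal)$ and the formula $\defordinal{\aordinal_{\length{\aformulabis}}}$ of polynomial size. In both cases the composition of a polynomial-time translation with a \pspace\ decision procedure remains in \pspace.

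There is essentially no obstacle here: the whole content is that the translation preserves size polynomially and preserves $\aordinal$-satisfiability exactly. The only point that deserves a line of justification is that the $\mainlogic$-definitions of the operators $\anoperator_i$ are indeed \emph{size-independent} of the input $\aformula$, so the reduction is genuinely polynomial; this is ensured by the convention that $\anoperator_1, \ldots, \anoperator_k$ are fixed parameters of the logic and not part of the input, exactly as in the argument following Corollary~\ref{corollary-pspace}.
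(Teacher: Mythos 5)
Your proposal is correct and matches the paper's own (very brief) justification: the paper likewise obtains Theorem~\ref{theorem-finite-operators} by combining the polynomial-time, size-polynomial translation into $\mainlogic$ from the proof of Corollary~\ref{corollary-pspace} (the operators being fixed constants of the logic) with the \pspace\ bounds of Corollaries~\ref{theorem-sat-family} and~\ref{theorem-sat-family-big}, split on whether $\aordinal < \omega^{\omega}$ or $\aordinal \geq \omega^{\omega}$.
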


Observe that $\aordinal$ finite implies
$\SAT(\aordinal,\ltl(\anoperator_1, \ldots, \anoperator_k))$ is
\np-complete, otherwise
\pspace-hardness for $\SAT(\aordinal,\mainlogic)$  follows from
\pspace-completeness of  $\SAT(\om,\mainlogic)$.

\subsection{Uniform satisfiability}



B\"{u}chi (see, e.g., \cite{Buchi&Siefkes73}) has shown that there
is a \emph{finite} amount of data concerning any countable ordinal
that determines its monadic theory.
\begin{defi}[Code of an ordinal]\label{dfn:code of an ordinal} Let $\alpha$ be a countable ordinal and
let $m $ be in $[1,\omega]$.
\begin{enumerate}[(1)]
\item Write $\alpha=\om^m\alpha' + \zeta$ with $\zeta<\om^m$ (this can be done in a unique way),
and let
$$p_m(\alpha):= \left\{ \begin{array}{lll}
-2 & \textrm{if $\alpha'=0$ }\\
-1 & \textrm{if $0<\alpha'<\om_1$}\\
\end{array} \right..$$
\item If $\zeta\ne 0$, write $\zeta = \sum_{i\le n}\om^{n-i}\cdot a_{n-i}$
where $a_i\in \om$ for $i\le n$ and $a_n\ne 0$ (this can be done in
a unique way), and let $t_m(\alpha):=\tuple{a_n,\ldots,a_0}$. If
$\zeta=0$, let $t_m(\alpha)=-3$.
\item The $m$-\emph{code} of $\alpha$ is the pair $\tuple{ p_m(\alpha),t_m(\alpha)}$.
\end{enumerate}
\end{defi}

The following is implicit in \cite{Buchi&Siefkes73}.

\begin{thm}[Code Theorem]\label{thm:uniform decidability of MTh below om_1}
There is an algorithm that, given a monadic second-order sentence
$\aformula$ and the $\om$-\emph{code} of a countable ordinal  $\aordinal$,
determines whether $\pair{\aordinal}{<}\models \aformula$.
\end{thm}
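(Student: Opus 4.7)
The plan is to combine B\"uchi and Siefkes' translation from monadic second-order sentences to ordinal automata with the pumping results of Section~\ref{section-mainproperty}. First I would translate the MSO sentence $\aformula$ into an equivalent standard ordinal automaton $\aautomaton_{\aformula}$ over a singleton alphabet whose accepting lengths are exactly those countable ordinals satisfying $\aformula$; this translation is the content of~\cite{Buchi&Siefkes73} and is effective in $\aformula$. By Lemma~\ref{lemma-correspondence}(II), $\aautomaton_{\aformula}$ can be converted into an equivalent simple ordinal automaton $\aautomatonbis$, and I set $m = \card{\abasis} + 2$ where $\abasis$ is the basis of $\aautomatonbis$.

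The pivotal step is Lemma~\ref{lemma-pumping-formulae}(I): $\aautomatonbis$ has an accepting run of length $\aordinal$ iff it has one of length $\truncation{m}(\aordinal)$, so whether $\pair{\aordinal}{<} \models \aformula$ depends only on the $\approx_m$-equivalence class of $\aordinal$. Since $m$ is a finite natural number, the given $\om$-code of $\aordinal$ determines the $m$-code, and hence $\truncation{m}(\aordinal)$, by a direct syntactic manipulation: when $p_\om(\aordinal) = -2$, keep from $t_\om(\aordinal)$ the coefficients attached to $\om^k$ for $k < m$ and append a single summand $\om^m$ iff some coefficient attached to $\om^k$ with $m \leq k < \om$ is nonzero; when $p_\om(\aordinal) = -1$, append $\om^m$ unconditionally. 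The resulting $\aordinalbis := \truncation{m}(\aordinal) < \om^{m}\cdot 2$ has an explicit finite Cantor normal form.

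It remains to decide whether $\aautomatonbis$ accepts a run of length exactly $\aordinalbis$. A convenient route is to adapt the nonemptiness procedure of Section~\ref{section-nonemptiness}: instead of merely computing the relations $R_i$ on triples $\triple{\alocation}{A}{\alocation'}$, I would refine each such triple by a length parameter expressed as a Cantor normal form with exponents bounded by $m$. Since $m$ is fixed by $\aformula$ and $\aordinalbis < \om^{m}\cdot 2$, this refinement keeps the computation finite, and the usual dynamic programming that assembles short runs from shorter ones extends smoothly to the length-aware setting. An alternative route reduces the problem to satisfiability of the MSO sentence $\aformula \wedge \psi_{\aordinalbis}$, where $\psi_{\aordinalbis}$ is an MSO sentence that forces the length to be exactly $\aordinalbis$ (an extension of Lemma~\ref{lemma-length} to ordinals below $\om^{m}\cdot 2$), and then invokes decidability of MSO satisfiability over countable ordinals~\cite{Buchi&Siefkes73}.

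The main obstacle is precisely this final length-aware acceptance check: the algorithms in Section~\ref{section-nonemptiness} were designed for plain nonemptiness, so one must verify that imposing the exact target length $\aordinalbis$ does not destroy the termination argument. The bound $\aordinalbis < \om^{m}\cdot 2$ together with the finite cardinality of the relations $R_i$ guarantees that only finitely many intermediate triples need to be considered, so the procedure terminates and the Code Theorem follows.
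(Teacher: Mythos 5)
The paper does not actually prove this theorem: it is introduced with the single remark that it ``is implicit in \cite{Buchi&Siefkes73}'', so there is no official proof to compare against. Your reconstruction is sound and is, in essence, the MSO-level analogue of how the paper proves its own Theorem~\ref{thm:uniform} for $\mainlogic$ (via the arguments of Corollary~\ref{theorem-sat-family-big}): translate into an automaton, observe via Lemma~\ref{lemma-pumping-formulae}(I) that the existence of an accepting run of length $\aordinal$ depends only on $\truncation{m}(\aordinal)$ with $m=\card{\abasis}+2$, extract that truncation from the code (your case analysis on $p_\omega(\aordinal)$ is correct), pin the truncated ordinal down with a length-defining formula, and invoke a known decision procedure. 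This buys an explicit proof where the paper defers to the literature, at the cost of routing an MSO question through the paper's automata machinery.

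Two remarks on the details. First, your route (a) --- refining the relations $R_i$ by exact lengths --- is the one genuinely under-justified step: the relations $R_n$ and $R_n'$ of Section~\ref{section-nonemptiness} characterize \emph{ranges} of lengths (Lemmas~\ref{lemma-nonemptiness-one} and~\ref{lemma-nonemptiness-two} speak of $\aordinal<\omega^n$, not of an exact target), so ``extends smoothly'' is an assertion rather than an argument. Fortunately route (b) is complete and makes (a) unnecessary: since $\truncation{m}(\aordinal)<\omega^{m}\cdot 2<\omega^{\omega}$, Lemma~\ref{lemma-length} already applies verbatim --- no ``extension to ordinals below $\omega^{m}\cdot 2$'' is needed --- and $\defordinal{\truncation{m}(\aordinal)}$ (having no propositional variables) translates into a first-order sentence over the pure order signature, so that $\pair{\aordinal}{<}\models\aformula$ iff $\aformula\wedge\psi_{\truncation{m}(\aordinal)}$ is satisfiable over countable ordinals, which is decidable by \cite{Buchi&Siefkes73}. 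I would therefore drop route (a) or expand it substantially, and let route (b) carry the proof.
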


Lemma~\ref{lemma-pumping-formulae} can be rephrased as ``the
($\length{\aformula}+2$)-code of an ordinal  $\alpha$ determines
whether  $\aformula$ has a model of length $\alpha$".

Let $C= \tuple{b,a_n, \dots a_0}$  be an $m$-code. Its size is
defined as $n+a_0+a_1+\dots +a_n$. It is clear that for $m_1<m_2$
the $m_2$-code of an ordinal determines its  $m_1$-code and there is
a linear-time algorithm, that given $m_2$-code of an ordinal and
$m_1<m_2$ computes the $m_1$-code of the ordinal.

The arguments used in the proof of Corollary~\ref{theorem-sat-family-big} show the following theorem.

\begin{thm}[Uniform Satisfiability]\label{thm:uniform} \ 
\begin{enumerate}[\em(I):]
\item There is a polynomial-space
     algorithm that, given an $\mainlogic$ formula
 $\aformula$ and the $\om$-\emph{code} of a countable
 ordinal  $\aordinal$,
determines whether  $\aformula$ has an $\alpha$-model.
\item There is a polynomial-space algorithm that, given an $\mainlogic$ formula
 $\aformula$ and the    ($\length{\aformula}+2$)-code of a
countable ordinal  $\alpha$,
determines whether $\aformula$ has an $\alpha$-model.
\end{enumerate}
\end{thm}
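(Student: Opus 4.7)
The plan is to reduce, via the truncation argument already used in the proof of Corollary~\ref{theorem-sat-family-big}, to the satisfiability algorithm of Theorem~\ref{theorem-pspace}. Lemma~\ref{lemma-pumping-formulae}(II) supplies the reduction to a small ordinal, and Lemma~\ref{lemma-length} turns that ordinal into a polynomial-size length formula that can be conjoined with the input.

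I would handle (II) first, since (I) will follow from it by a linear-time code projection. Given input $\aformula$ together with the $(\length{\aformula}+2)$-code $\langle p, a_n,\ldots,a_0\rangle$ of $\aordinal$, I first read off the Cantor normal form of $\truncation{\length{\aformula}+2}(\aordinal)$ directly from the code in linear time: writing $\aordinal=\omega^{\length{\aformula}+2}\aordinal'+\zeta$ with $\zeta<\omega^{\length{\aformula}+2}$, Definition~\ref{dfn:code of an ordinal} tells me that $p$ records whether $\aordinal'=0$ (the case $p=-2$) or $\aordinal'\geq 1$ (the case $p=-1$), while $\langle a_n,\ldots,a_0\rangle$ is the Cantor normal form of $\zeta$. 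Hence $\truncation{\length{\aformula}+2}(\aordinal)$ equals $\zeta$ if $p=-2$ and equals $\omega^{\length{\aformula}+2}+\zeta$ otherwise; either way, it is an ordinal strictly below $\omega^{\length{\aformula}+2}\cdot 2<\omega^{\omega}$ whose Cantor normal form I have in hand. I then construct the formula $\defordinal{\truncation{\length{\aformula}+2}(\aordinal)}$ from Lemma~\ref{lemma-length}: by the bound stated there, its size is linear in $\sum_i k_i\cdot a_{k_i}$, and since every exponent $k_i$ is at most $\length{\aformula}+2$ while each $a_{k_i}$ is charged unarily to the input code, the resulting formula has size polynomial in the total input. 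Finally, Lemma~\ref{lemma-pumping-formulae}(II) yields that $\aformula$ has an $\aordinal$-model iff $\aformula\wedge\defordinal{\truncation{\length{\aformula}+2}(\aordinal)}$ is satisfiable over the class of countable ordinals, and this satisfiability question is decided in polynomial space by Theorem~\ref{theorem-pspace}, whose input has size polynomial in the original instance. This settles (II).

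Part (I) reduces to (II): the $\omega$-code of $\aordinal$ determines its $(\length{\aformula}+2)$-code in linear time, via the $m_2\to m_1$ projection for codes remarked on just after Theorem~\ref{thm:uniform decidability of MTh below om_1}. Prepending that projection to the algorithm of (II) gives the desired polynomial-space procedure. The only verifications required, none of them hard, are that the truncation is genuinely recoverable from a code of the right dimension (immediate from Definition~\ref{dfn:code of an ordinal}, since the code already records the pair $(\aordinal',\zeta)$ up to the ``zero vs.\ positive'' distinction on $\aordinal'$ that truncation preserves) and that the size bound of Lemma~\ref{lemma-length} indeed stays polynomial in the input; the main conceptual work has already been done in Lemma~\ref{lemma-pumping-formulae}(II) and Theorem~\ref{theorem-pspace}.
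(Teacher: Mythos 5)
Your proposal is correct and follows exactly the route the paper intends: the paper proves this theorem by the single remark that ``the arguments used in the proof of Corollary~\ref{theorem-sat-family-big} show the following theorem,'' and those arguments are precisely your combination of reading $\truncation{\length{\aformula}+2}(\aordinal)$ off the code, conjoining $\defordinal{\truncation{\length{\aformula}+2}(\aordinal)}$ via Lemma~\ref{lemma-length} and Lemma~\ref{lemma-pumping-formulae}(II), and invoking Theorem~\ref{theorem-pspace}, with (I) reduced to (II) by the linear-time code projection noted after Theorem~\ref{thm:uniform decidability of MTh below om_1}. You have merely made explicit the details (including the unary accounting of code size) that the paper leaves implicit.
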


\section{Related Work}
\label{section-related-work}

\noindent In this section, we compare our results with those from the literature related to satisfiability.
It is worth noting that an axiomatization of  
$\mainlogic$ over ordinals can be found in~\cite{Venema93}.
Nevertheless, the concern in this above-mentioned paper is quite different from ours. 

\subsection{Comparison with Rohde's thesis}

In~\cite{Rohde97}, it is shown that an uniform satisfiability problem
for temporal logic
with until (and without since) can be solved in exponential time (flows of time are countable ordinals).
The inputs of this
problem are a formula in $\ltl(\until)$ and the representation of a countable ordinal.
The satisfiability problem is also shown in \exptime.
In order to obtain this upper bound, formulae are shown equivalent to alternating
automata and a reduction from alternating automata into a specific subclass of non-deterministic
automata is given. Finally, a procedure for testing nonemptiness is provided.
Here are the similarities between~\cite{Rohde97} and our results.
\begin{enumerate}[(1)]
\item We also follow an automata-based approach and
      the class of non-deterministic automata in~\cite{Rohde97} and ours have
a structured set of locations and limit transitions use elements that are true from some position.
\item Existence of $\aordinal$-paths in the automata depends on some truncation of
$\aordinal$.
\item The logical decision problems can be solved in exponential time.
\end{enumerate}
However, our work improves  some results from~\cite{Rohde97}.
\begin{enumerate}[(1)]
\item Our temporal logic includes the until and since operators (instead of until
      only) and it is therefore as expressive as first-order logic.
\item We establish a tight \pspace \ upper bound (instead of \exptime)
      thanks to the introduction of  simple ordinal automata.
\item Our proofs are  shorter and more transparent (instead of the lengthy developments
found in~\cite{Rohde97}).
\end{enumerate}

Consequently, the developments from~\cite{Rohde97} and ours follow the same approach
with different definitions for automata, different intermediate lemmas and distinct
final complexity bounds. On the other hand, the structure of the whole proof to obtain
the main complexity bounds is similar.

\subsection{LTL over other classes of linear orderings}
\def\leftl{\mathit{leftlim}}
\def\rightl{\mathit{rightlim}}

Even though the  results for linear-time temporal logics
from~\cite{Reynolds03,Reynolds??} involve distinct models, our
automata-based approach has similarities
with these works that uses a different proof
method, namely mosaics. Indeed, equivalence classes of the
relation $\sim$ between runs of length a successor ordinal roughly
correspond to mosaics from~\cite{Reynolds03}.
We recall the main results below.

\begin{thm}\label{th:reynolds} \ 
\iflong
 \begin{enumerate}[\em(I):]
\itemsep 0 cm
\item \cite{Reynolds??} The satisfiability
problem for the temporal logic with until and since
over the reals is \pspace-complete.
\item  \cite{Reynolds03} The satisfiability
problem for $\ltl(\until)$  over the class of all linear orders is
\pspace-complete.
\end{enumerate}
\else
(I) The satisfiability
problem for the temporal logic with until and since
over the reals is \pspace-complete.
(II) The satisfiability
problem for $\ltl(\until)$  over the class of all linear orders is
\pspace-complete.
\fi
\end{thm}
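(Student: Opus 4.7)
The plan is to adapt the automata/mosaic methodology developed in the present paper to the two new classes of orderings. For part (I), over the reals with $\mainlogic$, the first step is to design an analog of simple ordinal automata suited to a dense Dedekind-complete order: a structure $\triple{\abasis}{Q}{\delta_{\rightarrow},\delta_{\leftarrow},\delta_{\rm dense}}$ where $\abasis = \subf{\aformula}$, $Q \subseteq \powerset{\abasis}$ encodes maximally Boolean-consistent types, $\delta_{\rightarrow}$ and $\delta_{\leftarrow}$ play the role of right- and left-limit transitions (the reals have no immediate successors, so a pure $\delta_{next}$ component is not meaningful), and $\delta_{\rm dense}$ captures which families of types are allowed to occur cofinally from either side. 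Following Section~\ref{section-translation-to-automata}, I would then define the automaton $\aautomaton_{\aformula}$ by spelling out for each temporal operator the Hintikka-style conditions that $\aformula_1 \until \aformula_2 \in \alocation$ forces on the (left) right-neighborhood types, mirroring the (lim$_{\until}$1)--(lim$_{\until}$3) and (lim$_{\since}$) clauses but now in both directions because of density.

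The second step is a correctness lemma analogous to Lemma~\ref{lemma-correctness}: a real-valued model satisfies $\aformula$ iff there is a consistent labelling of $\Rea$ by locations of $\aautomaton_{\aformula}$ with matching left- and right-types at every point and with the prescribed density conditions on every open interval. This is essentially the mosaic method of~\cite{Reynolds03,Reynolds??} repackaged as an automaton, with one mosaic being a triple (left type, dense-type set in the open interval, right type). The third step is a short-model/small-system property playing the role of Lemma~\ref{lemma-mainproperty}: any satisfiable formula admits a labelling built from at most exponentially many mosaic pieces, so that existence of a globally consistent labelling can be tested by a PATH-style nondeterministic search over mosaics, yielding a \pspace{} procedure by Savitch's theorem just as in the proof of Theorem~\ref{theorem-nonemptiness-pspace}. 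The \pspace{} lower bound is inherited from $\omega$-LTL.

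For part (II), over arbitrary linear orders with $\ltl(\until)$, the absence of $\since$ makes the situation strictly simpler in one direction (we only need right-looking consistency), but arbitrary scattered or dense sub-orders can occur. The plan is to enrich the mosaic notion so that a mosaic now records, besides the endpoint types, whether the internal order is (i) a single successor step, (ii) a dense portion, or (iii) built by transfinite composition; the condition (lim$_{\until}$1)-(lim$_{\until}$3) of our paper generalises to all three cases, and any satisfiable formula admits a bounded-weight decomposition into such mosaics. A PATH-style recursion then yields \pspace.

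The hard part, and the reason Reynolds's proofs are long, is step three in each case: establishing that the mosaic system is complete, i.e.\ that every consistent mosaic family can actually be realised by a real (resp.\ linear-order) model. Over ordinals, well-foundedness drove the inductive weight argument in Lemma~\ref{lemma-mainproperty}; over the reals one loses that handle and must instead perform a careful back-and-forth/shuffling construction to glue mosaics into a dense model while preserving all $\until$/$\since$ promises. I expect that transporting our short-run argument to this setting would be possible but would require replacing the ordinal rank by a measure of nesting depth of ``dense refinements,'' and this is precisely where the delicate combinatorics of~\cite{Reynolds03,Reynolds??} resides.
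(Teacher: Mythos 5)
The first thing to note is that the paper does not prove Theorem~\ref{th:reynolds} at all: both parts carry citations to Reynolds's papers and are recalled in the related-work section purely for comparison. The authors explicitly state that Reynolds's proofs are ``much more involved'' than theirs and that it is difficult even to compare the techniques. So you are not reconstructing an argument the paper contains; you are proposing a new proof, and it must be judged on its own merits.

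Judged that way, your proposal is a research programme rather than a proof, and the gap is the one you name yourself in your final paragraph: every step that is specific to the reals or to arbitrary linear orders is left open. Concretely: (a) the short-run argument of Lemma~\ref{lemma-mainproperty} is a transfinite induction driven by well-foundedness --- the sets $\AL(\arun,\aordinalbis)$ sit in a strictly increasing chain of present subsets of the basis, bounded by $\card{\abasis}+1$, and the induction is on an ordinal rank. Over $\Rea$ or a general linear order a point is approached from both sides, limits of type $\omega^*$ must be handled symmetrically, and non-scattered suborders force shuffle-type decompositions that your trichotomy (i)--(iii) for part (II) does not cover; there is no ordinal on which to induct. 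You say the rank should be replaced by ``a measure of nesting depth of dense refinements,'' but you neither define it nor prove it is bounded, and that bound is what makes the PATH recursion terminate in polynomial depth. (b) The realizability direction --- that a locally consistent family of mosaics can be glued into an actual model whose underlying order is exactly $\Rea$ (uncountable, separable, Dedekind-complete), not merely some dense order --- is asserted, not proved; this back-and-forth construction is precisely the content of Reynolds's lengthy arguments. Without (a) and (b) there is no \pspace{} upper bound, so the proposal does not establish the theorem; it only articulates the hope that the paper's technique could be transported, which is the same hope the authors express while carefully not claiming it.
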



\noindent The proofs in~\cite{Reynolds03,Reynolds??}  are much more involved
than our proofs since the orders  are more
complex than the class of ordinals.
 Moreover, 
a recent work~\cite{Cristau09} has established that $\mainlogic$ over the class of linear orderings
has an elementary complexity by using transducers as done in~\cite{Michel84} for standard LTL. 
More precisely, satisfiability for 
$\mainlogic$ augmented with future and past Stavi operators is in 
\twoexpspace~\cite{Cristau09}. 
Nevertheless, complexity of $\mainlogic$ over the class of linear orderings has been recently 
solved: for any temporal logic with a finite set of modalities definable
in the existential fragment of second-order logic has a \pspace \ satisfiability
problem over the class of linear orderings~\cite{Rabinovich10,Rabinovich10b} (see also~\cite{Reynolds10}).
Moreover, observe  that $\mainlogic$ over the reals has been recently shown in \pspace~in~\cite{Reynolds10},
which allows us to obtain in a different way that $\mainlogic$ over the countable ordinals is in \pspace~(see the full arguments
in~\cite[Section 13]{Rabinovich10}).

\subsection{Quantitative temporal operators}

In this section, we show that the main results
from~\cite{Demri&Nowak07} are subsumed by the current paper.
We also solve an open problem from~\cite{Cachat06,Demri&Nowak07}.
For every fixed countable ordinal $\aordinal \leq \omega$,
let us introduce the logic LTL($\mathcal{O}_{\aordinal}$)
where the set of temporal operators $\mathcal{O}_{\aordinal}$ is defined as follows:
$
\set{\next^{\aordinalbis}: \aordinalbis < \omega^{\aordinal}}
\cup
\set{\until^{\aordinalbis}: \aordinalbis \leq \omega^{\aordinal}}.
$
The models of $\ltl(\mathcal{O}_{\aordinal})$ as those of $\mainlogic$
and the formulae of $\ltl(\mathcal{O}_{\aordinal})$ are precisely defined by:
\iflong
$$
\aformula ::= \avarprop \ \mid \
              \neg \aformula \ \mid \
              \aformula_1 \wedge \aformula_2 \ \mid \
              \next^{\aordinalbis} \aformula \ \mid \
              \aformula_1 \until^{\aordinalbis} \aformula_2.
$$
\else
$
\aformula ::= \avarprop \ \mid \
              \neg \aformula \ \mid \
              \aformula_1 \wedge \aformula_2 \ \mid \
              \next^{\aordinalbis} \aformula \ \mid \
              \aformula_1 \until^{\aordinalbis'} \aformula_2.
$
\fi
The satisfaction relation is inductively defined below where $\amodel$ is a model
for LTL($\mathcal{O}_{\aordinal}$) (we omit the obvious clauses):
\begin{enumerate}[$\bullet$]
\item $\amodel, \aordinalbis \models \next^{\aordinalbis'} \aformula$ iff
       $\aordinalbis +  \aordinalbis'$ is a position of $\amodel$ and
       $\amodel, \aordinalbis +  \aordinalbis' \models \aformula$,
\item $\amodel, \aordinalbis \models \aformula_1 \until^{\aordinalbis'} \aformula_2$ iff
      there is $\aordinalter \in (0,\aordinalbis')$ such that
      $\aordinalbis +  \aordinalter$ is a position of $\amodel$,
      we have $\amodel, \aordinalbis +  \aordinalter \models \aformula_2$ and
      for every $\aordinalter' \in (0,\aordinalter)$,
      we have $\amodel, \aordinalbis +  \aordinalter' \models \aformula_1$.
\end{enumerate}

\medskip\noindent The satisfiability problem for  LTL($\mathcal{O}_{\aordinal}$) consists in determining,
given a formula $\aformula$, whether there is a model $\amodel$ such that
$\amodel, 0 \models \aformula$.
The main results of~\cite{Cachat06,Demri&Nowak07} are the following:
\iflong
\begin{enumerate}[(1)]
\item For every $k \in \Nat \setminus \set{0}$, the satisfiability problem for
       LTL($\mathcal{O}_{k}$) restricted to models of length $\omega^k$
       is \pspace-complete when the natural numbers occurring in formulae are encoded
       in unary. With binary representation, it becomes \expspace-hard (mainly because
       a temporal operator $\next^{2^n}$ is helpful to specify concisely the cell contents of
       exponential-space Turing machines). 
\item  LTL($\mathcal{O}_{\omega}$) restricted to models of length $\omega^{\omega}$
       is decidable.
\end{enumerate}
\else
for every $k \geq 1$, the satisfiability problem for
       LTL($\mathcal{O}_{k}$) restricted to models of length $\omega^k$
       is
\pspace-complete and
LTL($\mathcal{O}_{\omega}$) restricted to models of length $\omega^{\omega}$
       is decidable.
\fi

\medskip\noindent Observe that LTL($\mathcal{O}_{k}$) cannot express
the temporal operator $\until$ over the class of countable ordinals
but it can do it on models of length $\omega^k$.  Hence, each logic
$\ltl(\mathcal{O}_{k})$ is less expressive than $\mainlogic$.

 Moreover, it is easy to  show that
for every $\aordinal \leq \omega$, the logic
$\ltl(\mathcal{O}_{\aordinal})$ is expressively equivalent (over the
class of countable ordinals)  to its sublogic over the following set
$\mathcal{O}_{\aordinal}'$ of temporal operators:
$$\mathcal{O}_{\aordinal}'=
\set{\next^{\omega^i}: \omega^i < \omega^{\aordinal}, i \in \Nat}
\cup \set{\until^{\omega^{\aordinalbis}}: \omega^\aordinalbis \leq
\omega^{\aordinal}, \aordinalbis \leq \omega}.$$ This set is finite
when $\aordinal$ is finite. Moreover, there is a linear-time (and
logarithmic space) meaning preserving translation from
$\ltl(\mathcal{O}_{\aordinal})$ into
$\ltl(\mathcal{O}_{\aordinal}')$.

\iflpar
\else 
Let us translate $\aformula$ in $\ltl(\mathcal{O}_{\omega}')$
into a formula $t(\aformula)$ in $\mainlogic$
homomorphically for the Boolean operators and such that the propositional
variables remain unchanged. Here are the remaining clauses of translation:
\begin{enumerate}[$\bullet$]
\item $t(\aformulabis_1 \until^{\omega^i} \aformulabis_2)
       = (\neg \aformulater_i
      \wedge t(\aformulabis_1)) \until (\neg \aformulater_i
       \wedge t(\aformulabis_2))$, $t(\aformulabis_1 \until^{\omega^{\omega}} \aformulabis_2)
       =  t(\aformulabis_1) \until t(\aformulabis_2)$,
\item $t(\next^{\omega^i} \aformulabis_1) =
       \neg \aformulater_i  \until
      (\aformulater_i \wedge t(\aformulabis_1))$.
\end{enumerate}
The formula $\aformulater_i$ is defined in the proof of Lemma~\ref{lemma-length}.
The following result is easy to show.

\begin{lem}\hfill
\begin{enumerate}[\em(I):]
\item Let $\aformula$ be  in $\ltl(\mathcal{O}_{\omega}' \setminus
\set{\until^{\omega^{\omega}}})$.
$t(\aformula)$ is equivalent to $\aformula$ over the class of countable ordinals, i.e.
for all  $\aordinal$-models $\amodel$  and $\aordinalbis < \aordinal$, we have
      $\amodel, \aordinalbis \models \aformula$ iff
      $\amodel, \aordinalbis \models t(\aformula)$.
\item   Let $\aformula$ be  in $\ltl(\mathcal{O}_{\omega}')$.
For all  $\omega^{\omega}$-models $\amodel$ and
 $\aordinalbis < \omega^{\omega}$,
we have      $\amodel, \aordinalbis \models \aformula$ iff
      $\amodel, \aordinalbis \models t(\aformula)$.
Moreover, $\length{t(\aformula)}$ is linear in $\length{\aformula}$.
\end{enumerate}
\end{lem}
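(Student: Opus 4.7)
The plan is to prove both parts by a common structural induction on $\aformula$, observing that parts (I) and (II) differ only in whether $\until^{\omega^{\omega}}$ may appear and in which models we range over. The base case (propositional variables) and the Boolean inductive steps are immediate since $t$ is homomorphic. I would handle the temporal cases uniformly by exploiting the correctness of $\aformulater_i$ (from the proof of Lemma~\ref{lemma-length}): $\amodel, \aordinalbis \models \aformulater_i$ iff $\aordinalbis$ is a multiple of $\omega^i$.

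The crucial arithmetic fact I would use repeatedly is the additive indecomposability of $\omega^i$: writing $\aordinalbis = \omega^i \cdot k + \aordinalter_0$ with $\aordinalter_0 < \omega^i$, for any $\aordinalter$ with $0 < \aordinalter < \omega^i$ we have $\aordinalter_0 + \aordinalter < \omega^i$, so $\aordinalbis + \aordinalter$ lies strictly inside the half-open block $[\omega^i k, \omega^i(k{+}1))$ and in particular is not a multiple of $\omega^i$; conversely every ordinal strictly between $\aordinalbis$ and $\omega^i(k{+}1)$ has this form. For the $\until^{\omega^i}$ case, a witness $\aordinalter \in (0, \omega^i)$ for $\aformulabis_1 \until^{\omega^i} \aformulabis_2$ at $\aordinalbis$ yields the same ordinal position $\aordinalbis + \aordinalter$ as a witness for the translated until, and the $\neg \aformulater_i$ conjuncts are automatically satisfied by the above observation (plus the induction hypothesis for $t(\aformulabis_1), t(\aformulabis_2)$). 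Conversely, a witness $\aordinalter'''$ for the translated formula satisfies $\aordinalter''' < \omega^i(k{+}1)$ because no ordinal in $(\aordinalbis, \aordinalter''']$ is a multiple of $\omega^i$, and then $\aordinalter = \aordinalter''' - \aordinalbis \in (0, \omega^i)$ gives a witness for $\until^{\omega^i}$. The $\next^{\omega^i}$ case reduces to the observation that $\aordinalbis + \omega^i = \omega^i(k{+}1)$ is exactly the first multiple of $\omega^i$ strictly greater than $\aordinalbis$; the translated formula forces the until witness to be this very ordinal, and $t(\aformulabis_1)$ there is equivalent to $\aformulabis_1$ by the induction hypothesis.

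The $\until^{\omega^{\omega}}$ case appears only in part (II), and here I would use that in an $\omega^{\omega}$-model every position $\aordinalter$ with $\aordinalbis < \aordinalter < \omega^{\omega}$ satisfies $\aordinalter - \aordinalbis < \omega^{\omega}$ (otherwise $\aordinalter \geq \aordinalbis + \omega^{\omega} = \omega^{\omega}$ by additive indecomposability of $\omega^{\omega}$), so the bounded and unbounded strict untils coincide. This is precisely why part (I) must exclude $\until^{\omega^{\omega}}$: in a model of length, say, $\omega^{\omega} \cdot 2$, a witness at a position $\geq \omega^{\omega}$ is allowed by standard $\until$ but forbidden by $\until^{\omega^{\omega}}$.

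For the linear size bound in (II), I would exploit the DAG encoding fixed in the paper: the subformula $\aformulater_i$ for each $i$ occurring in $\aformula$ is introduced once and shared, and since $\aformulater_{i+1}$ reuses $\aformulater_i$, the total number of subformulae contributed by all $\aformulater_i$ up to the maximum index appearing in $\aformula$ is $O(\length{\aformula})$; each of the three translation clauses adds only a constant number of new subformulae per operator occurrence. The main obstacle, and the only step I would dwell on, is the $\until^{\omega^i}$ case, where both directions require the additive-indecomposability argument together with the inductive characterization of $\aformulater_i$; every other case is either routine or follows the same template.
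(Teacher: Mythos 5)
Your proof is correct and takes essentially the same route as the paper, which dismisses the lemma as ``easy to show'' and only records the two key ingredients you develop in detail: the characterization of $\aformulater_i$ as marking multiples of $\omega^i$ (whose correctness rests on the additive indecomposability of $\omega^i$, exactly as you argue), and the coincidence of $\until$ with $\until^{\omega^{\omega}}$ over $\omega^{\omega}$-models for part (II). Your structural induction, witness-transfer argument for $\until^{\omega^i}$, and DAG-sharing argument for the linear size bound supply precisely the details the paper omits.
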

\fi 

\noindent (I) is essentially based on the properties of formulae $\aformulater_i$ and on the exclusion
of $\until^{\omega^{\omega}}$. (II) simply takes advantage of the fact that for the 
$\omega^{\omega}$-models, $\until$ and $\until^{\omega^{\omega}}$ are obviously equivalent. 

We obtain  alternative proofs for known results and we get new results.

\begin{thm} For every $k \in \Nat \setminus \set{0}$,
\begin{enumerate}[\em(I):]
\item the satisfiability problem for $\ltl(\mathcal{O}_{k})$
over $\omega^k$-models
is in \pspace \ with unary encoding of natural numbers,
\item the satisfiability problem for $\ltl(\mathcal{O}_{k}')$ restricted
           to $\omega^k$-models  is \pspace-complete,
\item for every countable infinite ordinal $\aordinal$,
             the satisfiability problem for  $\ltl(\mathcal{O}_{k}')$
             restricted to $\aordinal$-models is \pspace-complete.
\end{enumerate}
\end{thm}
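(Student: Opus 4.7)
The plan is to obtain the three \pspace \ upper bounds by a uniform reduction, in polynomial time, to a satisfiability problem for $\mainlogic$ over fixed-length models already known to be in \pspace, and to obtain the matching lower bounds in (II) and (III) by a direct reduction from standard LTL satisfiability over $\omega$-models.

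For (I), I would first rewrite an input formula $\aformula$ of $\ltl(\mathcal{O}_{k})$ into an equivalent formula $\aformula'$ of $\ltl(\mathcal{O}_{k}')$: each operator $\next^{\aordinalbis}$ with $\aordinalbis = \omega^{k-1} a_{k-1} + \cdots + a_0 < \omega^{k}$ is unfolded into the composition of $a_i$ copies of $\next^{\omega^i}$, and $\until^{\aordinalbis}$ is handled similarly using the formulae $\aformulater_i$. Under unary encoding, $\sum_i a_i$ is polynomial in the size of $\aformula$, so $\aformula'$ has polynomial size. The translation $t$ defined just above this theorem then produces $t(\aformula') \in \mainlogic$, again of polynomial size, which is equivalent to $\aformula'$ over the class of countable ordinals. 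Finally $\aformula$ has an $\omega^k$-model iff $t(\aformula') \wedge \defordinal{\omega^k}$ is satisfiable, so Lemma~\ref{lemma-length} together with Theorem~\ref{theorem-pspace} yields the \pspace \ bound. For (II) the argument is identical, except that the preliminary rewriting is unnecessary since the input is already in $\ltl(\mathcal{O}_{k}')$.

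For (III), since $\aordinal$ is fixed (not part of the input), I would apply $t$ directly to $\aformula \in \ltl(\mathcal{O}_{k}')$ to obtain a polynomial-size $t(\aformula) \in \mainlogic$ equivalent to $\aformula$ over all countable ordinals; the preceding lemma applies because $\until^{\omega^{\omega}} \notin \mathcal{O}_{k}'$ for finite $k$. Then $\aformula$ has an $\aordinal$-model iff $t(\aformula)$ does, and the latter is in \pspace \ by Corollary~\ref{theorem-sat-family-big} (or by Corollary~\ref{theorem-sat-family} when $\aordinal < \omega^{\omega}$). For the \pspace \ lower bounds in (II) and (III), I would reduce from satisfiability of standard $\ltl(\until)$ over $\omega$-models, \pspace-complete by~\cite{Sistla&Clarke85}: given an LTL formula $\aformula$, replace every $\until$ by $\until^{\omega}$ to obtain $\hat{\aformula} \in \ltl(\mathcal{O}_{k}')$ (legitimate since $k \geq 1$). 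For any $\aordinal$-model with $\aordinal \geq \omega$, an $\omega$-model of $\aformula$ extends to an $\aordinal$-model of $\hat{\aformula}$ by arbitrary assignment past $\omega$, and any $\aordinal$-model of $\hat{\aformula}$ restricts to an $\omega$-model of $\aformula$.

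The only genuinely delicate point is to verify that this nesting argument is correct: whenever a subformula $\aformulabis_1 \until^{\omega} \aformulabis_2$ is evaluated at some $\aordinalbis < \omega$, the witness position lies in $(\aordinalbis, \aordinalbis + \omega) = (\aordinalbis, \omega) \subseteq [0,\omega)$, so the inductive analysis stays within the $\omega$-prefix of the model and the evaluation of $\hat{\aformula}$ at position $0$ coincides on both models. Beyond this, every step is a mechanical combination of the translation lemma with the $\mainlogic$ satisfiability results established earlier, so I anticipate no serious obstacle.
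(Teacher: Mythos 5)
Your proposal is correct and follows essentially the same route as the paper: translate $\ltl(\mathcal{O}_k)$ into $\ltl(\mathcal{O}_k')$ (cheap under unary encoding), then into $\mainlogic$ via the explicit translation $t$ and the formulae $\aformulater_i$ and $\defordinal{\cdot}$, and conclude with the \pspace\ bounds for $\SAT(\aordinal,\mainlogic)$; the paper merely compresses this into citations of Theorem~\ref{theorem-finite-operators} and the log-space translation. Your explicit hardness reduction (replacing $\until$ by $\until^{\omega}$ and checking that evaluation stays in the $\omega$-prefix) is sound and in fact spells out what the paper leaves implicit.
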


\noindent (III) is an instance of Theorem~\ref{theorem-finite-operators}.
(II) is an instance of (III) (with unary encoding of natural numbers). 
(I) can be shown by observing that
there is a logarithmic space meaning preserving translation from
$\ltl(\mathcal{O}_{k})$
          to  $\ltl(\mathcal{O}_{k}')$. (I) is the  main result of~\cite{Demri&Nowak07}
with the unary encoding of natural numbers occurring in ordinal expressions.
Finally, the corollary below improves the non-elementary bounds
obtained in~\cite{Cachat06,Demri&Nowak07}  for
$\ltl(\mathcal{O}_{\omega})$  by reducing this temporal logic to
the monadic second order logics, and then to Buchi ordinal
automata.

\begin{cor} Satisfiability for  $\ltl(\mathcal{O}_{\omega})$ over the class
of $\omega^{\omega}$-models is \pspace-com\-ple\-te with unary encoding of natural numbers 
in formulae.
\end{cor}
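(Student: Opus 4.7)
The plan is to funnel $\aformula \in \ltl(\mathcal{O}_\omega)$ through the two translations already set up in this section until we land inside $\mainlogic$ over $\omega^\omega$-models, where Corollary~\ref{theorem-sat-family-big} provides a \pspace \ decision procedure; hardness is then inherited from a \pspace-hardness result already proved for a syntactic sub-language.

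First, I would realise concretely the expressive equivalence between $\ltl(\mathcal{O}_\omega)$ and $\ltl(\mathcal{O}_\omega')$ noted in the text. Given an index $\aordinalbis < \omega^\omega$ in Cantor form $\omega^{k_n} a_n + \cdots + \omega^{k_0} a_0$ with coefficients $a_i$ encoded in unary, each operator $\next^{\aordinalbis}\aformula$ is rewritten as the obvious nested composition of the $\next^{\omega^{k_i}}$ available in $\mathcal{O}_\omega'$, and $\until^{\aordinalbis}$ is handled analogously, with $\until^{\omega^\omega}$ used for the limit case. Because the numerical indices are unary, the total number of subformulae introduced is polynomial in the input; using the DAG representation adopted throughout the paper, shared $\next^{\omega^{k_i}}$ and $\until^{\omega^{k_i}}$ factors keep the blow-up linear. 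The output is equivalent to $\aformula$ on every countable ordinal model.

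Next, I would apply the translation $t$ to map the resulting formula into $\mainlogic$. By part~(II) of the lemma stated just after the definition of $t$, $t(\aformula)$ is equivalent to $\aformula$ on every $\omega^\omega$-model and has size linear in $\length{\aformula}$. Chaining the two steps gives, in polynomial time, a formula of $\mainlogic$ that is satisfiable in an $\omega^\omega$-model iff the original formula is. By Corollary~\ref{theorem-sat-family-big}, $\SAT(\omega^\omega, \mainlogic)$ lies in \pspace, which closes the upper bound.

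For the \pspace \ lower bound, I would invoke part~(III) of the immediately preceding theorem with $k=1$ and $\aordinal = \omega^\omega$: satisfiability of $\ltl(\mathcal{O}_1')$ over $\omega^\omega$-models is already \pspace-hard, and since $\mathcal{O}_1' \subseteq \mathcal{O}_\omega$ this hardness transfers verbatim. The main delicate point in the whole argument is the size control in the first reduction: one must verify that the Cantor unfolding of an indexed operator does not silently introduce an exponential number of fresh subformulae. Unary encoding of the coefficients, combined with the DAG convention on formulae, makes this polynomial; this is the only place where the \emph{unary encoding} hypothesis in the statement is actually used.
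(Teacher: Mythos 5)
Your proposal is correct and follows exactly the chain of reductions the paper intends for this corollary: pass from $\ltl(\mathcal{O}_{\omega})$ to $\ltl(\mathcal{O}_{\omega}')$ (polynomial under unary encoding), then apply the translation $t$ into $\mainlogic$ (linear, sound over $\omega^{\omega}$-models by part (II) of the preceding lemma), and conclude with the \pspace\ bound for $\SAT(\omega^{\omega},\mainlogic)$, with hardness inherited from a sub-language. No substantive differences from the paper's argument.
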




\section{Conclusion}
\label{section-conclusion}

\noindent In the paper, we have shown that the linear-time temporal
logic with until and since over the class of ordinals, namely
$\mainlogic$ has a \pspace-complete satisfiability problem.  Due to
Kamp's Theorem~\cite{Kamp68}, we know that $\mainlogic$ is a
fundamental temporal logic since it is as expressive as first-order
logic over the class of ordinals.  In order to establish this tight
complexity characterization, we have introduced the class of simple
ordinal automata.  This class of automata is more structured than
usual ordinal automata
and the sets of locations have some structural properties, typically it is a subset
of the powerset of some set (herein called the basis).
As a consequence, we are also able to improve some results from~\cite{Rohde97,Demri&Nowak07}.
For instance the uniform satisfiability problem is \pspace-complete and we obtain
alternative proofs for results in~\cite{Demri&Nowak07}. 
Recent results about the polynomial space upper bound for LTL over various
classes of linear orderings can be found in~\cite{Rabinovich10,Rabinovich10b} by using the so-called
composition technique and the automata-based technique used in
this paper.


{\bf Acknowledgments:} We would like to thank the anonymous referees for 
helpful suggestions and remarks.




\iflpar
\else
\iflong
\else
\newpage
\appendix
\input{appendix}
\fi
\fi

\end{document}